\documentclass[letterpaper, 11pt]{article}

\usepackage{amsmath, amssymb, amsfonts, amsthm}
\usepackage{bm}
\usepackage{mathtools}
\usepackage{mathrsfs}
\usepackage{geometry}
\usepackage{setspace}
\usepackage[utf8x]{inputenc}
\usepackage{algpseudocode,algorithmicx,algorithm}
\usepackage[T1]{fontenc}
\usepackage{lmodern}
\usepackage{graphicx}
\usepackage{enumerate}
\usepackage[natural, dvipsnames, pdftex]{xcolor}
\usepackage{tikz}
\usepackage{verbatim}
\usepackage{hyperref}
\usepackage{cancel}
\usepackage[all]{xy}
\usepackage{mdframed}
\usepackage[capitalize]{cleveref}
\usepackage{todonotes}
\usepackage{algpseudocode}
\usepackage{tikz-cd}

\usepackage[numbers]{natbib}

\usepackage{enumitem}

\usepackage{indentfirst}
\theoremstyle{plain}
\newtheorem{thm}{Theorem}[section]
\newtheorem{lemma}[thm]{Lemma}
\newtheorem{prop}[thm]{Proposition}
\newtheorem{cor}[thm]{Corollary}

\theoremstyle{definition}
\newtheorem{defn}[thm]{Definition}

\newtheorem{ex}[thm]{Example}

\crefname{thm}{theorem}{theorems}
\Crefname{thm}{Theorem}{Theorems}

\theoremstyle{remark}
\newtheorem{rmk}[thm]{Remark}

\newcommand{\N}{\mathbb{N}}

\newcommand{\F}{\mathbb{F}}

\newcommand{\cC}{\mathcal{C}}
\newcommand{\cO}{\mathcal{O}}

\newcommand{\id}{\mathrm{id}}

\newcommand{\wt}{\mathrm{wt}}
\newcommand{\rank}{\mathrm{rank}}

\newcommand{\V}{\mathcal{V}}

\def\g{{\mathfrak{g}}}

\begin{document}

\title{Thresholds for Local Coordinate-Wise Linear Properties of Random One-Point AG Codes
\thanks{
C. Yuan is with School of Computer Science, Shanghai Jiao Tong University. (Email: \href{mailto:chen_yuan@sjtu.edu.cn}{chen\_yuan@sjtu.edu.cn}) R. Zhu is with School of Computer Science, Shanghai Jiao Tong University. (Email: \href{mailto:sjtuzrq7777@sjtu.edu.cn}{sjtuzrq7777@sjtu.edu.cn})
}}

\author{Chen Yuan, Ruiqi Zhu}
\date{}
\maketitle

\begin{abstract}
Local coordinate-wise linear (LCL) properties provide a unified framework for studying list decoding, list recovery and other local properties of linear codes. We prove that random one-point algebraic geometry (AG) codes have the same LCL threshold rates as random linear codes under explicit alphabet and sampling conditions. More precisely, for a family of local profiles with random linear code threshold $R_{\mathcal P}$, a random one-point AG code avoids all such profiles below $R_{\mathcal P}-\varepsilon$ and contains one above $R_{\mathcal P}+\varepsilon$, with explicit failure bounds on both sides. Our proof also removes an auxiliary lower bound on the number of available rational places that arises in a direct adaptation of the Reed--Solomon argument and requires no additional genus-dependent rate gap.

We also observe that the previously claimed above threshold proof for random Reed--Solomon codes does not establish the required probability lower bound. The containment relation used in that argument gives the probability comparison in the opposite direction from what is needed. We give a different above threshold argument, which in particular also yields an alternative proof in the genus zero case.

Combining our threshold theorem with recent results for random linear codes gives a collection of new list decoding, average weight list decoding and list recovery guarantees for random one-point AG codes. We further obtain new correlated agreement and proximity gap results for polynomial curves. These results extend the LCL threshold framework from random linear and Reed--Solomon codes to random one-point AG codes.
\end{abstract}


\section{Introduction}
Error-correcting codes are fundamental to reliable communication and storage in the presence of noise. Among the central notions in coding theory are list decoding and list recovery, which relax unique decoding by allowing the decoder to output a small set of candidate codewords. These notions play important roles not only in coding theory~\cite{guruswami1998improved,guruswami2021bounds}, but also in complexity theory, pseudorandomness~\cite{ta2001extractor,guruswami2009unbalanced,indyk2010efficiently} and proof systems~\cite{ben2018fast,ben2023proximity,ames2017ligero}.

List decoding, first introduced in~\cite{elias1957list}, provides a way to recover codewords even when the number of errors exceeds half of the minimum distance. A code $\cC\subseteq\F_q^n$ is $(\rho,L)$-list-decodable if, for every $\mathbf y\in\F_q^n$,
\[
|\{\mathbf x\in\cC:\wt(\mathbf x-\mathbf y)\le\rho n\}|\le L.
\]
List decoding has led to a rich collection of combinatorial and algorithmic results~\cite{guruswami1998improved,guruswami2008explicit,guruswami2011linear,parvaresh2005correcting}. It also provides tools for low degree testing and proximity arguments in proof systems~\cite{ben2023proximity,gao2025list}. List recovery allows a list of possible symbols at each coordinate instead of a single received symbol. For $\ell\in\mathbb N$, the code $\cC$ is $(\rho,\ell,L)$-list recoverable if, for every $Z_1,\ldots,Z_n\subseteq\F_q$ with $|Z_i|\le\ell$,
\[
\left|\left\{\mathbf x\in\cC:
|\{i\in[n]:x_i\notin Z_i\}|\le\rho n\right\}\right|\le L.
\]
In particular, $(\rho,L)$-list decodability is equivalent to $(\rho,1,L)$-list recoverability. List recovery has been used to obtain efficient list decoding algorithms~\cite{guruswami1998improved,guruswami2008explicit,guruswami2013linear,kopparty2015list}, and has applications in complexity theory and algorithms~\cite{TaShmaZuckerman2004,guruswami2009unbalanced,GilbertEtAl2013,indyk2010efficiently}.

Random linear codes provide a benchmark for many such decoding tasks. A large body of work establishes strong list-decoding and list-recovery guarantees for these codes, with recent improvements in the leading list size term, average radius guarantees and alphabet size requirements~\cite{guruswami2021bounds,yuan2026list,guruswami2026average,doron2025list,li2025near,goyal2026counting}. A natural goal is therefore to understand whether structured algebraic code families have the same LCL threshold rates as random linear codes.

Among algebraic code families, one-point algebraic-geometry (AG) codes retain an evaluation structure while allowing lengths that grow over a fixed alphabet~\cite{tsfasman1982modular,BrakensiekDharGopiZhang2024AG}. This makes them attractive when the alphabet size must be independent of the block length. Randomly punctured AG codes are already known to achieve list-decoding capacity over constant-sized fields~\cite{BrakensiekDharGopiZhang2024AG}. Our question is whether the LCL thresholds of random linear codes continue to hold for random one-point AG codes.

Local coordinate-wise linear (LCL) properties provide a framework for this question. Building on earlier work on local properties~\cite{mosheiff2020ldpc,guruswami2022punctured}, Levi et al.~\cite{levi2024random} formulate local obstructions as matrices whose rows satisfy prescribed linear constraints. This formulation encompasses failures of list decoding, list recovery, and their average-weight variants, as well as properties used to derive proximity gaps~\cite{goyal2025optimal,yuan2026proximity}. The framework assigns a threshold rate to each family of profiles and relates random Reed--Solomon codes to the random-linear-code model.

In this work, we prove that random one-point AG codes have the same LCL threshold rates as random linear codes under explicit alphabet and sampling conditions. Below the threshold, with high probability the code avoids all profiles in the given family, while above the threshold it contains at least one such profile. To prove this in the function field setting, we adapt the algebraic arguments used for Reed--Solomon codes to Riemann--Roch spaces. For the above threshold direction, we instead compare the dimensions of the full space of matrices satisfying the profile and the subspaces in which two columns are equal. This avoids a gap in the existing random Reed--Solomon proof, which we explain below.

\subsection{Background on Random Linear Codes and One-Point AG Codes}
A random linear code of dimension $k=Rn$ can be sampled as the kernel of a uniformly random full rank matrix in $\F_q^{(n-k)\times n}$. Random linear codes achieve the optimal rate and radius tradeoff for list decoding, and a central question is how small their output list size can be~\cite{guruswami2010list,cheraghchi2013restricted,
wootters2013list,li2018improved}. 

For fixed $q$ and $\rho\in(0,1-1/q)$, the rate is $1-H_q(\rho)-\varepsilon$, where $H_q$ is the $q$-ary entropy function. Yuan and Zhu~\cite{yuan2026list} and Silas~\cite{silas2026listsize} show that random linear codes have list size $\left\lceil\frac{H_q(\rho)}{\varepsilon}\right\rceil+O_{\rho,q}(1),$ matching the leading term in the lower bound of~\cite{guruswami2021bounds}. Guruswami et al.~\cite{guruswami2026average} prove the stronger average radius guarantee with list size $O_{\rho,q}(1/\varepsilon)$ over every prime power alphabet. Kumar and Mon~\cite{kumar2026list} prove the uniform bound $O(qH_q(\rho)/\varepsilon)$ on the list size of random linear codes of rate $1-H_q(\rho)-\varepsilon$, valid for all $\rho$, $q$ and $\varepsilon$. For large alphabets, Goyal and Guruswami~\cite{goyal2026counting} improve the alphabet requirements for approaching the generalized Singleton bound.

Several arguments also show that random linear codes achieve list recovery capacity. Starting from the bounds of Zyablov and Pinsker~\cite{zyablov1981list}, Rudra and Wootters~\cite{rudra2018average} improve the output list size, including in the average radius setting. More recently, Li and Shagrithaya~\cite{li2025near} obtain output list sizes independent of the alphabet size. Doron et al.~\cite{doron2025list} establish $O(1/\varepsilon)$ output list size for fixed field size, input list size, and error radius. Brakensiek et al.~\cite{brakensiek2025combinatorial} use discrete Brascamp-Lieb inequalities to obtain near-optimal list-recovery bounds for several linear and folded-linear code families. Goyal and Guruswami~\cite{goyal2026counting} improve the alphabet size needed for random linear codes to approach the generalized Singleton bound, for both average weight list decoding and list recovery. Section~\ref{LCL-property} records the specific results used in our applications.

Progress for algebraic codes has been equally substantial. Generic Reed--Solomon codes and randomly punctured Reed--Solomon codes are known to achieve list decoding capacity~\cite{brakensiek2023generic,GuoZhang2023PuncturedRS}. More recently, Alrabiah et al.~\cite{alrabiah2024randomly} showed that, in the randomly punctured setting, a field of size linear in the block length suffices when a constant rate slack is allowed. Chen and Zhang~\cite{chen2024folded} prove relaxed generalized Singleton bounds for explicit folded Reed--Solomon and multiplicity codes. Recent work of Brakensiek et al.~\cite{brakensiek2026algorithmic}, and of Jeronimo~\cite{jeronimo2026algorithmic}, gives capacity approaching algorithms for ordinary Reed--Solomon codes over prime fields on prescribed evaluation sets.

Let $F/\F_q$ be a function field, let $P_1,\ldots,P_n$ be distinct rational places, and let $G$ be a divisor whose support avoids these places. The associated AG code is
\[
C_L(D;G):=\{(f(P_1),\ldots,f(P_n)):f\in\mathcal L(G)\},
\qquad D=P_1+\cdots+P_n.
\]
We study $G=mP_\infty$, with the evaluation places sampled uniformly without replacement from the rational places other than $P_\infty$. Guruswami-Sudan type methods and subsequent work provide list decoding algorithms for AG codes~\cite{guruswami1998improved,ShokrollahiWasserman1999,CohnHeninger2011,GeilMatsumotoRuano2012,BeelenRosenkildeSolomatov2022}. Folded AG constructions and subcode methods yield further capacity approaching results~\cite{GuruswamiXing2013,GuruswamiXing2014,GuoRonZewi2021Efficient}.

For ordinary randomly punctured AG codes, Brakensiek et al.~\cite{BrakensiekDharGopiZhang2024AG} prove list decoding capacity with list size $O(1/\varepsilon)$ and field size $\exp(O(1/\varepsilon^2))$, using relaxed higher order MDS codes and a generalized GM MDS theorem. Our aim is not to improve this alphabet bound, but to obtain a threshold theorem that applies to general LCL properties of random one-point AG codes.

\subsection{Local Coordinate-Wise Linear Properties}

Local properties of codes were introduced in~\cite{mosheiff2020ldpc} to study list decoding and list recovery in a common framework. Mosheiff et al.~\cite{mosheiff2020ldpc} showed that random Gallager LDPC codes inherit suitable local properties of random linear codes, and Guruswami and Mosheiff~\cite{guruswami2022punctured} proved an analogous result for codes obtained by randomly puncturing low-bias linear codes. Levi et al.~\cite{levi2024random} develop the more general LCL formulation: a profile $\V=(V_1,\ldots,V_n)$ prescribes a linear subspace $V_i\subseteq\F_q^b$ for each row of an $n\times b$ matrix. Containment requires a matrix satisfying the profile whose columns are distinct codewords. In this paper, $\mathcal P$ denotes this containment property, so its complement is the desired guarantee in decoding applications.

Recent work extends this perspective beyond random Reed--Solomon codes. Jeronimo and Shagrithaya~\cite{jeronimo2025local} use an LCL formulation and the Alon-Edmonds-Luby construction to obtain explicit codes over enlarged constant alphabets. Brakensiek et al.~\cite{brakensiek2025subspace} show that nearly optimal subspace design codes inherit local properties of random linear codes. Goyal et al.~\cite{goyal2026constant} construct explicit constant alphabet subspace design codes, and Jeronimo and Shagrithaya~\cite{jeronimo2026subspace} show that the subspace design property can be expressed as an LCL property, and use their earlier construction to obtain constant alphabet subspace design codes. These works demonstrate the scope of local property methods.

Proximity gaps and correlated agreement provide another motivation for local property analysis. Following~\cite{ben2023proximity}, Gao et al.~\cite{gao2025list} derive proximity gap results from list decodability, while Goyal and Guruswami~\cite{goyal2025optimal} obtain stronger proximity gap bounds using subspace design and local profile methods. Recent results also clarify limitations for general Reed--Solomon codes and the relation between proximity gaps and list decoding~\cite{ben2025on,crites2025conjectures}. Yuan and Zhu~\cite{yuan2026proximity} develop a syndrome-space approach for random linear and random Reed--Solomon codes, including polynomial curves, using LCL profiles in the random Reed--Solomon case. Jeronimo~\cite{jeronimo2026algorithmic} obtains algorithmic list decoding and curve mutual correlated agreement for Reed--Solomon codes over prime fields. In Section~\ref{subsec:AG-applications}, we adapt the curve argument of~\cite{yuan2026proximity} using our AG profile bound.

\subsection{Our Contributions}

The starting point is the random linear code threshold $R_{\mathcal P}$ associated with a $b$-local profile family $\mathcal F$ in~\cite{levi2024random}. We prove that the same threshold governs random one-point AG codes on both sides, with explicit failure bounds.

\paragraph{Informal theorem (Theorem~\ref{thm:threshold-AG-codes}).}
Fix $b\ge2$ and $\varepsilon>0$. Let $\mathcal P$ be a $b$-LCL property with a nonempty local profile family $\mathcal F$ of size at most $\exp(O(n))$ and let $R_{\mathcal P}$ be its random linear code threshold. Let $\mathcal C=\cC_{\mathcal{L}}(D;mP_\infty)\subseteq\mathbb F_q^n$ be a random one-point AG code of rate $R=\frac{\ell(mP_\infty)}{n}$ where $0\le m<n\le N$ and $N$ is the number of available evaluation places.

If $q>\binom b2$ and $N-n+1$ is a sufficiently large multiple of $m$, then for all sufficiently large $n$, with probability $1-\exp(-\Omega(n))$, we have
\[
\begin{cases}
\mathcal C\notin\mathcal P,
& R\le R_{\mathcal P}-\varepsilon,\\[2mm]
\mathcal C\in\mathcal P,
& R\ge R_{\mathcal P}+\varepsilon.
\end{cases}
\]
See Theorem~\ref{thm:threshold-AG-codes} for the precise parameter conditions and failure bounds. In particular, our proof requires no separate assumption $N>mb$ and no additional rate gap depending on the genus.

\paragraph{A gap in the above threshold RS argument.}
The above-threshold proof of Theorem~3.10 in~\cite{levi2024random} contains a gap. The containment relation used there gives the probability comparison in the opposite direction from what is needed to prove the required lower bound. This is a gap in the proof rather than a counterexample to the theorem.

We avoid this issue by working directly with the original profile, rather than passing to a projected profile as in~\cite{levi2024random}. We compare the full space of matrices satisfying the original profile with the subspaces in which a fixed pair of columns are equal, and show that each such subspace has strictly smaller dimension with high probability. Hence their union cannot cover the full space, which yields a matrix with pairwise distinct codeword columns. See Section~\ref{subsec:above-threshold} for the full proof.

\paragraph{Technical overview.}
We retain the local profile and potential framework of~\cite{levi2024random}, but several parts of the Reed--Solomon argument do not directly extend from polynomials to Riemann--Roch spaces.

\emph{Below the threshold.}
We first represent a finite field local profile by linear constraints on tuples of functions in $\mathcal L(mP_\infty)^b.$ The main issue is to compare the finite field dimensions appearing in the RLC threshold with the function field dimensions arising in the AG argument. A direct analogue of the RS specialization step would introduce the extra condition $N>mb$. Instead, Lemma~\ref{lem:lcl-to-Pinf}(2) shows that every $F$-linear subspace spanned by vectors in $\mathcal L(mP_\infty)^b$ can be associated with an $\F_q$-linear subspace of the same dimension, while preserving the comparison between the local constraints. This avoids searching for an evaluation place where the dimension is preserved, and hence removes the additional assumption $N>mb$.

The next task is to control how much dimension can be lost after evaluation. Let $\Omega$ denote the available rational places, and let $S\subseteq \mathcal L(mP_\infty)^b$ be an $\F_q$-linear subspace and suppose that its $F$-span has dimension $D$. For a rational place $P\in\Omega$, the evaluated space $S(P)$ is an $\F_q$-linear space whose dimension may be smaller than $D$. For Reed--Solomon codes, such a dimension loss is controlled by the number of roots of a nonzero polynomial. For AG codes, we instead use the zeros and poles of a determinant. Lemma~\ref{lem:specialization} shows that $\sum_{P\in\Omega}\bigl(D-\dim_{\F_q}S(P)\bigr)\le Dm.$ Thus, only a controlled amount of dimension can be lost over all available rational places, which bounds the probability that the next randomly sampled place causes a dimension drop.

The recursive argument also needs to pass to smaller subspaces and quotient spaces. Over a function field, an arbitrary quotient map may have poles at the sampled evaluation places. In that case, even if a function tuple satisfies the local constraint at a sampled point $P_i$, the corresponding constraint after taking the quotient may no longer be well defined there. Lemma~\ref{lem:quotient-PsiV} constructs quotient maps whose coefficients are regular at every sampled place and whose pole orders at $P_\infty$ are controlled. Hence, if the original tuple satisfies the local constraint at $P_i$, its image in the quotient still satisfies the corresponding induced constraint at $P_i$. This allows the same evaluation estimates to be used recursively after passing to a quotient. Together with the dimension comparison above, these results yield Theorem~\ref{thm:C_contain_Pr}, which bounds the probability that a nonzero matrix with columns in $\mathcal C$ satisfies a profile whose potential is sufficiently negative on every nonzero subspace. Section~\ref{subsec:below-threshold} then proves the below threshold part of Theorem~\ref{thm:threshold-AG-codes}.

\emph{Above the threshold.}
Choose $\V$ and $U$ with $R_{\V,U}=R_{\mathcal P}$, and write $k=\dim\cC$. Let $K_U$ consist of the matrices whose columns lie in $\cC$ and whose $i$-th row lies in $V_i\cap U$. When $R\ge R_{\mathcal P}+\varepsilon$, counting variables and constraints, together with Lemma~\ref{lem:ag-constrained-dimension}, shows that, with high probability, $K_U$ has strictly larger dimension than each of its subspaces in which a fixed pair of columns is equal. The key idea is to reduce the desired dimension bound to the absence of nonzero solutions in smaller one-point AG subcodes. To prove the dimension bound for a fixed subspace $W\subseteq U$, we choose a chain $\{0\}=W_0\subsetneq W_1\subsetneq\cdots\subsetneq W_s=W$ and treat each quotient $W_j/W_{j-1}$ separately, writing $d_j=\dim W_j-\dim W_{j-1}$.

For each quotient, we choose a smaller subcode $\mathcal C_j$ of dimension $k_j$. The ambient space $\mathcal C^{d_j}$ has dimension $kd_j$, while $\mathcal C_j^{d_j}$ has dimension $k_jd_j$. Therefore, if the space of matrices in $\mathcal C^{d_j}$ satisfying the induced profile had dimension greater than $(k-k_j)d_j$, it would necessarily contain a nonzero matrix from $\mathcal C_j^{d_j}$. Theorem~\ref{thm:C_contain_Pr} shows that this happens only with small probability. Summing the resulting dimension bounds over the quotients gives Lemma~\ref{lem:ag-constrained-dimension}.

Since there are only $\binom b2<q$ possible pairs of equal columns, these smaller subspaces cannot cover the full solution space. Hence there exists a matrix satisfying the profile whose codeword columns are pairwise distinct. This proves the above-threshold part of Theorem~\ref{thm:threshold-AG-codes}; see Lemma~\ref{lem:ag-constrained-dimension} and Section~\ref{subsec:above-threshold} for the details. Unlike the argument in~\cite{levi2024random}, we use quotient spaces only to derive dimension bounds and never transfer a probability lower bound from a projected profile back to the original one, thereby avoiding the reversed probability comparison in the above threshold proof of Theorem~3.10.

\paragraph{Consequences.}

Combining the below-threshold theorem with recent random linear code results gives new list decoding, average weight list decoding and list recovery results for random one-point AG codes. Using the argument for polynomial curves from Yuan and Zhu~\cite{yuan2026proximity}, we also obtain curve correlated agreement and curve proximity gaps. These consequences are stated in Section~\ref{subsec:AG-applications}.

\section{Preliminaries}
\subsection{General notation}\label{subsec:prelims-notation}

Let $[n]$ be the set $\{1,\ldots,n\}$ and $\F_q$ be the finite field of order $q$, where $q$ is a prime power. For a vector space $V$ over $\F_q$, we denote by $\mathcal{L}(V)$ the set of all $\F_q$-linear subspaces of $V$. For $x\in[0,1]$, define the $q$-ary entropy function by $H_q(x)\coloneqq x\log_q(q-1)-x\log_q x-(1-x)\log_q(1-x),$ with the convention $0\log_q 0=0$.

Given $b\in\mathbb{N}$, an $n\times b$ matrix $A\in\F_q^{n\times b}$, and a code $\cC\subseteq \F_q^n$, we write $A\subseteq \cC$ to mean that every column of $A$ lies in $\cC$. We use $A_{i,*}$ and $A_{*,j}$ to denote the $i$-th row and $j$-th column of $A$, respectively, and define
\[
\mathrm{Dist}_q^{n,b}\coloneqq \{A\in\F_q^{n\times b}:\ A_{*,1},\dots,A_{*,b}\text{ are pairwise distinct}\}.
\]
For $\mathbf x,\mathbf y\in\F_q^n$, let $\wt(\mathbf x-\mathbf y)$ denote their Hamming distance. For a code $\mathcal C\subseteq\F_q^n$, define $d(\mathbf x,\mathcal C):=\min_{\mathbf c\in\mathcal C}\wt(\mathbf x-\mathbf c).$

\subsection{Function fields and random one-point AG codes}

Let $F/\F_q$ be a function field of genus $\g$ and let $N(F)$ denote its number of rational places. For a divisor $G$ on $F$, define the Riemann--Roch space
\[
\mathcal{L}(G)
\coloneqq \{0\}\cup\{f\in F^\times:(f)+G\ge 0\},
\]
where $(f)$ denotes the principal divisor of $f$.

For a place $P$ of $F$, let $\cO_P$ denote the local ring of functions regular at $P$, i.e., the elements of $F$ having no pole at $P$, together with $0$. A \emph{local parameter} at $P$ is an element of $\cO_P$ having a simple zero at $P$.

\begin{defn}\label{def:random-one-point-ag}
Given pairwise distinct rational places $P_1,\dots,P_n$ outside the support of $G$, set $D=P_1+\cdots+P_n$. The associated algebraic-geometry (AG) code is
\[
\cC_{\mathcal{L}}(D;G)
\coloneqq
\{(f(P_1),\dots,f(P_n)):f\in\mathcal{L}(G)\}
\subseteq\F_q^n.
\]
When $G=mP_\infty$ for a fixed rational place $P_\infty$, this is a \emph{one-point AG code}. If $P_1,\dots,P_n$ are sampled uniformly without replacement from the rational places of $F$ other than $P_\infty$, with $n\le N(F)-1$, we call the resulting code a \emph{random one-point AG code}.
\end{defn}

We will use families of function fields with many rational places relative to their genus. We therefore recall the Drinfeld--Vl\u{a}du\c{t} bound and the Garcia--Stichtenoth tower, which attains this bound when $q$ is a square.

\begin{lemma}[Drinfeld--Vl\u{a}du\c{t} bound~\cite{DV83}]
For every prime power $q$, the Ihara constant
\[
A(q)\coloneqq \limsup_{\g\to\infty}\ \max_{F/\F_q:\ \g(F)=\g}\ \frac{N(F)}{\g}
\]
satisfies $A(q)\le \sqrt{q}-1$.
Equivalently, for any family $\{F_i/\F_q\}_{i\ge 1}$ with $\g(F_i)\to\infty$,
\[
\limsup_{i\to\infty}\frac{N(F_i)}{\g(F_i)}\le \sqrt{q}-1
\qquad\text{and hence}\qquad
\liminf_{i\to\infty}\frac{\g(F_i)}{N(F_i)}\ge \frac{1}{\sqrt{q}-1}.
\]
\end{lemma}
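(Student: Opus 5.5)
The plan is to prove the Drinfeld--Vl\u{a}du\c{t} bound by Ihara's counting argument, which compares the numbers of rational places of $F$ over the extensions $\F_{q^r}$. Write $N_r$ for the number of degree-one places of the constant field extension $F\F_{q^r}$. Let $B_d$ be the number of places of $F$ of degree $d$. Then $N_r=\sum_{d\mid r} dB_d$; in particular $N_r\ge N_1=N(F)$ and $B_d\ge 0$. The Weil conjectures (the Hasse--Weil theorem for curves over finite fields) supply the explicit formula
\[
N_r=q^r+1-\sum_{i=1}^{2\g}\omega_i^r,\qquad |\omega_i|=\sqrt q,
\]
with the $\omega_i$ closed under complex conjugation. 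The idea is to form a nonnegative combination of these identities in which the term involving $\g$ can be compared with $N(F)$.

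First, fix real numbers $c_1,\dots,c_M\ge0$ and set
\[
f(t)=1+2\sum_{r=1}^M c_r\cos(rt),
\]
chosen so that $f(t)\ge0$ for all $t$; a Fej\'er kernel works, with $c_r=1-r/(M+1)$. Write $\omega_i=\sqrt q\,e^{i\theta_i}$. Then
\[
\sum_i f(\theta_i)=2\g+\sum_{r}c_r q^{-r/2}\sum_i(\omega_i^r+\bar\omega_i^r)\ge0 .
\]
Since $\sum_i(\omega_i^r+\bar\omega_i^r)=2\sum_i\omega_i^r=2(q^r+1-N_r)$, this gives
\[
\sum_r c_r q^{-r/2}N_r\le \g+\sum_r c_r\bigl(q^{r/2}+q^{-r/2}\bigr).
\]
Next, use $N_r\ge N_1$ to obtain
\[
N(F)\sum_r c_rq^{-r/2}\le \g+\sum_r c_r\bigl(q^{r/2}+q^{-r/2}\bigr).
\]
Divide by $\g$ and let $\g\to\infty$ with $M$ fixed. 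This yields $\limsup N/\g\le 1/\sum_r c_rq^{-r/2}$.

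Finally, let $M\to\infty$. The coefficients $c_r\to1$ for each fixed $r$, so $\sum_r c_rq^{-r/2}\to \sum_{r\ge1}q^{-r/2}=1/(\sqrt q-1)$, and the bound becomes $A(q)\le \sqrt q-1$. The equivalent liminf statement for $\g(F_i)/N(F_i)$ is just the reciprocal. The main obstacle is the choice of the test function: it must be nonnegative while keeping $\sum_r c_rq^{-r/2}$ close to $1/(\sqrt q-1)$. The Fej\'er kernel handles this, and the limit uses dominated convergence since $\sum_r q^{-r/2}<\infty$. Beyond that, we rely only on the Hasse--Weil explicit formula, taken as standard. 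In the paper itself it suffices to cite~\cite{DV83}.
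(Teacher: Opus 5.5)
Your proof is correct. The paper does not prove this lemma; it quotes it as background with a citation to \cite{DV83}. What you have written is essentially the standard explicit-formula proof of that result, and it has three ingredients:
\begin{enumerate}
\item Weil's formula for $N_r$.
\item A nonnegative Fej\'er-kernel combination. This is the same as expanding $\sum_i|1+\gamma_i+\cdots+\gamma_i^M|^2\ge 0$ with $\gamma_i=\omega_i/\sqrt q$.
\item Ihara's comparison $N_r\ge N_1$. This step is legitimate exactly because you required $c_r\ge 0$.
\end{enumerate}
Your route makes the lemma self-contained modulo the Hasse--Weil theorem, whereas the citation suffices for the paper's purposes.

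Two small points are worth stating explicitly. First, your inequality $N(F)\sum_{r\le M}c_rq^{-r/2}\le\mathfrak g+\sum_{r\le M}c_r(q^{r/2}+q^{-r/2})$ holds uniformly over all $F/\mathbb F_q$. That uniformity is what controls the maximum over all fields of genus $\mathfrak g$ in the definition of $A(q)$. It also directly gives the family formulation for any $\{F_i\}$ with $\mathfrak g(F_i)\to\infty$. Second, the final $\liminf$ statement is the reciprocal of the $\limsup$ bound only under the convention $\mathfrak g/N:=+\infty$ when $N(F_i)=0$.
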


\begin{lemma}[Garcia--Stichtenoth tower~\cite{GS95}]
Assume that $q$ is a perfect square. Then there exists an explicit tower of function fields $\{F_i/\F_q\}_{i\ge 1}$ with $\g_i\coloneqq \g(F_i)\to\infty$ and $N_i\coloneqq N(F_i)$ such that $\lim_{i\to\infty}\frac{N_i}{\g_i}=\sqrt{q}-1$.
\end{lemma}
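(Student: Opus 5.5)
This lemma is quoted from the literature rather than derived from earlier results in the paper, so I would give a reconstruction of the standard argument and not an internal derivation. The plan follows Garcia--Stichtenoth. Write $q=\ell^2$ and define the tower recursively: $F_1=\F_q(x_1)$, and $F_{i+1}=F_i(z_{i+1})$, where
\[
z_{i+1}^{\ell}+z_{i+1}=x_i^{\ell+1},\qquad x_i=z_i/x_{i-1}\ (i\ge2).
\]
Each step is an Artin--Schreier extension of degree $\ell$. Once the genus is known to be unbounded and the ratio $N_i/\g_i$ is bounded below by $\sqrt q-1$, the Drinfeld--Vl\u{a}du\c{t} bound stated above forces $\lim N_i/\g_i=\sqrt q-1$.

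\emph{Step 1: rational places.} Let $P$ be a zero of $x_1-\alpha$ with $\alpha\in\F_q^\times$. Then $\alpha^{\ell+1}\in\F_\ell^\times$. The equation $T^\ell+T=\beta$ with $\beta\in\F_\ell$ has exactly $\ell$ roots in $\F_q$, because the trace map $\F_q\to\F_\ell$ is surjective with fibres of size $\ell$. So $P$ splits completely in $F_2/F_1$. The new values satisfy $x_2=z_2/x_1\neq0$, and $x_2^{\ell+1}\in\F_\ell^\times$ again, so the argument iterates through the whole tower. This gives $N_i\ge(q-1)\ell^{\,i-1}=(q-1)[F_i:F_1]$.

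\emph{Step 2: genus.} I would compute the genus with the Hurwitz formula. Ramification in each step occurs only above places lying over $x_1=0$ or $x_1=\infty$. At each step one must determine the ramification type, tame versus totally wildly ramified, and the different exponents. The standard result is
\[
\g_i\le \ell^{\,i}+\ell^{\,i-1},
\]
up to lower-order terms. Combined with Step 1 this gives
\[
\frac{N_i}{\g_i}\ge \frac{(q-1)\ell^{\,i-1}}{\ell^{\,i}+\ell^{\,i-1}}\to \frac{\ell^2-1}{\ell+1}=\ell-1=\sqrt q-1,
\]
which matches the claimed limit.

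\emph{Main obstacle.} The hard step is the ramification analysis in Step 2. One has to track wildly ramified places through the tower and show that the different exponents stay controlled, so that the genus grows only like $\ell^{\,i}$ times a constant. Since this is classical, the paper would simply cite~\cite{GS95} for it; I would do the same rather than redo the computation.
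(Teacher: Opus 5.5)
Your proposal is correct and matches the paper, which gives no proof of this lemma and simply quotes \cite{GS95}. Your splitting argument for the places $x_1=\alpha$ with $\alpha\in\F_q^\times$ is right, and what you defer to \cite{GS95} is exactly what that reference supplies: that each step has degree exactly $\ell$ (via total ramification of the pole of $x_1$), and the bound $\g_i<\ell^i+\ell^{i-1}$. The one step you assume without comment, $\g_i\to\infty$, follows at once from $N_i\ge(q-1)\ell^{i-1}\to\infty$ and the Hasse--Weil bound $N_i\le q+1+2\g_i\sqrt q$.
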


\subsection{Average Weight List-Decodability and List-Recoverability}


For a predicate $\mathbf{1}[\cdot]$, we write $\mathbf{1}[E]\in\{0,1\}$ for the indicator of an event $E$.

\begin{defn}
Let $X\subseteq \F_q^n$ be a nonempty finite set and let $\rho\in[0,1]$.
We say that $X$ is \emph{$\rho$ average weight clustered} if there exists $\mathbf{z}\in\F_q^n$ such that $\frac{1}{|X|}\sum_{\mathbf{x}\in X}\wt(\mathbf{x}-\mathbf{z})\le \rho n.$

Let $\ell\in\N$. We say that $X$ is \emph{$(\rho,\ell)$ average weight recovery clustered} if there exist sets $Z_1,\ldots,Z_n\subseteq \F_q$ with $|Z_i|\le \ell$ for all $i\in[n]$ such that
\[
\frac{1}{|X|}\sum_{\mathbf{x}=(\mathbf{x}_1,\ldots,\mathbf{x}_n)\in X}\ \sum_{i=1}^n \mathbf{1}[\mathbf{x}_i\notin Z_i]\ \le\ \rho n .
\]
\end{defn}

\begin{defn}
Let $\cC\subseteq \F_q^n$ be a code, and let $L\in\N$. We say that $\cC$ is $(\rho,L)$ average weight list decodable if $\cC$ contains no $\rho$ average weight clustered subset of size $L+1$. We say that $\cC$ is $(\rho,\ell,L)$ average weight list-recoverable if $\cC$ contains no $(\rho,\ell)$ average weight recovery clustered subset of size $L+1$.
\end{defn}

\begin{rmk}
If $\cC$ is $(\rho,L)$ average weight list decodable, then it is $(\rho,L)$ list decodable. Similarly, $(\rho,\ell,L)$ average weight list recoverable implies $(\rho,\ell,L)$ list recoverable.
\end{rmk}

\subsection{Local Coordinate-Wise Linear Properties}\label{def:LCL}

A (code) property of $q$-ary length $n$ codes is a subset $\mathcal{P}$ of the set of $\F_q$-linear subspaces of $\F_q^n$. A code $\cC\subseteq \F_q^n$ satisfies $\mathcal{P}$ if $\cC\in \mathcal{P}$.

Fix $b\in\mathbb N$. A $b$-local profile is a sequence $\V=(V_1,\dots,V_n)\in \mathcal{L}(\F_q^b)^n$, where $\mathcal{L}(\F_q^b)$ denotes the family of $\F_q$-linear subspaces of $\F_q^b$. A matrix $A\in \F_q^{n\times b}$ satisfies $\V$ if its $i$-th row $A_{i,*}\in V_i$ for all $i\in[n]$. Define $\mathcal{M}_{\V}:=\{A\in \F_q^{n\times b}:\ A\text{ satisfies }\V\}$ and $\mathcal{M}_{\V}^{\mathrm{Dist}}:=\mathcal{M}_{\V}\cap \mathrm{Dist}_q^{n,b}.$ We say a code $\cC\subseteq \F_q^n$ contains $\V$ if there exists $A\in \mathcal{M}_{\V}^{\mathrm{Dist}}$ such that $A\subseteq \cC$.

A property $\mathcal P$ is called a $b$-LCL property if there exists a family $\mathcal{F}\subseteq \mathcal{L}(\F_q^b)^n$ such that
\[
\mathcal{P}=\{\, \cC\subseteq \F_q^n:\ \exists\,\V\in\mathcal{F} \text{ with } \cC \text{ containing } \V \,\}.
\]
In other words, the property $\mathcal{P}$ is $b$-LCL if it consists of those codes that satisfy at least one profile from a certain family of $b$-local profiles.

\begin{prop}[Proposition~2.2~\cite{levi2024random}]\label{prop:LCL-decoding-recovery}
The following holds:
\begin{enumerate}
    \item The complement of $(\rho,L)$-list-decodability is an
    $(L+1)$-LCL property with an associated $(L+1)$-local profile
    family of size at most $\binom{n}{\lfloor\rho n\rfloor}^{L+1}$.

    \item The complement of $(\rho,\ell,L)$-list-recoverability is an
    $(L+1)$-LCL property with an associated $(L+1)$-local profile
    family of size at most
    $\binom{n}{\lfloor\rho n\rfloor}^{L+1}\cdot \ell^{(L+1)n}$.
\end{enumerate}
\end{prop}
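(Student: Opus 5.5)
For each bad configuration I will write down one explicit $(L+1)$-local profile, and then check that a code contains some profile in the family exactly when it fails the decoding guarantee; counting the profiles gives the size bound. Set $b=L+1$. For list decoding, a bad configuration is a center $\mathbf y$ together with $L+1$ distinct codewords $\mathbf x^{(1)},\dots,\mathbf x^{(b)}$ with $\wt(\mathbf x^{(j)}-\mathbf y)\le\rho n$. Since the code is linear, the dependence on $\mathbf y$ should be removable. I will index each profile by a tuple $(E_1,\dots,E_b)$ of subsets of $[n]$ of size exactly $\lfloor\rho n\rfloor$; these are the error positions, padded if needed. The profile is
\[
V_i=\{\mathbf v\in\F_q^b:\ v_j=v_{j'}\text{ for all } j,j'\text{ with } i\notin E_j\cup E_{j'}\}.
\]
Equivalently, the coordinates $j$ with $i\notin E_j$ must all take a common value. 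This is a linear subspace. The number of such profiles is at most $\binom{n}{\lfloor\rho n\rfloor}^{L+1}$.

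Next I check that containment matches failure of list decodability. Suppose $\cC$ is not $(\rho,L)$-list-decodable. Take $\mathbf y$ and distinct codewords $\mathbf x^{(1)},\dots,\mathbf x^{(b)}$ within radius $\rho n$. Let $E_j\supseteq\{i:x^{(j)}_i\neq y_i\}$, padded to size $\lfloor\rho n\rfloor$. Let $A$ be the matrix with columns $\mathbf x^{(j)}$. In row $i$, every $j$ with $i\notin E_j$ has $x^{(j)}_i=y_i$, so $A_{i,*}\in V_i$. The columns are distinct and lie in $\cC$, so $\cC$ contains the profile.

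Conversely, suppose $A\in\mathcal M_{\V}^{\mathrm{Dist}}$ has all columns in $\cC$. Define $y_i$ to be the common value of $A_{i,j}$ over all $j$ with $i\notin E_j$, or $0$ if there is no such $j$. Then column $j$ disagrees with $\mathbf y$ only on $E_j$, so it is within distance $\lfloor\rho n\rfloor\le\rho n$. This gives $L+1$ distinct codewords in a Hamming ball of radius $\rho n$, so the code is not list decodable. This proves the first item.

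For list recovery the center must instead encode the lists $Z_i$, and I think this is the main point to get right. Each row constraint should say that the entries $A_{i,j}$ with $i\notin E_j$ take at most $\ell$ distinct values. That set is not a subspace, so I will specify the pattern explicitly by a labeling $\sigma_i:[b]\to[\ell]$ for each row $i$. The profile becomes
\[
V_i=\{\mathbf v:\ v_j=v_{j'}\text{ whenever } i\notin E_j\cup E_{j'}\text{ and }\sigma_i(j)=\sigma_i(j')\}.
\]
This is again linear. In the forward direction, label each $j$ by the index of $x^{(j)}_i$ in a fixed enumeration of $Z_i$, with an arbitrary label when $i\in E_j$. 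In the reverse direction, set $Z_i$ to be the set of common values of the $\le\ell$ label classes in row $i$. A naive count of the labelings would give $\ell^{bn}$. This exceeds the claimed $\ell^{(L+1)n}$ only if the count is done per coordinate rather than per row, so I must check that exactly $\ell^{b}$ labelings per row, taken over $n$ rows, are what is needed. That yields $\binom{n}{\lfloor\rho n\rfloor}^{L+1}\ell^{(L+1)n}$, matching the second item. The equivalence then follows exactly as in the list decoding case.
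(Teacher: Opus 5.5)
Your proof is correct and is essentially the standard argument behind this result; the paper itself only quotes Proposition~2.2 of~\cite{levi2024random} without proof. As there, you index profiles by error-set tuples $(E_1,\dots,E_{L+1})$, use row constraints that force equality among the non-error coordinates, and for list recovery refine these by a per-row labeling $\sigma_i:[b]\to[\ell]$. Two remarks on your write-up. Your hesitation about the count is unnecessary: with $b=L+1$ there are $\ell^{b}$ labelings per row and $n$ rows, so $\ell^{bn}=\ell^{(L+1)n}$ exactly. Also, linearity of the code is never used, since the center is eliminated by the equality constraints alone.
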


\section{LCL Properties of Random Linear Codes}\label{LCL-property}
We recall threshold results for LCL properties of random linear codes from~\cite{levi2024random}. These results will be used to study the corresponding thresholds for random one-point AG codes. In our applications to list decoding and list recovery, $\mathcal P$ denotes the failure of the corresponding decoding guarantee. Thus, an upper bound on $\Pr[\cC\in\mathcal P]$ gives an upper bound on the probability that the code fails to satisfy the desired decoding guarantee.

For the quoted RLC bounds, we use the sampling convention of the corresponding cited result. In particular, Proposition~3.1 and Theorem~3.2 follow~\cite{levi2024random}, where $\mathcal C=\ker H$ with $H\in\F_q^{(n-k)\times n}$ is chosen uniformly at random. In this model, $\dim\mathcal C\ge k$ and may be larger than $k$ if $H$ is not full row rank. Another common model chooses $\mathcal C$ uniformly from all $k$-dimensional subspaces of $\F_q^n$. This is exactly the first model conditioned on $H$ having full row rank. Since $\Pr[\rank H<n-k]<\frac{q^{-k}}{q-1},$ the two models give the same high-probability conclusions for fixed positive rate as $n\to\infty$.

Fix $b\in\N$ and a $b$-local profile $\V$. We group the matrices in $\mathcal M_{\V}$ according to their row spaces. For each $U\in\mathcal L(\F_q^b)$, define $\mathcal M_{\V,U}\coloneqq\{A\in\mathcal M_{\V}:\operatorname{rowspan}(A)=U\}.$ We say that a code $\cC\subseteq\F_q^n$ contains $(\V,U)$ if there exists $A\in\mathcal M_{\V,U}$ such that $A\subseteq\cC$. Moreover, we define
\[
\mathcal{L}_{\mathrm{Dist}}(\F_q^b) := \Bigl\{U\in\mathcal{L}(\F_q^b) :  \forall\,1\le i<j\le b,\ \exists\,x\in U\text{ with }\boldsymbol{x}_i\ne \boldsymbol{x}_j\Bigr\}.
\]
Then $\cC$ contains a profile $\V$ iff it contains $(\V,U)$ for some $U\in\mathcal{L}_{\mathrm{Dist}}(\F_q^b)$. Therefore,
\[
\max_{U\in\mathcal L_{\mathrm{Dist}}(\F_q^b)} \Pr[\cC\text{ contains }(\V,U)]\;\le\; \Pr[\cC\text{ contains }\V]\;\le\; \sum_{U\in\mathcal{L}_{\mathrm{Dist}}(\F_q^b)} \Pr[\cC\text{ contains }(\V,U)].
\]

\begin{prop}[Proposition~4.3~\cite{levi2024random}]\label{prop:M}
Let $n\in\mathbb{N}$ and $q$ be a prime power. Let $\cC\subseteq \F_q^n$ be a RLC of rate $R\in[0,1]$.
Fix $b\in\mathbb{N}$, a nonzero subspace $U\in\mathcal{L}(\F_q^b)\setminus\{\{0\}\}$, and a $b$-local profile $\V=(V_1,\dots,V_n)\in \mathcal{L}(\F_q^b)^n$.
Define the potential function $\Phi(\V,U,R):= \sum_{i=1}^n \dim(V_i\cap U) - (1-R)n\cdot \dim U,$ and $M(\V,U,R) := \min_{\substack{W\in\mathcal{L}(\F_q^b)\\ W\subsetneq U}} \bigl(\Phi(\V,U,R)-\Phi(\V,W,R)\bigr).$ Then:
\begin{enumerate}
\item If $M(\V,U,R)<0$, then $\Pr\!\bigl[\cC\text{ contains }(\V,U)\bigr]\le q^{\,M(\V,U,R)}$.
\item If $M(\V,U,R)>0$, then $\Pr\!\bigl[\cC\text{ contains }(\V,U)\bigr]\ge 1-q^{-M(\V,U,R)+b^2}$.
\end{enumerate}
In particular, the threshold rate for containing $(\V,U)$ is
\[
R_{\mathcal V,U} := \max_{\substack{W\in\mathcal{L}(\F_q^b)\\ W\subsetneq U}}\left\{1-\frac{\sum_{i=1}^n\bigl(\dim(V_i\cap U)-\dim(V_i\cap W)\bigr)}{n\cdot(\dim U-\dim W)}\right\},
\]
in the sense that $M(\mathcal V,U,R)>0$ if and only if $R>R_{\mathcal V,U}$.
\end{prop}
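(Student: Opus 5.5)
The plan is a first-moment (union) bound for part 1 and a counting/covering argument for part 2, with the threshold characterization read off from the definition of $M$. Throughout, $\cC=\ker H$ with $H\in\F_q^{(n-k)\times n}$ uniform and $k=Rn$, so $(1-R)n=n-k$. The basic fact used repeatedly is this: for a fixed matrix $B\in\F_q^{n\times b}$ of rank $r$, we have $\Pr[HB=0]=q^{-(n-k)r}$. The reason is that the columns of $B$ span an $r$-dimensional space, and a uniform $H$ kills a fixed $r$-dimensional space with probability exactly $q^{-(n-k)r}$. Also, $\rank B=\dim\operatorname{rowspan}(B)$.

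\emph{Part 1.} A naive union bound over $A\in\mathcal M_{\V,U}$ gives $q^{\Phi(\V,U,R)}$. Since $W=\{0\}$ is allowed in the minimum defining $M$, this is only $\ge q^{M}$, so I would first project.
\begin{itemize}
\item Fix $W\subsetneq U$ attaining the minimum in $M(\V,U,R)$, pick a complement $U=W\oplus W'$, and let $\pi:U\to W'$ be the projection along $W$.
\item If $A\in\mathcal M_{\V,U}$ with $A\subseteq\cC$, then $B=A\pi$ (applying $\pi$ to each row) has columns that are linear combinations of the columns of $A$, hence lie in $\cC$.
\item The rows of $B$ lie in $\pi(V_i\cap U)$, and $\operatorname{rowspan}(B)=W'$, so $\rank B=\dim U-\dim W$.
\item Since $\ker(\pi|_{V_i\cap U})=V_i\cap W$, we get $\dim\pi(V_i\cap U)=\dim(V_i\cap U)-\dim(V_i\cap W)$.
\end{itemize}
A union bound over all such $B$ then gives
\[
\Pr[\cC\text{ contains }(\V,U)]\le q^{\sum_i(\dim(V_i\cap U)-\dim(V_i\cap W))-(n-k)(\dim U-\dim W)}=q^{\Phi(\V,U,R)-\Phi(\V,W,R)}=q^{M(\V,U,R)}.
\]

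\emph{Part 2.} Set $K_{W}=\{A\subseteq\cC:\ A_{i,*}\in V_i\cap W\ \forall i\}$ for $W\subseteq U$; each $K_W$ is an $\F_q$-linear space.
\begin{itemize}
\item \textbf{Lower bound on $K_U$.} $K_U$ is the kernel of the linear map $A\mapsto HA$ on a space of dimension $\sum_i\dim(V_i\cap U)$, with image inside $\F_q^{(n-k)\times\dim U}$. Hence, deterministically, $|K_U|\ge q^{\Phi(\V,U,R)}$.
\item \textbf{Reduction to hyperplanes.} $\cC$ fails to contain $(\V,U)$ exactly when every $A\in K_U$ has row span strictly inside $U$. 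Equivalently, $K_U\subseteq\bigcup_{W}K_W$ with $W$ ranging over the hyperplanes of $U$, of which there are fewer than $q^{b}$.
\item \textbf{Expected size of $K_W$.} Grouping by exact row span,
\[
\mathbb E|K_W|=\sum_{W'\subseteq W}|\mathcal M_{\V,W'}|\,q^{-(n-k)\dim W'}\le\sum_{W'\subseteq W}q^{\Phi(\V,W',R)}\le \#\{W'\le\F_q^b\}\cdot q^{\Phi(\V,U,R)-M(\V,U,R)}.
\]
\item \textbf{Markov and union bound.} The number of subspaces of $\F_q^b$ is at most $q^{b^2/4+O(b)}$. Apply Markov's inequality to each $|K_W|$ at level $q^{\Phi(\V,U,R)}/(2q^b)$ and take a union over the fewer than $q^b$ hyperplanes $W$. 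With probability at least $1-q^{-M+b^2}$ (after absorbing constants into $b^2$), every hyperplane satisfies $|K_W|<|K_U|/(2q^b)$.
\item \textbf{Conclusion.} On this event the hyperplane subspaces cannot cover $K_U$. Any uncovered $A$ has row span exactly $U$, so $\cC$ contains $(\V,U)$.
\end{itemize}

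\emph{Threshold and main obstacle.} For the threshold, $\Phi(\V,U,R)-\Phi(\V,W,R)=\sum_i(\dim(V_i\cap U)-\dim(V_i\cap W))-(1-R)n(\dim U-\dim W)$. This is affine and strictly increasing in $R$, and it is positive iff $R$ exceeds the corresponding bracket in $R_{\V,U}$; taking the minimum over $W$ gives $M>0\iff R>R_{\V,U}$. The step I expect to be delicate is the constant bookkeeping in Part 2. The subspace count, the hyperplane count and the factor $2$ must all fit inside $q^{b^2}$, which needs a careful Gaussian-binomial estimate (roughly $\#\{W'\le\F_q^b\}\le 4q^{b^2/4}$) together with $q^b\cdot 2q^b\le q^{3b}$. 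One also has to handle the rank-deficient case $\dim\cC>k$, which only enlarges $K_U$ and so only helps.
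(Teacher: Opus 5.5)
The paper gives no proof of this proposition; it quotes it from Levi et al. Judged on its own, your argument has the right structure. Part~1 and the threshold equivalence are correct as written. Projecting along a minimizing $W$ onto a complement $W'$ inside $U$ gives matrices of rank exactly $\dim U-\dim W$ with rows in $\pi(V_i\cap U)$. The union bound over these then gives exactly $q^{M(\V,U,R)}$.

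The step that fails as written is the closing count in Part~2. Applying Markov to each hyperplane at level $|K_U|/(2q^b)$ and then taking a union over hyperplanes costs an extra factor of about $2q^{2b}$. Your final estimate would need $4q^{b^2/4}\cdot q^{3b}\le q^{b^2}$, i.e.\ $q^{3b(b-4)/4}\ge 4$, which is false for every $b\le 4$. Even with exact counts, the per-hyperplane route overshoots: for $b=1$ it gives $2q^{1-M}$ instead of $q^{1-M}$. Also, $4q^{b^2/4}$ is not a valid bound on the number of subspaces, since $\F_2^4$ has $67>64$ of them. So as sketched you get $1-q^{-M+O(b^2)}$, not the stated $1-q^{-M+b^2}$.

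The remedy is to drop the hyperplane decomposition. Let $Y$ be the number of $A\in K_U$ with $\operatorname{rowspan}(A)\subsetneq U$.
\begin{itemize}
\item If $\cC$ does not contain $(\V,U)$, then $Y=|K_U|\ge q^{\Phi(\V,U,R)}$.
\item Your grouping by row span gives $\mathbb E[Y]\le \#\{W'\subsetneq U\}\cdot q^{\Phi(\V,U,R)-M(\V,U,R)}$.
\item Every subspace of $\F_q^b$ is the row space of some $b\times b$ matrix, so $\#\{W'\subsetneq U\}<q^{b^2}$.
\end{itemize}
A single application of Markov's inequality then gives exactly $\Pr[\cC\text{ does not contain }(\V,U)]\le q^{b^2-M(\V,U,R)}$, with no Gaussian-binomial estimate or covering argument needed.

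Your two halves mirror the paper's own AG arguments. Part~1 uses the same projection identity $\dim(\varphi(V_i)\cap U')=\dim(V_i\cap U)-\dim(V_i\cap W)$ as the below-threshold proof of Theorem~\ref{thm:threshold-AG-codes}. Part~2 starts from the same deterministic bound $\dim K_U\ge\Phi(\V,U,R)$ as Section~\ref{subsec:above-threshold}.

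The difference is that for random linear codes you can bound the expected size of the entire degenerate part of $K_U$, so one Markov step handles all proper row spans at once. The AG argument has only dimension bounds available. It also needs only distinct columns rather than full row span $U$. That is why it restricts to the $\binom b2$ collision subspaces and requires $q>\binom b2$.
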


Since $\V$ is contained iff $(\V,U)$ is contained for some $U\in\mathcal{L}_{\mathrm{Dist}}(\F_q^b)$, we define the threshold rate for $\V$ by
\[
\begin{aligned}
R_{\mathcal{V}}
&=\min_{U\in \mathcal{L}_{\mathrm{Dist}}(\F_q^b)} R_{\mathcal{V},U} \notag\\
&=\min_{U\in \mathcal{L}_{\mathrm{Dist}}(\F_q^b)}\max_{\substack{W\in \mathcal{L}(\F_q^b)\\ W\subsetneq U}}\left\{1-\frac{\sum_{i=1}^n \bigl(\dim(V_i\cap U)-\dim(V_i\cap W)\bigr)}{n\cdot(\dim U-\dim W)}\right\}.
\end{aligned}
\]

Finally, let $\mathcal{P}$ be a $b$-LCL property specified by a nonempty profile family $\mathcal{F}\subseteq\mathcal{L}(\F_q^b)^n$, i.e., $\cC$ satisfies $\mathcal{P}$ iff it contains some $\V\in\mathcal{F}$.
We define the corresponding threshold rate
\begin{equation}\label{eq:R_P}
R_{\mathcal{P}}=\min_{\mathcal{V}\in \mathcal{F}} R_{\mathcal{V}}=\min_{\mathcal{V}\in \mathcal{F}}\min_{U\in \mathcal{L}_{\mathrm{Dist}}(\F_q^b)} R_{\mathcal{V},U}.
\end{equation}

We will use $R_{\mathcal P}$ in \eqref{eq:R_P} as the threshold parameter for $b$-LCL properties of random linear codes. We next recall the corresponding threshold theorem from \cite{levi2024random} and \cite{li2025near}, and then instantiate it for the $b$-LCL formulations of list-decodability and list-recoverability.

\begin{thm}[Theorem~4.4 \cite{levi2024random}]\label{thm:RLC-threshold}
Let $\mathcal{P}$ be a $b$-LCL property of codes in $\F_q^n$ and let $\mathcal{F} \subseteq \mathcal{L}(\F_q^b)^n$ be the corresponding nonempty family of profiles. Let $\cC \subseteq \F_q^n$ be an RLC of rate $R$.
Define $R_{\mathcal{P}}$ as in \eqref{eq:R_P}. The following now holds:
\begin{enumerate}
\item If $R \ge R_{\mathcal{P}}+\varepsilon$, then $\Pr\bigl[ \cC \text{ satisfies } \mathcal{P} \bigr] \ge 1 - q^{-\varepsilon n + b^2}.$

\item If $R \le R_{\mathcal{P}}-\varepsilon$, then $\Pr\bigl[ \cC \text{ satisfies } \mathcal{P} \bigr] \le |\mathcal{F}| \cdot q^{-\varepsilon n + b^2}.$

\item In particular, if $R \le R_{\mathcal{P}}-\varepsilon$ and $q \ge 2^{\frac{2\log_2 |\mathcal{F}|}{\varepsilon n}},$ then $\Pr\bigl[ \cC \text{ satisfies } \mathcal{P} \bigr] \le q^{-\varepsilon n/2 + b^2}.$
\end{enumerate}
\end{thm}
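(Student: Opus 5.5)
This statement is Theorem~4.4 of~\cite{levi2024random}, and I would derive it from Proposition~\ref{prop:M} together with the sandwich
\[
\max_{U\in\mathcal L_{\mathrm{Dist}}(\F_q^b)}\Pr[\cC\text{ contains }(\V,U)]\le\Pr[\cC\text{ contains }\V]\le\sum_{U\in\mathcal L_{\mathrm{Dist}}(\F_q^b)}\Pr[\cC\text{ contains }(\V,U)].
\]
The whole argument reduces to converting the rate gap $\pm\varepsilon$ into a bound of at least $\varepsilon n$ on $|M(\V,U,R)|$. Fix $\V$ and $U$ and write $\Phi(\V,U,R)-\Phi(\V,W,R)=\sum_i(\dim(V_i\cap U)-\dim(V_i\cap W))-(1-R)n(\dim U-\dim W)$. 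This quantity is linear in $R$ with slope $n(\dim U-\dim W)\ge n$. It vanishes exactly at the rate $1-\frac{\sum_i(\dim(V_i\cap U)-\dim(V_i\cap W))}{n(\dim U-\dim W)}$, and by definition this rate is at most $R_{\V,U}$.

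\textbf{Part 1.} Choose $\V\in\mathcal F$ and $U\in\mathcal L_{\mathrm{Dist}}$ attaining $R_{\mathcal P}=R_{\V,U}$. Note that $U\neq\{0\}$ since $b\ge2$; the case $b=1$ is degenerate and would be handled separately or excluded. If $R\ge R_{\V,U}+\varepsilon$, then for every $W\subsetneq U$ the linear function above is at least $\varepsilon n(\dim U-\dim W)\ge\varepsilon n$. Hence $M(\V,U,R)\ge\varepsilon n>0$, and Proposition~\ref{prop:M}(2) gives containment of $(\V,U)$ with probability at least $1-q^{-\varepsilon n+b^2}$. By the left inequality of the sandwich, $\cC$ then contains $\V$, and therefore satisfies $\mathcal P$, with at least this probability.

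\textbf{Part 2.} Fix any $\V\in\mathcal F$ and $U\in\mathcal L_{\mathrm{Dist}}$. Then $R\le R_{\mathcal P}-\varepsilon\le R_{\V,U}-\varepsilon$. Let $W^*$ attain the max defining $R_{\V,U}$. Evaluating the linear function at $R$ gives $M(\V,U,R)\le\Phi(\V,U,R)-\Phi(\V,W^*,R)\le-\varepsilon n$. Proposition~\ref{prop:M}(1) then bounds $\Pr[\cC\text{ contains }(\V,U)]\le q^{-\varepsilon n}$.

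Summing over $U$ (the number of subspaces of $\F_q^b$ is at most $q^{b^2}$; for $q\ge2$ the standard Gaussian-binomial count gives $\sum_d\binom bd_q\le q^{b^2}$) and then union-bounding over $\V\in\mathcal F$ yields $|\mathcal F|\,q^{-\varepsilon n+b^2}$.

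\textbf{Part 3.} The condition $q\ge2^{2\log_2|\mathcal F|/(\varepsilon n)}$ is equivalent to $|\mathcal F|\le q^{\varepsilon n/2}$. Substituting this into Part~2 gives the bound $q^{-\varepsilon n/2+b^2}$.

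The only real obstacle is bookkeeping. First, one must check that the subspace count fits inside $q^{b^2}$; I would bound the number of $d$-dimensional subspaces by $q^{d(b-d)}\cdot\prod(1-q^{-i})^{-1}$ and absorb the constant, or simply accept a slightly weaker constant. Second, one must make sure the case $W=\{0\}$ is included in the minimum, so that $M$ is finite and the slope argument applies uniformly.
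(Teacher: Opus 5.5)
Your proof is correct. The paper gives no proof of its own here: it imports the statement as Theorem~4.4 of \cite{levi2024random}. Your derivation from Proposition~\ref{prop:M} is the standard one and reproduces the stated constants exactly:
\begin{itemize}
\item $\Phi(\V,U,R)-\Phi(\V,W,R)$ is linear in $R$ with slope $n(\dim U-\dim W)\ge n$, which turns the rate gap into $|M(\V,U,R)|\ge\varepsilon n$;
\item the left side of the sandwich gives part~1, and the right side plus a union bound over $\mathcal F$ gives part~2.
\end{itemize}

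The subspace count you flag as an obstacle is not one. Every subspace of $\F_q^b$ is the row space of some matrix in $\F_q^{b\times b}$, so there are at most $q^{b^2}$ of them. No Gaussian-binomial estimate or weakened constant is needed.

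The case $b=1$ is also trivial. There the single column is vacuously ``pairwise distinct'', so the zero matrix lies in $\mathcal M_\V^{\mathrm{Dist}}$ and in $\cC$, and every code satisfies $\mathcal P$.
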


\begin{thm}[Theorem~3.4~\cite{levi2024random}]
Fix $\rho \in [0,1]$ and $L\in\N$. Consider the property $\mathcal{P}= \{\, \cC\subseteq \F_q^n \mid \cC \text{ is not } (\rho,L)\text{ list decodable}\,\}$ and the property $\mathcal{P}'= \{\, \cC\subseteq \F_q^n \mid \cC \text{ is not } (\rho,L)\text{ average weight list decodable}\,\}$.
Then,
\[
R_{\mathcal{P}} \ge R_{\mathcal{P}'} \ge \max\left\{\, 1-\rho\!\left(1+\frac{1}{L}\right),\, 0 \right\}.
\]
Furthermore, if $(1-\rho)(L+1)$ is an integer and $n$ is divisible by $\binom{L+1}{(1-\rho)(L+1)}$, then the above bound is tight, namely,
\[
R_{\mathcal{P}} = R_{\mathcal{P}'} = \max\!\left\{\, 1-\rho\!\left(1+\frac{1}{L}\right),\, 0 \right\}.
\]
\end{thm}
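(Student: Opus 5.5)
The proof has three parts: the inequality $R_{\mathcal P}\ge R_{\mathcal P'}$, the lower bound $R_{\mathcal P'}\ge \max\{1-\rho(1+1/L),0\}$, and tightness.

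\emph{The first inequality.} I would prove this at the level of profile families rather than probabilistically. For $(\rho,L)$ list decoding, the $(L+1)$-local profiles from Proposition~\ref{prop:LCL-decoding-recovery} are indexed by error sets $E_1,\dots,E_{L+1}\subseteq[n]$ with $|E_j|\le\rho n$. Set $T_i=\{j: i\in E_j\}$. The profile is
\[
V_i=\{\mathbf x\in\F_q^{L+1}: x_j=x_{j'}\ \text{for all } j,j'\notin T_i\}.
\]
A matrix satisfies $\V$ exactly when its columns all agree with one common center $\mathbf z$ outside the respective $E_j$. The average weight property $\mathcal P'$ is described by the same profiles, with the constraint $|E_j|\le\rho n$ for each $j$ replaced by $\sum_j|E_j|\le\rho n(L+1)$. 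Hence the family $\mathcal F$ for $\mathcal P$ is a subfamily of the family $\mathcal F'$ for $\mathcal P'$. Since $R_{\mathcal P}$ in \eqref{eq:R_P} is a minimum over the family, $R_{\mathcal P}\ge R_{\mathcal P'}$.

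\emph{The lower bound.} Fix $\V\in\mathcal F'$ and $U\in\mathcal L_{\mathrm{Dist}}(\F_q^{L+1})$. By the max in $R_{\V,U}$, it suffices to exhibit one $W\subsetneq U$ with
\[
\sum_i\bigl(\dim(V_i\cap U)-\dim(V_i\cap W)\bigr)\le \rho\Bigl(1+\tfrac1L\Bigr)n(\dim U-\dim W).
\]
The natural choice is $W=U\cap\langle\mathbf 1\rangle$, where $\mathbf 1$ is the all-ones vector. Then $W\neq U$, because $U$ separates every pair of coordinates, and $V_i\cap W=W$ for every $i$, since $\mathbf 1\in V_i$. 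So the left side is $\sum_i\dim\bigl((V_i\cap U)/W\bigr)$. Writing $d=\dim U-\dim W$, I would bound the $i$-th term by $\min(d,|T_i|)$, using the fact that $V_i/\langle\mathbf 1\rangle$ has dimension $|T_i|$ when $|T_i|\le L$ and dimension $L$ when $|T_i|=L+1$. I would then sum this against $\sum_i|T_i|\le\rho n(L+1)$.

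This is the step I expect to be the main obstacle. The pointwise inequality $\min(d,t)\le t d/L$ fails when $d<t<L$, so a single choice of $W$ may not suffice. My first fix would exploit the maximum over $W$: take a flag $W=U_0\subsetneq U_1\subsetneq\cdots\subsetneq U_d=U$ adapted to coordinate projections, and bound the increments $\dim(V_i\cap U_k)-\dim(V_i\cap U_{k-1})$ so that each row contributes at most $|T_i|/L$ per dimension on average. The idea is that a vector of $U$ outside $\langle\mathbf 1\rangle$ must be nonconstant, which uses up $L$ "agreement slots". The $\max\{\cdot,0\}$ part is immediate, since taking $W$ of codimension one gives a ratio of at most $1$.

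\emph{Tightness.} Assume $(1-\rho)(L+1)=t\in\N$ and $\binom{L+1}{t}\mid n$. Partition $[n]$ into $\binom{L+1}{t}$ equal blocks indexed by the $t$-subsets $S\subseteq[L+1]$. On block $S$, let $V_i$ force the columns in $S$ to be equal, so the columns outside $S$ are in error. Each column is then erroneous on exactly a $\rho$ fraction of coordinates, so this profile lies in $\mathcal F$. Take $U=\F_q^{L+1}$, which is in $\mathcal L_{\mathrm{Dist}}$. By symmetry of the block structure, the maximizing $W$ can be computed explicitly. I would verify that the max is attained at $W=\langle\mathbf 1\rangle$ and gives exactly $1-\rho(1+1/L)$. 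This yields $R_{\mathcal P}\le 1-\rho(1+1/L)$, and combined with the chain $R_{\mathcal P}\ge R_{\mathcal P'}\ge\max\{1-\rho(1+1/L),0\}$ it gives equality.
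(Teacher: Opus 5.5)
\textbf{Verdict: genuine gap in the lower bound.} The paper only quotes this theorem from \cite{levi2024random} and gives no proof, so I judge your proposal on its own. Two parts are fine: the first inequality (the list-decoding profiles form a subfamily of the average-weight profiles, and the threshold is a minimum over the family) and the bound $R_{\V,U}\ge 0$. The gap is where you suspected. The bound $R_{\V,U}\ge 1-\rho(1+1/L)$ for every $\V\in\mathcal F'$ and every $U\in\mathcal L_{\mathrm{Dist}}(\F_q^{L+1})$ is the actual content of the theorem, and you give only a heuristic for it.

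\textbf{Why $W=U\cap\langle\mathbf 1\rangle$ fails.} The problem is not just that $\min(d,|T_i|)$ is a lossy estimate; this $W$ is the wrong witness. Take $L=3$ and $U=\langle e_4,v\rangle$ with $v=(v_1,v_2,v_3,0)$ and $v_1,v_2,v_3$ distinct. Then $U$ is Dist and $U\cap\langle\mathbf 1\rangle=\{0\}$.
\begin{enumerate}
\item If every row has $T_i=\{4\}$ (average weight, $\rho=1/4$, target $2/3$), then $V_i\cap U=\langle e_4\rangle$. The choice $W=\{0\}$ only certifies $1/2$; the witness is $W=\langle e_4\rangle$.
\item If every row has $T_i=\{1,2\}$ ($\rho=1/2$, target $1/3$), then $V_i\cap U=\langle (v_1,v_2,v_3,v_3)\rangle$. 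Now $W=\langle e_4\rangle$ certifies $0$ and one needs $W=\{0\}$. Every other line fails the first profile.
\end{enumerate}
So for a fixed $U$ the witness must depend on the profile.

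\textbf{Dist must be used essentially.} For the non-Dist $U=\langle(1,1,0)\rangle$ with $L=2$ and all $T_i=\{3\}$, one gets $R_{\V,U}=0<1/2$. Your term-by-term bound depends only on $\dim(U/W)$ and $|T_i|$, so it cannot tell this case from a Dist one. Your flag idea does not say which flag to take or which member to select, and it proves no increment bound. The key combinatorial lemma is therefore missing.

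\textbf{Tightness needs more than symmetry.} Your construction is the right one, but symmetry alone does not rule out an asymmetric $W$ beating $\langle\mathbf 1\rangle$. The missing step is that $W\mapsto\sum_i\dim(V_i\cap W)$ is supermodular. Hence the maximizers of $\Phi(\V,\cdot,R)$ form a lattice, and its largest element is invariant under coordinate permutations. It must therefore be one of $\{0\}$, $\langle\mathbf 1\rangle$, $\{x:\sum_j x_j=0\}$, $\F_q^{L+1}$. At $R=1-\rho(1+1/L)$ these have potentials $0$, $(t-1)n/L$, $0$, $(t-1)n/L$ respectively, for $1\le t\le L$; the cases $t=0$ and $t=L+1$ are trivial. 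So $\F_q^{L+1}$ is a maximizer, which gives $R_{\V,\F_q^{L+1}}\le R$.

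\textbf{One way to close the main gap.} The same lattice argument, applied to Frobenius, shows that $R_{\V,U}$ does not change under extension of scalars. It clearly does not change under repeating the profile. Suppose some Dist $U$ had $R_{\V,U}<1-\rho(1+1/L)$. Then Proposition~\ref{prop:M}(2) would make random linear codes over very large fields fail $(\rho,L)$ average weight list decoding slightly below the generalized Singleton bound, with high probability. That contradicts Theorem~\ref{thm:RLC-list-decodability-GSB}. This imports the large-field result rather than proving it; otherwise you need a direct combinatorial argument for Dist $U$, which your sketch does not supply.
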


\begin{thm}[Theorem~1.2~\cite{yuan2026list}]
\label{thm:RLC-list-decodability-constant-q}
Fix a prime power $q$ and $p\in(0,1-1/q)$. There exists a constant $C_{p,q}>0$ such that, for every fixed $\varepsilon>0$ with $R=1-H_q(p)-\varepsilon>0$ and all sufficiently large $n$, a uniformly random linear code $\cC\subseteq\F_q^n$ of rate $R$ satisfies
\[
\Pr\left[
\cC \text{ is }
\left(p,\left\lceil\frac{H_q(p)}{\varepsilon}\right\rceil
+C_{p,q}\right)
\text{ list decodable}
\right]
\ge 1-q^{-\Omega_{p,q}(\varepsilon n)}.
\]
\end{thm}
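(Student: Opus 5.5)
The plan is a first-moment (union bound) argument over ``bad configurations'', stratified by the rank of the configuration. If $\cC$ fails to be $(p,L)$ list decodable with $L=\lceil H_q(p)/\varepsilon\rceil+C_{p,q}$, then there are $\mathbf y\in\F_q^n$ and distinct $\mathbf c_0,\dots,\mathbf c_L\in\cC$ with $\wt(\mathbf c_i-\mathbf y)\le pn$. Write $\mathbf v_i=\mathbf c_i-\mathbf y$. All $\mathbf v_i$ lie in the Hamming ball $B(\mathbf 0,pn)$, of size at most $q^{H_q(p)n}$, and all differences $\mathbf v_i-\mathbf v_0$ lie in $\cC$. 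The basic fact I will use is the standard one for random linear codes: for a fixed set of vectors whose span has dimension $r$, the probability that all of them lie in $\cC$ is at most $q^{-(1-R)nr}=q^{-(H_q(p)+\varepsilon)nr}$.

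\textbf{Step 1 (reduction to the difference configuration).} I would fix $\mathbf v_0$ and work only with the difference set $\{\mathbf v_i-\mathbf v_0\}_{i\ge1}$, so that the union bound over the centre is absorbed into the choice of $\mathbf v_0\in B(\mathbf 0,pn)$ rather than costing a factor $q^n$. Let $r$ be the dimension of the span of the differences. I would pick a maximal independent subfamily $\mathbf v_{i_1}-\mathbf v_0,\dots,\mathbf v_{i_r}-\mathbf v_0$. The $r+1$ points $\mathbf v_0,\mathbf v_{i_1},\dots,\mathbf v_{i_r}$ can be chosen in at most $q^{H_q(p)n(r+1)}$ ways. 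Each of the remaining $L-r$ points lies in an $r$-dimensional affine space, hence has at most $q^r$ choices, contributing $q^{O(rL)}$, which is $q^{o(n)}$ for fixed $\varepsilon$. The event that the difference configuration lies in $\cC$ costs $q^{-(H_q(p)+\varepsilon)nr}$. The contribution from rank $r$ is therefore about
\[
q^{\,H_q(p)n-\varepsilon nr+O(rL)},
\]
which is $q^{-\Omega(\varepsilon n)}$ as soon as $r\ge H_q(p)/\varepsilon+1$. This is exactly where the leading term $\lceil H_q(p)/\varepsilon\rceil$ comes from.

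\textbf{Step 2 (the low-rank regime).} The remaining case is $r\le\lceil H_q(p)/\varepsilon\rceil$ with $L+1$ points. Here I must show that an affine space of dimension $r$ cannot contain more than $r+C_{p,q}$ points of a radius-$pn$ Hamming ball, unless the configuration has an unusually large probability deficit that can be charged separately. For this I would refine the count rather than bounding each extra point by $q^r$. Order the points so that each new point either increases the rank or is ``forced''. Using the Guruswami--H\aa stad--Kopparty style Ramsey/increasing-chain argument, a configuration of many points in a low-dimensional affine space inside $B(\mathbf 0,pn)$ must contain a sub-configuration whose coordinates are heavily constrained. That sub-configuration gives the ball-intersection count a saving of $q^{-\Omega_{p,q}(n)}$ per extra point, beyond roughly $C_{p,q}$ of them. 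Concretely, I would bound $|B(\mathbf 0,pn)\cap(\mathbf v_0+S)|$ for an $r$-dimensional $S$ by a sum over coordinate patterns. I would then show the number of admissible $S$ weighted by $q^{-(1-R)nr}$ beats the entropy, with the constant $C_{p,q}$ absorbing the number of ``free'' extra points.

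\textbf{Step 3 (assembly and main obstacle).} Summing over $r\le L$ and over the at most $2^{L+1}$ choices of the independent subfamily, both regimes give $q^{-\Omega_{p,q}(\varepsilon n)}$ for sufficiently large $n$. The lower bound on the success probability then follows; conditioning on full rank of the parity-check matrix changes nothing, as noted in Section~\ref{LCL-property}. I expect Step~2 to be the main obstacle. The crude affine count $q^r$ per point is insufficient there, and extracting an additive constant $C_{p,q}$ independent of $\varepsilon$, rather than a multiplicative loss, requires a sharp structural bound on many ball points in a low-dimensional affine space. I would first try a potential-function argument in the style of Proposition~\ref{prop:M}: treat the configuration as a local profile and show that the minimum over subspaces $W$ in $M(\mathcal V,U,R)$ is attained at a deficit of at least $\varepsilon n$ once $L\ge H_q(p)/\varepsilon+C_{p,q}$.
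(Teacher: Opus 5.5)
The paper does not prove this statement; it quotes it from \cite{yuan2026list}. Judged on its own, your proposal has a genuine gap: Step~2 is the entire content of the theorem, and no argument is given for it.

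Your Step~1 is the standard first-moment count and is correct. But combined with the fact that an $r$-dimensional affine space has only $q^r$ points, it only recovers the classical Zyablov--Pinsker list size $q^{O(H_q(p)/\varepsilon)}$. Within your framework, what remains to be shown is the following. For every $r$ with $\log_q(L+1)\le r\le\lceil H_q(p)/\varepsilon\rceil$, the number of $(L+1)$-tuples of distinct points of $B(\mathbf 0,pn)$ whose differences span an $r$-dimensional space must be at most $q^{(H_q(p)+\varepsilon)nr-\Omega(\varepsilon n)}$. Compared with the trivial count $q^{H_q(p)n(r+1)}$, this is a saving of $q^{-(H_q(p)-\varepsilon r)n-\Omega(\varepsilon n)}$. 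When $r$ is close to $\log_q(1/\varepsilon)$, that is nearly a full factor $|B(\mathbf 0,pn)|$. It must come from only $L-r=\lceil H_q(p)/\varepsilon\rceil+C_{p,q}-r$ additional points, in a count over all configurations, including highly structured ones. Proving this with an additive rather than multiplicative loss in $L$ is the whole difficulty, and your proposal names no mechanism that achieves it.

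None of the mechanisms you mention supplies it.

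\emph{Per-point savings.} A saving of $q^{-\Omega(n)}$ per extra point cannot hold uniformly. Take $r$ vectors of weight $\lfloor pn/r\rfloor$ with disjoint supports. They span a subspace all of whose $q^r$ points lie in $B(\mathbf 0,pn)$, so once the subspace is fixed, up to $q^r-r-1$ further points come for free. What must be bounded is the number of such structured subspaces weighted by $q^{-(1-R)nr}$, not a point-by-point decrement.

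\emph{Ramsey/increasing chains.} The Guruswami--H\aa stad--Kopparty argument gives list size $C_{p,q}/\varepsilon$ with an unspecified constant in front of $1/\varepsilon$. That is exactly the multiplicative loss the statement is designed to avoid.

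\emph{Potential functions.} The route through Proposition~\ref{prop:M} bounds the probability one profile at a time. Failure of list decodability is a union over up to $\binom{n}{\lfloor pn\rfloor}^{L+1}$ profiles, and Theorem~\ref{thm:RLC-threshold} absorbs that union only when $q$ is exponentially large in $L$. Indeed, the bound $R_{\mathcal P}\ge 1-p(1+1/L)$ quoted in Section~\ref{LCL-property} exceeds $1-H_q(p)$ for large $L$, since $H_q(p)>p$ on $(0,1-1/q)$. So at fixed $q$ the rate $1-H_q(p)-\varepsilon$ lies far below $R_{\mathcal P}$, and the entropy term $H_q(p)$ comes only from counting actual error vectors, which the profile potential does not see.
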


\begin{thm}[Corollary~3.11~\cite{goyal2026counting}]
\label{thm:RLC-list-decodability-GSB}
Fix $R,\varepsilon>0$ with $R+\varepsilon<1$ and $L\in\N$.
Let $q$ be a prime power satisfying
\[
q>\max\left\{(32(L+1))^{2/\varepsilon},\frac{2(L+1)}{1-R-\varepsilon}\right\}.
\]
If $n>4L/\varepsilon$ and $Rn,\varepsilon n\in\N$, then a random linear code $\cC\subseteq\F_q^n$ of rate $R$ is $\left(\frac{L}{L+1}(1-R-\varepsilon),L\right)
\text{ average weight list decodable}$ with probability at least $1-8q^{-\varepsilon n/4}$.
\end{thm}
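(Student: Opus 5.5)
The plan is to derive the statement from the LCL threshold machinery of \cite{levi2024random}, namely Theorem~\ref{thm:RLC-threshold}(3). The alternative would be a direct first-moment count, but the threshold route fits the rest of the paper. Throughout, set $b=L+1$ and $\rho=\frac{L}{L+1}(1-R-\varepsilon)$.

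\textbf{Step 1: encode failure as a profile family.} Suppose $\cC$ is not $(\rho,L)$ average weight list decodable. Then there are distinct codewords $\mathbf x^{(1)},\dots,\mathbf x^{(b)}$ and a center $\mathbf z$ with $\sum_j\wt(\mathbf x^{(j)}-\mathbf z)\le \rho n b$. For each coordinate $i$, let $S_i\subseteq[b]$ be the set of indices $j$ with $x^{(j)}_i=z_i$, and take the profile
\[
V_i=\{v\in\F_q^b: v_j=v_{j'}\ \forall j,j'\in S_i\}.
\]
The average weight condition gives $\sum_i |S_i|\ge (1-\rho)bn$. The resulting family $\mathcal F$, indexed by the tuples $(S_1,\dots,S_n)$, has size at most $2^{bn}$. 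Hence the alphabet hypothesis of Theorem~\ref{thm:RLC-threshold}(3), $q\ge 2^{2\log_2|\mathcal F|/(\varepsilon' n)}$, follows from a bound of the form $q\ge 2^{O(b/\varepsilon)}$. Running this with $\varepsilon'=\varepsilon/2$ should give failure probability $q^{-\varepsilon n/4+b^2}$. The hypothesis $n>4L/\varepsilon$ is then used to absorb the $b^2$ term into the constant $8$.

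\textbf{Step 2: lower-bound the threshold.} I must show $R_{\mathcal V,U}\le R+\varepsilon/2$ fails for no admissible pair, that is, $R_{\mathcal P}\ge R+\varepsilon/2$. By definition this means: for every $\V\in\mathcal F$ and every $U\in\mathcal L_{\mathrm{Dist}}(\F_q^b)$ there is some $W\subsetneq U$ with
\[
\sum_i\bigl(\dim(V_i\cap U)-\dim(V_i\cap W)\bigr)\le (1-R-\varepsilon/2)\,n(\dim U-\dim W).
\]
The natural choice is $W=\{0\}$. Then one needs $\sum_i \dim(V_i\cap U)\le(1-R-\varepsilon/2)n\dim U$. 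The codimension of $V_i\cap U$ in $U$ equals the rank of the restricted equalities $\{v_j=v_{j'}: j,j'\in S_i\}$ on $U$. Because $U$ separates all coordinates, each of these $|S_i|-1$ constraints is nontrivial on $U$, though they may be dependent.

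This is the step I expect to be the main obstacle. One only gets $\operatorname{codim}\ge 1$ whenever $|S_i|\ge2$, which is far too weak when $\dim U$ is small. So the analysis must range over all chains $W\subsetneq U$, as in the generalized Singleton bound proofs. I would first try an induction on $\dim U$ that peels off one dimension at a time, choosing $W$ so that the corresponding quotient sees a large share of the agreements. The combinatorial inequality to aim for is
\[
\sum_i\min\bigl(|S_i|-1,\ \operatorname{codim}\bigr)\ \ge\ \frac{\dim U-\dim W}{b-1}\sum_i\bigl(|S_i|-1\bigr).
\]

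\textbf{Step 3: combine.} Plugging $\sum_i(|S_i|-1)\ge((1-\rho)b-1)n$ into this inequality should give a threshold of $1-\frac{L+1}{L}\rho-O(\varepsilon)$, which is at least $R+\varepsilon/2$ by the choice of $\rho$. The technical condition $q>\frac{2(L+1)}{1-R-\varepsilon}$ should then enter where $U$ has small dimension relative to $b$. There I would use it to rule out profiles in which many pairs of columns collapse on too few coordinates. If this inequality resists, I would fall back on the counting-of-subspaces argument of \cite{goyal2026counting}. That argument sums $q^{-(n-k)\dim U}$ over the choices of $\operatorname{rowspan}(A)$ directly, and it is the source of the factor $(32(L+1))^{2/\varepsilon}$.
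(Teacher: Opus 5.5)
The paper does not prove this statement; it imports it from \cite{goyal2026counting}. Judged against the statement itself, your main route through Theorem~\ref{thm:RLC-threshold}(3) cannot deliver the stated alphabet size. This, not Step~2, is the real gap.

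The hypothesis of part~(3), $q\ge 2^{2\log_2|\mathcal F|/(\varepsilon' n)}$, comes from a union bound that pays for every profile up front. Your family of agreement patterns $(S_1,\dots,S_n)$ has $\log_2|\mathcal F|=\Theta(bn)$ for fixed $\rho$, so the condition becomes $q\ge 2^{\Omega(L/\varepsilon)}$. With your bound $|\mathcal F|\le 2^{bn}$ and $\varepsilon'=\varepsilon/2$, it is $q\ge 2^{4(L+1)/\varepsilon}$. The statement only assumes $q>(32(L+1))^{2/\varepsilon}=2^{2(5+\log_2(L+1))/\varepsilon}$. That is polynomial in $L$ for fixed $\varepsilon$, and it already fails to imply your condition when $L=3$.

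Replacing $2^{O(L/\varepsilon)}$ by $L^{O(1/\varepsilon)}$ is exactly the content of the cited result. Roughly, the enumeration of agreement patterns has to be weighed against the rank each pattern forces on the row span. Then one pays $\mathrm{poly}(L)$ per unit of rank instead of $2^{L+1}$ per coordinate. Your closing sentence defers to \cite{goyal2026counting} for this, but that is a citation rather than an argument.

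Two further steps are off.

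First, the additive $b^2$ in part~(3) cannot be ``absorbed into the constant $8$''. With $\varepsilon'=\varepsilon/2$ you get $q^{b^2}\cdot q^{-\varepsilon n/4}$, and $q^{b^2}\ge 2^4>8$ for $L\ge1$. Absorbing it into the exponent instead requires $\varepsilon n\ge 4(L+1)^2$, even with the full gap $\varepsilon'=\varepsilon$, and $n>4L/\varepsilon$ does not give this.

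Second, the step you flag as the main obstacle is already available. Theorem~3.4 of \cite{levi2024random}, quoted in Section~\ref{LCL-property}, gives $R_{\mathcal P'}\ge 1-\rho(1+1/L)=R+\varepsilon$ for your $\rho$, with no hypothesis on $q$. So neither your chain inequality nor the condition $q>2(L+1)/(1-R-\varepsilon)$ is needed there.

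Combined with Theorem~\ref{thm:RLC-threshold}(3), this honestly yields only a weaker theorem: failure probability at most $q^{-\varepsilon n/4}$ under $q\ge 2^{2\log_2|\mathcal F|/(\varepsilon n)}$ and $\varepsilon n\ge 4(L+1)^2$. The stated parameters need the refined counting of \cite{goyal2026counting}.
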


\begin{thm}[Theorem~1~\cite{guruswami2026average}]
\label{thm:RLC-average-list-decodability-constant-q}
Fix a prime power $q$ and $p\in(0,1-1/q)$. There exists a constant $C_{p,q}>0$ such that, for all sufficiently small $\varepsilon>0$ and sufficiently large $n$, a random linear code $\cC\subseteq\F_q^n$ of rate $1-H_q(p)-\varepsilon$ is $\left(p,\left\lceil\frac{C_{p,q}}{\varepsilon}\right\rceil\right)
\text{ average weight list decodable}$ with probability at least $1-q^{-5n}$.
\end{thm}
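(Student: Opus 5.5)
The approach is a first-moment argument over the affine span of a potential bad list, organized by its dimension. This is the standard route for random linear codes. The delicate point, where the constant $C_{p,q}$ and the $1/\varepsilon$ dependence arise, is handling bad lists whose span is low-dimensional.

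Fix $L=\lceil C_{p,q}/\varepsilon\rceil$. Suppose $\cC$ fails to be $(p,L)$ average weight list decodable. Then there are distinct codewords $\mathbf x_1,\dots,\mathbf x_{L+1}$ and a center $\mathbf z$ with $\sum_j\wt(\mathbf x_j-\mathbf z)\le p n(L+1)$. Translating by $\mathbf x_1$ and using linearity of $\cC$, the vectors $\mathbf y_j=\mathbf x_j-\mathbf x_1$ lie in $\cC$. Their span $Y$ has some dimension $d\le L$. For a uniformly random linear code of rate $R=1-H_q(p)-\varepsilon$, a fixed $d$-dimensional subspace lies in $\cC$ with probability at most $q^{-d(1-R)n}=q^{-d(H_q(p)+\varepsilon)n}$. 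So it suffices to count, for each $d$, the pairs $(\mathbf z,Y)$ for which the coset $\mathbf x_1+Y$ contains $L+1$ points whose average distance to $\mathbf z$ is at most $pn$. We then sum the union bound over $d$.

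\emph{High-dimensional case} ($d\ge d_0$ with $d_0\approx c_{p,q}/\varepsilon$). I would pick $d$ affinely independent points among the $\mathbf x_j$. By an averaging/convexity step, they can be chosen with average distance to $\mathbf z$ at most $pn+o(n)$. I would then bound the number of such $d$-tuples together with $\mathbf z$ by $q^{n}\cdot q^{d n H_q(p)+O(d\log n)}$, using concavity of $H_q$ to pass from the average constraint to a product of ball volumes. The union bound then gives $q^{n-\varepsilon d n+o(n)}$. This is at most $q^{-5n}$ once $\varepsilon d\ge 7$, which fixes $d_0=\lceil 7/\varepsilon\rceil$ up to lower-order terms.

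\emph{Low-dimensional case} ($d<d_0$ while $L+1\gg q^{d}$ is impossible, so $L+1\le q^{d}$). Here the main obstacle is that $L+1$ may be much larger than $d$: many points in a small affine space are all, on average, close to $\mathbf z$. The plan is to show that a $d$-dimensional affine space containing $L+1$ points of small average weight must have a special structure. Concretely, I would compare the \emph{uniform} average of $\wt(\mathbf y-\mathbf z)$ over the coset, which equals $(1-1/q)$ times the number of coordinates on which $Y$ is not identically zero, with the average over the $L+1$ chosen points. Since $p<1-1/q$, the chosen points must concentrate on coordinates where the coset is constant. Such a coset has a large set of coordinates on which it is fixed, which sharply reduces the count. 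I would quantify this with a Chernoff/entropy estimate on the number of $(\mathbf z,Y)$ with a prescribed support profile. The target is a count of $q^{d n(H_q(p)+\varepsilon/2)}$ when $L\ge C_{p,q}/\varepsilon$, where $C_{p,q}$ absorbs the gap $1-1/q-p$. This is the step I expect to require the real work, and the constant $C_{p,q}$ will be whatever makes the low-dimensional count beat the probability by a factor $q^{-6n}$.

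Finally, I would sum over $d\le L$ and over the $\mathrm{poly}(n)^{L}$ choices of weight profiles, which contributes only $q^{o(n)}$ for fixed $\varepsilon$. This yields failure probability at most $q^{-5n}$ for sufficiently large $n$ and sufficiently small $\varepsilon$.
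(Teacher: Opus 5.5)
The paper does not prove this statement. It is quoted from \cite{guruswami2026average} and used only as an input to Corollary~\ref{cor:AG-average-list-decodability}, so your proposal has to stand on its own, and it has a genuine gap exactly where the theorem's content lies.

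\textbf{The low-dimensional case is not proved, and the suggested mechanism fails.} A union bound over spans organized only by $d=\dim Y$ gives list size $q^{O(1/\varepsilon)}$ (the Zyablov--Pinsker bound). So everything that makes $L=O(1/\varepsilon)$ possible must come from your low-dimensional case, and there you give no argument. Comparing with the uniform average of $\wt(\mathbf y-\mathbf z)$ over the coset is informative only when $L+1$ is comparable to $q^{d}$, i.e.\ $d\approx\log_q L$, which is precisely the Zyablov--Pinsker regime. In your range you only know $L+1\ge (C_{p,q}/7)\,d$, an exponentially small fraction of the coset. Such a subset can have average weight far below the coset average without concentrating on constant coordinates. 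For example, let $\mathbf y_1,\dots,\mathbf y_d$ have pairwise disjoint supports of size $s\le pn/2$. Then $\{\mathbf 0\}\cup\{\mathbf y_i\}\cup\{\mathbf y_i+\mathbf y_j\}$ has $\Theta(d^2)$ points of weight at most $pn$, while $|\mathrm{supp}(Y)|=ds$.

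Your stated target is also too weak. A count of $q^{dn(H_q(p)+\varepsilon/2)}$ gives a contribution $q^{-d\varepsilon n/2}$, which for $d<7/\varepsilon$ is never below $q^{-3.5n}$. What you actually need is an \emph{absolute} saving. The number of pairs (center, $d$-dimensional coset) whose coset carries $\gg d$ points at average distance at most $pn$ must be smaller than the generic count, roughly $q^{n+(d+1)H_q(p)n}$, by a factor of about $q^{-(6+H_q(p)-d\varepsilon)n}$. This must hold uniformly in $d$, with the saving driven by the ratio $(L+1)/d$. That is a structural estimate of the type ``vectors from the (average-weight) ball rarely have many more than $O(d)$ elements of their span in the ball.'' Saying that $C_{p,q}$ is ``whatever makes this work'' restates the theorem rather than proving it.

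\textbf{The high-dimensional case also fails as written.} You claim one can pick $d+1$ affinely independent points of the list with average distance $pn+o(n)$ to $\mathbf z$. This is false when the dimension of the full span is produced by a few distant points. Take $L+1-r$ points within distance $pn/2$ of $\mathbf z$, lying in an affine space of dimension about $\log_q L$. Add $r=\lfloor\frac{p}{2-p}(L+1)\rfloor$ points at distance $n$ in general position. The average distance is at most $pn$. The total dimension is $d\approx r+\log_q L$, which exceeds $d_0$ once $C_{p,q}\ge 7(2-p)/p$. Yet every affinely independent $(d+1)$-subset must contain all $r$ far points, so its average distance is $n\bigl(1-O(\log L/L)\bigr)$. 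Hence the dichotomy by the dimension of the whole span does not cover all bad lists. You would have to pass to a sub-list of close points, and that puts you back among the rank-deficient configurations that the missing estimate has to control.
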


Write $h_{q,\ell}(x)=x\log_q\frac{q-\ell}{x}+(1-x)\log_q\frac{\ell}{1-x}.$

\begin{thm}[Theorem~31~\cite{doron2025list}]
\label{thm:RLC-list-recoverary}
Fix a prime power $q$, an integer $1\le\ell<q$, and
$\rho\in(0,1-\ell/q)$.
There exists a constant $C_{\rho,\ell,q}>0$ such that,
for every $0<\varepsilon<1-h_{q,\ell}(\rho)$ and sufficiently
large $n$, a random linear code $\cC\subseteq\F_q^n$ of rate
$1-h_{q,\ell}(\rho)-\varepsilon$ is $\left(\rho,\ell,
\left\lceil\frac{C_{\rho,\ell,q}}{\varepsilon}\right\rceil\right)
\text{-list-recoverable}$ with probability at least $1-q^{-n}$.
\end{thm}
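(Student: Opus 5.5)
This statement is quoted from~\cite{doron2025list}, and the plan below follows the Guruswami--H{\aa}stad--Kopparty (GHK) strategy for random linear codes, adapted from Hamming balls to list-recovery balls; it is not a reconstruction of the cited proof. I would not route it through Theorem~\ref{thm:RLC-threshold}. The profile family in Proposition~\ref{prop:LCL-decoding-recovery}(2) has size about $\ell^{(L+1)n}$, and a union bound over it would swamp the gain $q^{-\varepsilon n}$ once $L\approx C/\varepsilon$. So I would union bound only over the input lists. There are at most $\binom{q}{\ell}^n=q^{O(n)}$ choices of $Z=(Z_1,\ldots,Z_n)$. For fixed $Z$, let
\[
Y_Z=\{\mathbf x\in\F_q^n:|\{i:x_i\notin Z_i\}|\le\rho n\},
\]
a ``recovery ball'' with $|Y_Z|\le q^{h_{q,\ell}(\rho)n+o(n)}$. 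It then suffices to show $\Pr[|\cC\cap Y_Z|>L]\le q^{-Cn}$ for a constant $C$ larger than $\log_q\binom q\ell+1$, with $L=\lceil C_{\rho,\ell,q}/\varepsilon\rceil$.

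For fixed $Z$, I would argue in the GHK style. Suppose $|\cC\cap Y_Z|>L$. Pick the $L+1$ points greedily, so that $\mathbf y_1,\ldots,\mathbf y_d$ are linearly independent with $d\ge\log_q(L+1)$ and every further point lies in their span. Since $\cC$ is a random subspace of codimension $(1-R)n$, a fixed $d$-tuple of independent vectors lies in $\cC$ with probability $q^{-(1-R)nd}$. The naive count of tuples is $|Y_Z|^d=q^{h nd}$, which only gives $q^{-\varepsilon nd}$ and forces $L$ exponential in $1/\varepsilon$. To get $L=O(1/\varepsilon)$, I would follow GHK and count ``bad'' spans more cleverly. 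Choose the $\mathbf y_j$ at random among the list points rather than greedily. Show that with noticeable probability each new random point $\mathbf y_{j+1}$ falls outside $\mathrm{span}(\mathbf y_1,\ldots,\mathbf y_j)$, unless that span already contains a constant fraction of the list. Then either the list is spread, giving $\Omega(L)$ independent points and hence an extremely small probability, or many points of $Y_Z$ lie in a low-dimensional subspace.

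The second case is handled by the key structural lemma, which is where I expect the main obstacle. For a subspace $W$ of dimension $d$, we need $|W\cap Y_Z|$ to be small unless $W$ is ``aligned'' with $Z$. Concretely: if $\mathbf x_1,\ldots,\mathbf x_t$ are uniform in $Y_Z$ and $\alpha_1,\ldots,\alpha_t$ are coefficients, not all zero, from a bounded set, then
\[
\Pr\Bigl[\textstyle\sum_j\alpha_j\mathbf x_j\in Y_Z\Bigr]\le q^{-c n}
\]
for some constant $c=c(\rho,\ell,q)>0$. For Hamming balls this follows from concavity of $H_q$. For recovery balls I would prove it coordinate-wise. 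A typical $\mathbf x\in Y_Z$ has about $\rho n$ coordinates outside $Z_i$, uniform on $\F_q\setminus Z_i$, and the rest uniform on $Z_i$. A nontrivial combination at a coordinate lands in $Z_i$ with probability bounded away from $1-\rho$, because $\rho<1-\ell/q$ and $|Z_i|=\ell<q$. A Chernoff bound then gives that the fraction of coordinates outside $Z_i$ exceeds $\rho$ by a constant, except with probability $q^{-cn}$. Strict concavity of $h_{q,\ell}$ is what makes this gap quantitative.

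Finally, I would combine the two cases by summing over $d$ and over the GHK-style ``trajectories'' of random choices. The total failure probability for fixed $Z$ should be at most $q^{-\Omega(\varepsilon n L)}+q^{-cn}\cdot q^{O(n/L)}$. Choosing $L=C_{\rho,\ell,q}/\varepsilon$ large makes this at most $q^{-(\log_q\binom q\ell+1)n}$. The union bound over $Z$ then gives failure probability at most $q^{-n}$. The delicate point is the bookkeeping of the trajectory counting, so that its entropy cost stays $O(n)$ rather than $O(nL)$. I would handle it exactly as in GHK, choosing the sample size in the random selection as a constant depending on $c$.
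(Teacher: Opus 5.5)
The paper gives no proof of this statement. It is quoted from \cite{doron2025list} and used only as a black box in Corollary~\ref{cor:AG-list-recoverability-entropy}, so your outline has to stand on its own.

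Its skeleton is fine. You union bound over the $\binom q\ell^n$ input lists and use $|Y_Z|\le q^{h_{q,\ell}(\rho)n}$. Then, with $d=\dim\mathrm{span}(\cC\cap Y_Z)$, one of two cases holds:
\begin{itemize}
\item $d\ge \ell'=\lceil K'/\varepsilon\rceil$, which costs at most $q^{-\varepsilon \ell' n}$;
\item a basis of the span drawn from the list is an independent $d$-tuple in $Y_Z^d$ whose span meets $Y_Z$ in more than $L$ points.
\end{itemize}
In the second case the union bound over tuples costs $|Y_Z|^d q^{-(1-R)nd}\le q^{-\varepsilon nd}$ times the fraction of such ``bad'' tuples. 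The entropy of the tuples is therefore cancelled exactly, and the random selection and trajectory bookkeeping you worry about are unnecessary.

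The gap is in the one step that carries the weight. Your reduction needs, for every fixed $Z$ and every $d<\ell'$,
\[
\Pr_{x_1,\dots,x_d\sim Y_Z}\bigl[\,|\mathrm{span}(x_1,\dots,x_d)\cap Y_Z|>L\,\bigr]\le q^{-Kn},
\qquad K>\log_q\tbinom q\ell+1,
\]
with $L=O(\ell')$.

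Your structural lemma supplies only a single event of probability $q^{-cn}$. Here $c=c(\rho,\ell,q)$ is a fixed constant that can be arbitrarily small; it tends to $0$ as $\rho\to 1-\ell/q$. A union bound over the $q^d$ coefficient vectors gives $q^{d-cn}$, and nothing in your argument improves this as $L$ grows. Indeed your final estimate contains the term $q^{-cn}\cdot q^{O(n/L)}$, which is at least $q^{-cn}$ for every $L$. So the claim that a large $C_{\rho,\ell,q}$ pushes the failure probability below $q^{-(\log_q\binom q\ell+1)n}$ is false.

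The missing idea is amplification. You must show that if the span contains more than $Cd$ points of $Y_Z$, one can extract $t=t(C)\to\infty$ of these combinations that can be revealed one at a time. Each must still carry two nonzero coefficients on variables not appearing in earlier ones. The conditional bounds then multiply to $q^{-ctn}$. The union bound over the $q^{O(dt)}$ choices is harmless because $d=O(1/\varepsilon)=o(n)$. In GHK this is done with increasing chains of supports. A naive version is false: the set of all full-support coefficient vectors contains no such chain of length two. So some normalization, such as translating the coefficient set, is unavoidable.

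Your single-step lemma also needs repair. As stated, with coefficients merely not all zero, it is false: $x_1$ itself lies in $Y_Z$, and so does $\alpha x_1$ whenever $\alpha Z_i=Z_i$ for all $i$. What the sequential argument actually uses is this: for nonzero $\alpha,\beta$, every fixed shift $w$, and independent uniform $x,y\in Y_Z$,
\[
\Pr[\alpha x+\beta y+w\in Y_Z]\le q^{-cn}.
\]
This does hold uniformly in $w$ when $0<\rho<1-\ell/q$. To see it, write the per-coordinate law as uniform plus a mean-zero multiple of $\mathbf 1_{Z_i}-\ell/q$ and bound the convolution.
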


\begin{thm}[A consequence of Theorems~4.2 and~4.4~\cite{goyal2026counting}]
\label{thm:RLC-list-recoverability-GG}
Fix $R,\varepsilon>0$ with $R+\varepsilon<1$ and $\ell\in\N$.
Set $L=\left\lceil\left(\frac{\ell}{R+\varepsilon}\right)^{2(R+\varepsilon)/\varepsilon}\right\rceil.$ Let $q>(2L)^{8/\varepsilon}$ be a prime power.
If $n\ge 8(L^2+L)/\varepsilon$ and $Rn\in\N$, then a random
linear code $\cC\subseteq\F_q^n$ of rate $R$ is
$(1-R-\varepsilon,\ell,L)$-list-recoverable
with probability at least $1-q^{-\varepsilon n/8}$.
\end{thm}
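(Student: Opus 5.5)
The plan is to obtain the statement by specializing the two cited results of~\cite{goyal2026counting} to our parameters and checking that our hypotheses imply theirs; no new probabilistic argument is needed. Since the exact form of those two results is not reproduced in this paper, the first step is to record what I expect each one to provide.

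\begin{enumerate}
\item[(i)] Theorem~4.2 should be the combinatorial input. It should bound the threshold of the $(L+1)$-LCL property $\mathcal P=\{\cC:\cC\text{ is not }(\rho,\ell,L)\text{-list-recoverable}\}$ from below. Concretely, I expect a statement of the form $R_{\mathcal P}\ge 1-\rho-\delta$ whenever $L\ge (\ell/(1-\rho))^{c(1-\rho)/\delta}$, with this holding profile by profile via a potential/counting argument in the spirit of Proposition~\ref{prop:M}.
\item[(ii)] Theorem~4.4 should be the probabilistic input. It should turn such a potential bound into a failure probability of the form $q^{-\Omega(\delta n)}$ under an alphabet condition $q>(cL)^{O(1/\delta)}$ and a length condition $n\gtrsim L^2/\delta$. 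The alphabet condition is what absorbs the union bound over the $\binom{n}{\lfloor\rho n\rfloor}^{L+1}\ell^{(L+1)n}$ profiles from Proposition~\ref{prop:LCL-decoding-recovery}(2), much as in Theorem~\ref{thm:RLC-threshold}(3).
\end{enumerate}

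The concrete steps are as follows.
\begin{enumerate}
\item Set $\rho=1-R-\varepsilon$, so that $1-\rho=R+\varepsilon$.
\item Apply Theorem~4.2 with slack $\varepsilon/2$. The choice $L=\lceil(\ell/(R+\varepsilon))^{2(R+\varepsilon)/\varepsilon}\rceil$ is exactly the generalized-Singleton-type list size for which the potential $\Phi(\mathcal V,U,R)$ of every list-recovery profile stays below $-\tfrac{\varepsilon}{2}n\dim U$ on every admissible $U\in\mathcal L_{\mathrm{Dist}}(\F_q^{L+1})$. The exponent $2(R+\varepsilon)/\varepsilon$ is what makes $\log_\ell$ of the number of ``heavy'' row patterns beat the rate loss.
\item Feed this into Theorem~4.4 with $b=L+1$. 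Check that $q>(2L)^{8/\varepsilon}$ dominates its field-size requirement; this amounts to comparing exponents after taking $\log_q$ of the profile count. Check also that $n\ge 8(L^2+L)/\varepsilon$ absorbs the additive $b^2=(L+1)^2$ loss in the exponent.
\item Conclude $\Pr[\cC\in\mathcal P]\le q^{-\varepsilon n/8}$. The integrality condition $Rn\in\N$ is needed so that the random linear code genuinely has rate $R$.
\end{enumerate}

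The main obstacle is the constant bookkeeping in steps 2 and 3. I need to ensure that splitting $\varepsilon$ between the threshold slack and the union-bound/alphabet slack still leaves the stated exponent $\varepsilon n/8$. I also need to ensure that the alphabet bound $(2L)^{8/\varepsilon}$, rather than some larger power, suffices to dominate $\ell^{(L+1)n}\binom{n}{\rho n}^{L+1}$. My first attempt will be to give half the slack to each part and verify the resulting inequalities directly. If the constants do not close, I would instead apply the cited results at slack $\varepsilon'=\varepsilon/2$ and use monotonicity of list-recoverability in $\rho$ and $L$ to recover the stated parameters.
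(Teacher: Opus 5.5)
There is a genuine gap in step~3, and it comes from the mechanism you attribute to Theorem~4.4 in (ii). You propose to absorb a union bound over the list-recovery profile family into the alphabet, as in Theorem~\ref{thm:RLC-threshold}(2)--(3). That cannot produce the hypothesis $q>(2L)^{8/\varepsilon}$, and redistributing constants does not help.

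\emph{The slack is at most about $\varepsilon$.} The family contains the profile in which all $L+1$ columns agree on a common set of $n-\lfloor\rho n\rfloor$ coordinates and are free elsewhere. This profile has $R_{\V,\F_q^{L+1}}=1-\lfloor\rho n\rfloor/n$. Hence $R_{\mathcal P}-R\le\varepsilon+1/n$ for $\rho=1-R-\varepsilon$.

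\emph{The union bound then needs a huge alphabet.} Theorem~\ref{thm:RLC-threshold}(2) gives at best $|\mathcal F|\,q^{-(\varepsilon+1/n)n+b^2}$. Making this at most $q^{-\varepsilon n/8}$ requires $\log_q|\mathcal F|\lesssim\tfrac78\varepsilon n$. You quote the family from Proposition~\ref{prop:LCL-decoding-recovery}(2), of size $\binom{n}{\lfloor\rho n\rfloor}^{L+1}\ell^{(L+1)n}\approx 2^{(L+1)n(H_2(\rho)+\log_2\ell)}$. With that family the requirement becomes $\log_2 q\gtrsim (L+1)\bigl(H_2(\rho)+\log_2\ell\bigr)/\varepsilon$, i.e.\ $q\ge 2^{\Omega(L/\varepsilon)}$.

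\emph{The hypothesis falls far short of this.} It only gives $\log_2 q>8\log_2(2L)/\varepsilon$, which is logarithmic in $L$. Since $L=\lceil(\ell/(R+\varepsilon))^{2(R+\varepsilon)/\varepsilon}\rceil$ is exponential in $1/\varepsilon$, the two requirements differ by an exponential factor in the exponent of $q$. Halving the slack, or using monotonicity in $\rho$ and $L$, only moves constants.

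Separately, the condition in your step~2 is misstated. Being $\delta$ below $R_{\V}$ means that for each $U\in\mathcal L_{\mathrm{Dist}}(\F_q^{L+1})$ some $W\subsetneq U$ satisfies $\Phi(\V,U,R)-\Phi(\V,W,R)\le-\delta n(\dim U-\dim W)$. The potential $\Phi(\V,U,R)$ itself need not be negative. For example, any $U$ containing the all-equal line has $\Phi(\V,U,R)\ge n-(1-R)n\dim U$, which is positive for $\dim U=2$ and $R>1/2$.

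The paper does not derive this statement; it imports it, without proof, as a consequence of Theorems~4.2 and~4.4 of~\cite{goyal2026counting}. Whatever their exact form, the computation above shows they cannot amount to ``a threshold bound plus a union bound over $(L+1)$-column profiles.'' An alphabet of size $L^{O(1/\varepsilon)}$ rather than $2^{O(L/\varepsilon)}$ is only possible because the cited argument avoids paying for every coordinate-wise agreement pattern of all $L+1$ codewords. A correct proof must quote the actual statements of those two results and verify their hypotheses at $\rho=1-R-\varepsilon$ with the given $L$, $q$ and $n$. Your reconstruction would only give a weaker theorem with a much larger alphabet, essentially what Theorem~\ref{thm:RLC-threshold}(3) already yields.
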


\begin{thm}[Theorem~8~\cite{li2025near}]\label{thm:RLC-list-recoverability-LS}
Fix $0<R<1$, $\varepsilon>0$ so that $1-R-\varepsilon>0$, $\ell\in\N$, and let 
$q \ge \max\left(\ell^{\frac{8R}{\varepsilon}+6},\ \ell\cdot 2^{4/\varepsilon}\right)$ be a prime power. Then, with probability at least
$1-2q^{-\varepsilon n/8}$, a random linear code $\cC\subseteq \F_q^n$ of rate $R$ is $(1-R-\varepsilon,\ell,L)$-list-recoverable with $L$ satisfying
$
L \le \left(\frac{2\ell}{\varepsilon}\right)^{2\ell/\varepsilon}.
$
\end{thm}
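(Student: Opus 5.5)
This statement is quoted from~\cite{li2025near} and is used here only as a black box; the following is how I would re-derive it. The approach is a first-moment (union bound) argument over minimal bad configurations, organized by dimension.

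\textbf{Reduction to a dimension count.} If $\cC$ is not $(1-R-\varepsilon,\ell,L)$-list-recoverable, there are lists $Z_1,\dots,Z_n$ with $|Z_i|\le\ell$ and a set $X\subseteq\cC$ with $|X|=L+1$ such that every $\mathbf x\in X$ has $x_i\in Z_i$ for at least $(R+\varepsilon)n$ coordinates. Let $d=\dim\operatorname{span}(X)$. I would pick $\mathbf x_1,\dots,\mathbf x_d\in X$ greedily, always choosing a vector outside the current span. The event to bound is then that $\cC$ contains $d$ linearly independent vectors, each agreeing with the lists on $(R+\varepsilon)n$ coordinates, whose span meets the agreement region in at least $L+1$ points. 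For a random linear code, a fixed linearly independent $d$-tuple lies in $\cC$ with probability at most $q^{-(1-R)nd}$.

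\textbf{Counting.} The union bound multiplies this by two factors. The number of list choices is at most $\binom{q}{\ell}^n$. This is too large, so I would not union bound over the $Z_i$ directly. Instead, I would count tuples with recorded agreement sets $S_j$, $|S_j|\ge(R+\varepsilon)n$. There are at most $2^{dn}$ such patterns, and at most $\ell^{|S_j|}q^{n-|S_j|}$ choices for $\mathbf x_j$ given $S_j$ and the lists restricted to $S_j$. The crude bound $\ell^{dn}q^{(1-R-\varepsilon)nd}\cdot 2^{dn}\cdot q^{-(1-R)nd}$ gives an exponent of about $dn(\log_q\ell+\log_q 2-\varepsilon)$. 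The hypothesis $q\ge\max(\ell^{8R/\varepsilon+6},\ell 2^{4/\varepsilon})$ makes $\log_q(2\ell)\le\varepsilon/4$ roughly, so each of the $d$ vectors contributes a factor $q^{-\Omega(\varepsilon n)}$. The remaining dependence on the lists is the factor $\ell^n$ from fixing $Z$ on the coordinates not already determined, which the $\ell^{8R/\varepsilon}$ condition is meant to absorb.

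\textbf{Main obstacle.} The crux is the step linking $d$ to $L$. If $d$ is small, the span contains only $q^d$ vectors, but I need at least $L+1$ of them in the agreement region. The key structural lemma I would aim for is this: if a $d$-dimensional space $W$ contains many vectors agreeing with $(Z_i)$ on a $(R+\varepsilon)$ fraction of coordinates, then, projecting $W$ coordinatewise, the rows $W|_i$ meet $Z_i$ in at most $\ell$ points for each $i$. Averaging then shows that the number of such vectors is at most about $(\ell/\varepsilon)^{O(d/\varepsilon)}$, through an iterated ``fix a coordinate, the agreeing vectors lie in $\le\ell$ affine hyperplanes of $W$'' argument of depth $O(d/\varepsilon)$. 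Hence $L+1>(2\ell/\varepsilon)^{2\ell/\varepsilon}$ forces $d$ to be large relative to the lists, and the count above then beats the union over $d\le L+1$ with total failure probability $2q^{-\varepsilon n/8}$. Getting this lemma with the stated exponent $2\ell/\varepsilon$, independent of $q$, is where I expect the real work to be. I would first attempt it by induction on $d$, peeling off one coordinate at a time.
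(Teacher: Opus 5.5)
The paper gives no proof of this statement; it imports it as a black box from \cite{li2025near}. There it is obtained by combining a Zyablov--Pinsker union bound with a distance-based bound (of Kopparty--Ron-Zewi--Saraf--Wootters and Tamo type) on how many points of a low-dimensional subspace can lie in a list-recovery ball. Judged on its own, your sketch has two genuine gaps.

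\textbf{The count omits the lists.} After recording the agreement sets $S_j$ you still have to union bound over the lists $Z_i$ themselves. That costs about $\binom{q}{\ell}\approx q^{\ell}$ per coordinate, so roughly $q^{\ell n}$ overall, not $\ell^n$. No hypothesis on $q$ absorbs this: the condition $q\ge\ell^{8R/\varepsilon+6}$ only controls the per-vector factor $\ell^{(R+\varepsilon)n}$ in the size of the ball. A sanity check shows the claimed per-vector decay cannot hold. Any $\ell$ codewords lie in a list-recovery ball with full agreement, so for $d\le\ell$ your bad event is certain. The correct first-moment bound is about $q^{n(\ell-d(\varepsilon-\log_q 2-(R+\varepsilon)\log_q\ell))}$. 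This is small only once $d\ge d_0$ with $d_0$ roughly $2\ell/\varepsilon$. That threshold, not the intersection lemma, is where the exponent $2\ell/\varepsilon$ comes from.

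\textbf{The structural lemma ignores distance, and without it it is false.} If $W$ consists of vectors supported on a fixed set of $(1-R-\varepsilon)n$ coordinates and $Z_i=\{0\}$ off that set, then all $q^d$ vectors of $W$ lie in the ball. You need two additions.

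First, a separate first-moment bound showing that the random linear code has relative distance at least $1-R-\varepsilon/2$ with probability $1-q^{-\Omega(\varepsilon n)}$. This is where $q\ge 2^{4/\varepsilon}$ is used.

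Second, the lemma with distance built in: if an $r$-dimensional subspace has relative distance $\delta$ and the agreement threshold is at least $(1-\delta+\varepsilon/2)n$, then at most $(2\ell/\varepsilon)^r$ of its points lie in the ball. The proof runs a random process on the current coset. At each step it samples a coordinate only among those on which the coset is nonconstant (there are at least $\delta n$ of them, by distance), then guesses the value from $Z_i$. Every step lowers the dimension by one, so the depth is exactly $r$. A fixed codeword in the ball survives each step with probability at least $\varepsilon/(2\ell)$. Since the process ends at a single point, the bound follows.

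Your iteration branches at arbitrary coordinates to depth $O(d/\varepsilon)$. That gives only $(\ell/\varepsilon)^{O(d/\varepsilon)}$, which at $d\approx 2\ell/\varepsilon$ is $(\ell/\varepsilon)^{O(\ell/\varepsilon^2)}$ and misses the stated bound.

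With both fixes the argument closes. With high probability no ball contains $d_0$ linearly independent codewords and the code has the required distance. So the codewords in any ball span a subcode of dimension less than $d_0$, which gives $L\le(2\ell/\varepsilon)^{d_0-1}$.
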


These two results provide complementary tradeoffs between the alphabet size and the output list size: the former gives a smaller output list, while the latter applies over smaller alphabets.

\section{Random AG Codes Near the Random-Linear Threshold}
This section expresses local profile constraints as linear conditions on functions in $\mathcal{L}(mP_\infty)$ and establishes the dimension estimates used in Section~\ref{sec:threshold-guarantees}.

We use the random one-point AG code model of Definition~\ref{def:random-one-point-ag}. Let $\Omega$ denote the set of rational places of $F$ other than $P_\infty$, and write $N=|\Omega|\ge n$. When $m<n$, evaluation at $P_1,\ldots,P_n$ is injective, so the dimension and rate of the code are $k=\ell(mP_\infty)=\dim_{\F_q}\mathcal{L}(mP_\infty)$ and $R=\frac{k}{n}.$

For $a\in\N$, we call $f=(f_1,\dots,f_a)\in F^a$ \emph{$m$-bounded at $P_\infty$} if $f_i\in \mathcal{L}(mP_\infty)$ for all $i\in[a]$, and let $Q_{m,a}= \mathcal{L}(mP_\infty)^a \subseteq F^a$ be an $\F_q$-linear space. For any rational place $P\neq P_\infty$ and any $f\in Q_{m,a}$, evaluation is well-defined and we write $f(P)= (f_1(P),\dots,f_a(P))\in \F_q^a.$ Moreover, for an $\F_q$-linear subspace $S\subseteq Q_{m,a}$ we define $S(P)= \{f(P): f\in S\}\subseteq \F_q^a.$ If $A\in F^{s\times a}$ is a matrix of functions regular at $P$, we write $A(P)\in \F_q^{s\times a}$ for the entrywise evaluation. Matrices of maps between powers of $F$ are written in the standard bases.

\begin{defn}\label{def:Pinf-bounded-map}
An $F$-linear map $\psi:F^b\to F^a$ is \emph{$P_\infty$-bounded} if it can be represented by a matrix $M_\psi\in F^{a\times b}$ whose entries have no poles away from $P_\infty$. Equivalently, there exists $t\ge 0$ such that every entry of $M_\psi$ lies in $\mathcal{L}(tP_\infty)$.
We define the $P_\infty$-degree of $\psi$ by
\[
\deg_{P_\infty}(\psi):= \min\{t\ge 0:\ \text{all entries of }M_\psi\text{ lie in }\mathcal{L}(tP_\infty)\}.
\]
In particular, if $\deg_{P_\infty}(\psi)\le t$, then for every $m\ge 0$, we have $\psi(Q_{m,b})\subseteq \mathcal{L}((m+t)P_\infty)^a.$
\end{defn}

\begin{defn}\label{def:local-Pinf-profile}
A \emph{$b$-local $P_\infty$-bounded profile} is a sequence $\Psi=(\psi_1,\dots,\psi_n)$ where each $\psi_i:F^b\to F^{b_i}$ is an $F$-linear $P_\infty$-bounded map for some integer $b_i\ge0$, and we define $\deg_{P_\infty}(\Psi):= \max_{i\in[n]}\deg_{P_\infty}(\psi_i).$

Let $\mathbf{P}=(P_1,\dots,P_n)$ be a tuple of rational places in $\Omega$. For $f\in Q_{m,b}$ we say that $f$ \emph{satisfies} $(\Psi,\mathbf{P})$ if $(\psi_i(f))(P_i)=0$ for all $i\in [n]$. For an $\F_q$-linear subspace $S\subseteq Q_{m,b}$ we define the constraint $\F_q$-linear space $S[\Psi,\mathbf{P}]:= \{f\in S:\ f \text{ satisfies }(\Psi,\mathbf{P})\}.$
\end{defn}

\begin{defn}\label{def:live-dead}
For a set $T\subseteq F^b$ and an $F$-linear subspace $U\subseteq F^b$, we say that $U$ is \emph{$T$-live} if $\mathrm{span}_F(T\cap U)=U$, and \emph{$T$-dead} otherwise. Thus, $U$ is $T$-live precisely when it has an $F$-basis consisting of elements of $T$. We take $T=Q_{m,b}$ unless stated otherwise.
\end{defn}

\begin{lemma}\label{lem:specialization}
Let $S\subseteq Q_{m,b}=\mathcal{L}(mP_\infty)^b$ be a finite dimensional $\F_q$-linear subspace and let $D= \mathrm{dim}_F(\mathrm{span}_F(S))\le b$. Then the following hold.

\begin{enumerate}
\item For every rational place $P\neq P_\infty$, $\mathrm{dim}_{\F_q} S(P)\le D$.

\item Let $\Omega$ be the set of rational places of $F$ distinct from $P_\infty$ and $|\Omega|=N$. Then $\sum_{P\in\Omega}\bigl(D-\dim_{\F_q}S(P)\bigr)\le Dm.$ In particular, if $P$ is uniform over $\Omega$, then $\Pr[\dim_{\F_q}S(P)=D]\ge 1-\frac{Dm}{N}.$

\end{enumerate}
\end{lemma}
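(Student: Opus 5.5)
Part (1) is the easy direction. Choose $f^{(1)},\dots,f^{(D)}\in S$ forming an $F$-basis of $\mathrm{span}_F(S)$. Every $g\in S$ can be written as $g=\sum_j c_j f^{(j)}$ with $c_j\in F$. The evaluation map $S\to\F_q^b$, $g\mapsto g(P)$, is $\F_q$-linear. Suppose $g^{(1)},\dots,g^{(D+1)}\in S$ had linearly independent values at $P$. The $b\times(D+1)$ matrix with columns $g^{(i)}$ has $F$-rank at most $D$, so all its $(D+1)$-minors vanish identically in $F$. Since all entries are regular at $P$, each minor evaluates at $P$ to the corresponding minor of the evaluated matrix. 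Those evaluated minors would therefore all be zero, contradicting linear independence at $P$. Hence $\dim_{\F_q}S(P)\le D$.

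For part (2), fix the basis matrix $B=[f^{(1)}\cdots f^{(D)}]\in Q_{m,b}^{\,D}$, viewed as a $b\times D$ matrix. It has $F$-rank $D$, so some $D\times D$ minor $\Delta=\det B_I$, for a row set $I$, is a nonzero element of $F$. Each entry lies in $\mathcal L(mP_\infty)$, so $\Delta\in\mathcal L(DmP_\infty)$. Hence $\Delta$ has no poles outside $P_\infty$, and the degree of its zero divisor is at most $Dm$. In particular $\sum_{P\in\Omega}v_P(\Delta)\le Dm$.

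The main step is to show that $D-\dim S(P)\le v_P(\Delta)$ for each $P\in\Omega$. The naive bound only gives that if $\Delta(P)\ne0$ then $\dim S(P)=D$, since the columns $f^{(j)}(P)$ already lie in $S(P)$; that alone bounds the number of bad places, not the total loss. To get the multiplicity, I would work in the local ring $\cO_P$, which is a DVR with local parameter $t$. Consider the $\F_q$-span $S$ together with the $\cO_P$-module $\Lambda=\mathrm{span}_{\cO_P}(S)\subseteq\cO_P^b$. It is free of rank $D$ and saturated inside $\mathrm{span}_F(S)\cap\cO_P^b$ up to some index. Then $S(P)$ equals the image of $\Lambda$ in $\F_q^b=\cO_P^b/t\cO_P^b$, because evaluation of an $\cO_P$-combination is the $\F_q$-combination of the evaluations. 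By the Smith normal form over the DVR, pick an $\cO_P$-basis $h_1,\dots,h_D$ of $\Lambda$ consisting of elements of $S$, which is possible by Nakayama's lemma applied to $\Lambda/t\Lambda$. The dimension of the image of $\Lambda$ mod $t$ is the number of elementary divisors of the matrix $[h_1\cdots h_D]$ equal to units. Therefore $D-\dim S(P)$ is at most the sum of the valuations of the elementary divisors, which is $\min_I v_P(\det[h]_I)$. Finally, $[h]=B\cdot T$ with $T\in\cO_P^{D\times D}$, because the $f^{(j)}$ lie in $\Lambda$ up to re-expression. This needs care: the better move is to choose $B$ to be the basis $h$ itself at each place, but the comparison must use a single global $\Delta$.

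The resolution I would try is the following. Since the $f^{(j)}\in\Lambda$, we can write $B=[h]\,T'$ with $T'\in\cO_P^{D\times D}$. Then $v_P(\Delta)=v_P(\det[h]_I)+v_P(\det T')\ge v_P(\det[h]_I)\ge D-\dim S(P)$, which is exactly the inequality needed. Summing over $P\in\Omega$ gives $\sum_{P\in\Omega}(D-\dim S(P))\le\deg(\Delta)_0\le Dm$. The probability statement then follows by Markov's inequality: each place with $\dim S(P)<D$ contributes at least $1$, so there are at most $Dm$ such places among the $N$ places of $\Omega$.
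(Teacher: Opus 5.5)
Your proof is correct, and its global structure matches the paper's. Your $B_I$ ($D$ elements of $S$ restricted to the coordinates $I$) is, up to transposition, the same kind of nonzero $D\times D$ minor $M'$ that the paper extracts from the matrix of an $\F_q$-basis of $S$. It lies in $\mathcal{L}(DmP_\infty)$, and summing its zero multiplicities over $\Omega$ gives the bound. Part 1 is also the paper's argument in contrapositive form; the case $D=b$, where no $(D+1)$-minors exist, is trivial and should be set aside.

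The difference lies only in how you prove the local inequality $v_P(\Delta)\ge D-\dim_{\F_q}S(P)$. You pass through the lattice $\Lambda=\mathrm{span}_{\cO_P}(S)$, a Nakayama basis $h_1,\dots,h_D\in S$, the Smith normal form (so that $D-\dim S(P)$ is the number of non-unit elementary divisors, hence at most $\min_{I'}v_P(\det[h]_{I'})$), and the factorization $B=[h]T'$ with $T'\in\cO_P^{D\times D}$. All of these steps are valid.

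The paper obtains the same inequality in two lines. First, $\rank B_I(P)\le\rank B(P)\le\dim S(P)$, because the columns of $B(P)$ lie in $S(P)$. Second, a $D\times D$ matrix over $\cO_P$ whose reduction at $P$ has corank $e$ has determinant divisible by $t^{e}$, since constant column operations make $e$ columns vanish at $P$. So your remark that the naive observation ``only bounds the number of bad places'' undersells it: combined with this corank fact, it already gives the multiplicity, and no module theory is needed. Your route does yield a choice-free local statement, but that refinement is never used.

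Two cleanups. First, discard the preliminary claim $[h]=BT$. It is false in general: take $b=D=1$, $f^{(1)}$ vanishing at $P$, and $h_1=1$. Only the corrected form $B=[h]T'$, which you actually use, is right. Second, the phrase that $\Lambda$ is ``saturated up to some index'' is imprecise and can simply be dropped.
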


\begin{proof}
If $S=\{0\}$ the claim is trivial. Otherwise fix an $\F_q$-basis $s^{(1)},\dots,s^{(s)}\in \mathcal{L}(mP_\infty)^b$ of $S$ and let $A\in F^{s\times b}$ be the matrix whose $j$-th row is $s^{(j)}$. For any rational $P\neq P_\infty$, all entries of $A$ are regular at $P$, so $A(P)$ is well defined and we have $t:=\mathrm{dim}_{\F_q} S(P)=\mathrm{rank}_{\F_q}(A(P)).$ Then $A(P)$ contains a $t\times t$ minor $M(P)$ with $\det(M(P))\neq 0$. Let $M$ be the corresponding $t\times t$ minor of $A$ over $F$. Since evaluation is a ring homomorphism for functions that are regular at this point, $\det(M)(P)=\det(M(P))\neq 0$. This implies $\det(M)\in F$ is nonzero and $\mathrm{rank}_F(M)=t$. Therefore, $\mathrm{dim}_{F}(\mathrm{span}_F(S))=\mathrm{rank}_F(A)\ge t$.

On the other hand, since $\mathrm{rank}_F(A)=D$, there exists a $D\times D$ minor $M'$ of $A$ with $\det(M')\neq 0$ in $F$. As each entry of $M'$ lies in $\mathcal{L}(mP_\infty)$, each term in the determinant expansion has pole order at $P_\infty$ at most $Dm$. In particular, $\det(M')$ has no poles away from $P_\infty$ and $\deg(\det(M')_0)=\deg(\det(M')_\infty)\le Dm$.

For each $P\in\Omega$, set $e_P:=D-\dim_{\F_q}S(P).$ Then
\[
\rank_{\F_q}M'(P)
\le
\rank_{\F_q}A(P)
=
\dim_{\F_q}S(P)
=
D-e_P.
\]
Thus, after invertible row operations over the local ring at $P$, we may arrange that the last $e_P$ rows of $M'$ vanish after evaluation at $P$. Hence every entry in these rows is divisible by a local parameter at $P$. Therefore the determinant is divisible by the $e_P$-th power of this local parameter. In other words, $\det(M')$ has a zero of multiplicity at least $e_P$ at $P$.

Summing these multiplicities over all $P\in\Omega$ gives
\[
\sum_{P\in\Omega}
\bigl(D-\dim_{\F_q}S(P)\bigr)
=
\sum_{P\in\Omega}e_P
\le
\deg(\det(M')_0)
\le Dm.
\]
In particular, every $P$ for which $\dim_{\F_q}S(P)<D$ contributes at least one to the sum. Hence there are at most $Dm$ such places, and therefore $\Pr_{P\in\Omega}[\dim_{\F_q}S(P)=D]\ge 1-\frac{Dm}{N}.$
\end{proof}

\begin{cor}\label{cor:simultaneous-full-dim}
Let $S\subseteq \mathcal{L}(mP_\infty)^b$ and let $D=\mathrm{dim}_F(\mathrm{span}_F(S))$ as in Lemma~\ref{lem:specialization}. Sample $P_1,\dots,P_n$ uniformly from $\Omega$ conditioned on being pairwise distinct. Then
\[
\Pr\big[\mathrm{dim}_{\F_q} S(P_i)=D\ \text{for all}\ i\in[n]\big]\ \ge\ 1-\frac{nDm}{N-n+1}.
\]
\end{cor}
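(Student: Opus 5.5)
The plan is to derive the corollary from Lemma~\ref{lem:specialization}(2) by a union bound over the $n$ sampled places. We bound, for each $i$, the probability that $\dim_{\F_q}S(P_i)<D$, and then sum. The only extra care needed is that the places are sampled without replacement, so the marginal law of each $P_i$ must be identified.

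Let $B:=\{P\in\Omega:\dim_{\F_q}S(P)<D\}$ be the set of bad places. By Lemma~\ref{lem:specialization}(1), $\dim_{\F_q}S(P)\le D$ for every $P\in\Omega$. Hence a place $P$ is bad exactly when $e_P=D-\dim_{\F_q}S(P)\ge 1$. Lemma~\ref{lem:specialization}(2) gives $\sum_{P\in\Omega}e_P\le Dm$, so $|B|\le Dm$. This step is already recorded at the end of the proof of Lemma~\ref{lem:specialization}.

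Next, consider $P_1,\dots,P_n$ sampled uniformly without replacement. By symmetry, each $P_i$ is marginally uniform on $\Omega$, so $\Pr[P_i\in B]=|B|/N\le Dm/N$. To match the stated denominator, we can alternatively use the sequential view. Conditioned on $P_1,\dots,P_{i-1}$, the place $P_i$ is uniform over the remaining $N-i+1\ge N-n+1$ places. Therefore
\[
\Pr[P_i\in B\mid P_1,\dots,P_{i-1}]\le \frac{|B|}{N-i+1}\le\frac{Dm}{N-n+1}.
\]
Averaging over the conditioning gives $\Pr[P_i\in B]\le Dm/(N-n+1)$.

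Finally, a union bound over $i\in[n]$ gives
\[
\Pr[\exists i:\ P_i\in B]\le \frac{nDm}{N-n+1}.
\]
The complementary event is exactly that $\dim_{\F_q}S(P_i)=D$ for all $i\in[n]$, which proves the corollary. There is no real obstacle here. The only point to check is that the sampling without replacement does not break the per-place bound, and the conditional estimate above handles this. The marginal-uniformity argument even gives the stronger bound $nDm/N$.
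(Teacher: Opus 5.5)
Your proof is correct and follows essentially the same route as the paper: bound the number of bad places by $Dm$, then condition sequentially on $P_1,\dots,P_{i-1}$ and take a union bound over $i\in[n]$. The only difference is that the paper takes the bad set to be the zero set of the fixed nonzero minor $\det(M')\in\mathcal{L}(DmP_\infty)$, which contains your $B$, rather than $B$ itself; your side remark that marginal uniformity of each $P_i$ gives the sharper bound $nDm/N$ is also valid.
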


\begin{proof}
Let $M'$ be the $D\times D$ minor chosen in the proof of Lemma~\ref{lem:specialization}. Its determinant is a nonzero element of $\mathcal L(DmP_\infty)$ and therefore vanishes at at most $Dm$ rational places in $\Omega$. Conditioned on any realization of $P_1,\dots,P_{i-1}$, the next place $P_i$ is uniform over $\Omega \setminus \{P_1,\ldots,P_{i-1}\}$, hence
\[
\Pr[\det(M')(P_i)=0\mid P_1,\dots,P_{i-1}] \le \frac{Dm}{N-i+1}\le \frac{Dm}{N-n+1}.
\]
A union bound over $i\in[n]$ gives
\[
\Pr\bigl[\exists i\in[n]:\det(M')(P_i)=0\bigr]
\le \frac{nDm}{N-n+1}.
\]
On the complement, $\det (M')(P_i)\neq 0$ for all $i$ implies that $\mathrm{dim} S(P_i)=D$ for all $i$ by Lemma~\ref{lem:specialization}.
\end{proof}

\begin{defn}\label{def:lift-F}
Let $\V=(V_1,\dots,V_n)\in (\mathcal{L}(\F_q^b))^n$ be a $b$-local (linear) profile. For each $i\in[n]$, let $\varphi_i:\F_q^b\to \F_q^{\,b-\dim V_i}$ be an $\F_q$-linear map with $\mathrm{ker} \varphi_i = V_i$. Since $F/\F_q$ is a field extension, scalar extension gives canonical identifications $F\otimes_{\F_q}\F_q^b \cong F^b$. Define the $F$-linear map $\psi_i := \id_F \otimes \varphi_i : F^b \to F^{b-\dim V_i}$ and let $\Psi_\mathcal{V} := (\psi_1,\dots,\psi_n)$.
\end{defn}

\begin{lemma}\label{lem:tensor-kernel}
For each $i\in[n]$, we have $\ker(\psi_i)=F\otimes_{\F_q}V_i \subseteq F\otimes_{\F_q}\F_q^b \cong F^b.$ Moreover, in the standard bases, the matrix of $\psi_i$ has entries in $\F_q$. In particular, $\psi_i$ is $P_\infty$-bounded of degree $0$.
\end{lemma}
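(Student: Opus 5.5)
The argument is standard linear algebra over the field extension $F/\F_q$. We work with the matrix of $\varphi_i$ and use the fact that base change preserves rank.

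We begin by fixing $\varphi_i$ through its matrix $B_i\in\F_q^{(b-\dim V_i)\times b}$ in the standard bases. Under the canonical identification $F\otimes_{\F_q}\F_q^b\cong F^b$, which sends $\lambda\otimes e_j\mapsto \lambda e_j$, the map $\psi_i=\id_F\otimes\varphi_i$ sends $\lambda\otimes e_j$ to $\lambda\otimes\varphi_i(e_j)=\sum_k (B_i)_{kj}\,\lambda e_k$. Consequently $\psi_i$ acts on $F^b$ as multiplication by the same matrix $B_i$, now regarded as a matrix over $F$. This proves the second claim. The last claim then follows from Definition~\ref{def:Pinf-bounded-map}: constants in $\F_q$ lie in $\mathcal L(0\cdot P_\infty)$, since they have no poles, so $\deg_{P_\infty}(\psi_i)=0$.

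For the kernel, we first show the inclusion $F\otimes V_i\subseteq\ker\psi_i$. An element of $F\otimes V_i$ has the form $\sum\lambda_j\otimes v_j$ with $v_j\in V_i$, and $\psi_i$ sends it to $\sum\lambda_j\otimes\varphi_i(v_j)=0$. For equality we compare dimensions. The rank of $B_i$ over $\F_q$ equals its rank over $F$, because rank is determined by the vanishing or nonvanishing of minors, and minors are computed identically in either field. Since $\varphi_i$ has kernel $V_i$, we get $\rank_{\F_q}B_i=b-\dim V_i$. Hence $\dim_F\ker\psi_i=b-\rank_F B_i=\dim V_i$. On the other side, $\dim_F(F\otimes V_i)=\dim_{\F_q}V_i$, since an $\F_q$-basis of $V_i$ maps to an $F$-basis of $F\otimes V_i$ inside $F^b$ (flatness of field extension, or directly: linear independence over $\F_q$ of vectors in $\F_q^b$ is a rank condition, which is preserved over $F$). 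Equal dimensions together with the inclusion give equality.

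There is no real obstacle here. The only point needing care is the rank invariance under field extension, and the minor argument handles it.
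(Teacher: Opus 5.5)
Your proof is correct, but it reaches the kernel identity by a different route from the paper.

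The paper tensors the short exact sequence $0\to V_i\to\F_q^b\xrightarrow{\varphi_i}\F_q^{b-\dim V_i}\to 0$ with $F$ and invokes flatness of $F$ over $\F_q$. This gives $\ker(\psi_i)=F\otimes_{\F_q}V_i$ in one step, and implicitly also that $F\otimes V_i\to F^b$ is injective. The matrix claim is then simply asserted from the fact that $\varphi_i$ has $\F_q$-entries.

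You instead make the identification explicit, checking that $\id_F\otimes\varphi_i$ acts on $F^b$ by the same matrix $B_i$. You then prove $\ker\psi_i=F\otimes V_i$ by showing the inclusion and matching dimensions. For this you use that rank, and hence linear independence, is unchanged under the extension $\F_q\subseteq F$ because it is detected by minors, together with rank--nullity for $\varphi_i$.

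Your argument is more hands-on and self-contained: it needs only determinants, and it verifies the matrix description that the paper takes for granted. The paper's argument is shorter and more conceptual, but it relies on flatness as a black box.
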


\begin{proof}
Consider the short exact sequence of $\F_q$-vector spaces
\[
0 \longrightarrow V_i \longrightarrow \F_q^b
\xrightarrow{\;\varphi_i\;} \F_q^{b-\dim V_i}\longrightarrow 0.
\]
Since $F$ is flat over $\F_q$, tensoring with $F$ preserves exactness and yields
\[
0 \longrightarrow F\otimes_{\F_q}V_i
\longrightarrow F\otimes_{\F_q}\F_q^b
\xrightarrow{\;\id_F\otimes \varphi_i\;}
F\otimes_{\F_q}\F_q^{b-\dim V_i}\longrightarrow 0.
\]
Under the standard identifications $F\otimes_{\F_q}\F_q^b \cong F^b$ and $F\otimes_{\F_q}\F_q^{b-\dim V_i}\cong F^{b-\dim V_i},$ this is precisely the sequence defining $\psi_i$, so $\ker(\psi_i)=F\otimes_{\F_q}V_i.$

Since the matrix of $\varphi_i$ has entries in $\F_q$, the same is true for $\psi_i$. Thus, every matrix entry is constant, hence regular away from $P_\infty$, and therefore $\deg_{P_\infty}(\psi_i)=0$.
\end{proof}

\begin{lemma}\label{lem:lcl-to-Pinf}
Let $\V=(V_1,\dots,V_n)\in \bigl(\mathcal{L}(\F_q^b)\bigr)^n$ be a $b$-local linear profile and let $\Psi_\V=(\psi_1,\dots,\psi_n)$ be the associated $b$-local $P_\infty$-bounded profile obtained by scalar extension, as in Definition~\ref{def:lift-F} and Lemma~\ref{lem:tensor-kernel}. Then $\deg_{P_\infty}(\Psi_\V)=0$, and the following hold.

\begin{enumerate}
\item
Let $m<n$, let $\mathbf{P}=(P_1,\ldots,P_n)$ consist of distinct rational places of $F$ other than $P_\infty$ and set $\cC=\mathrm{ev}_{\mathbf{P}}(\mathcal{L}(mP_\infty))$. Evaluation induces an $\F_q$-linear isomorphism
\[
\mathrm{ev}_{\mathbf{P}}\colon
Q_{m,b}[\Psi_\V,\mathbf{P}]
\xrightarrow{\sim}
\cC^b\cap\mathcal{M}_\V,
\qquad
(f_1,\ldots,f_b)
\longmapsto
\bigl(f_j(P_i)\bigr)_{i\in[n],\,j\in[b]}.
\] 
In particular, $\cC$ contains $\V$ if and only if $Q_{m,b}[\Psi_\V,\mathbf{P}]$ contains a tuple of pairwise distinct functions.

\item Let $U\subseteq F^b$ be a $Q_{m,b}$-live $F$-linear subspace and let $D=\dim_F(U)$. Then there exists an $\F_q$-linear subspace $U'\subseteq\F_q^b$ with $\dim_{\F_q}(U')=D$ such that
\[
\sum_{i=1}^n\dim_F(\psi_i(U))
\ge
\sum_{i=1}^n
\bigl(D-\dim_{\F_q}(U'\cap V_i)\bigr).
\]
\end{enumerate}
\end{lemma}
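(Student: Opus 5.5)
Part (1) is a direct unwinding of definitions. I would first note that since $m<n$ and the $P_i$ are distinct, a nonzero $f\in\mathcal L(mP_\infty)$ has at most $m<n$ zeros among $P_1,\dots,P_n$. Hence $\mathrm{ev}_{\mathbf P}$ is injective on $\mathcal L(mP_\infty)$, and therefore on $Q_{m,b}$ coordinatewise, with image $\cC^b$ viewed as $n\times b$ matrices with columns in $\cC$. Next I would check that $f$ satisfies $(\Psi_\V,\mathbf P)$ iff each row $f(P_i)$ lies in $V_i$. Because the matrix of $\psi_i$ has entries in $\F_q$ (Lemma~\ref{lem:tensor-kernel}), evaluation commutes with $\psi_i$: $(\psi_i(f))(P_i)=\varphi_i(f(P_i))$. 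This vanishes iff $f(P_i)\in\ker\varphi_i=V_i$. So the image of $Q_{m,b}[\Psi_\V,\mathbf P]$ is exactly $\cC^b\cap\mathcal M_\V$. Finally, by injectivity, distinct functions $f_j\ne f_{j'}$ correspond exactly to distinct columns, which gives the ``in particular'' claim.

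For part (2), the idea is to specialize the $F$-subspace $U$ to an $\F_q$-subspace using \emph{one} generic place, but without requiring that the evaluation preserve dimension. To avoid any condition like $N>mb$, I would specialize in a different way: choose a $Q_{m,b}$-basis $u_1,\dots,u_D$ of $U$ (which exists by liveness) and take a nonzero $D\times D$ minor of the $D\times b$ matrix $A$ with rows $u_j$. After column permutation and row reduction over $F$, $U$ has a basis in reduced echelon form: the matrix $[I_D\mid B]$ up to column permutation, with $B\in F^{D\times(b-D)}$. The plan is to replace the entries of $B$ by constants while keeping $\dim\psi_i(U)$ from increasing. The cleanest route is as follows. $\dim_F\psi_i(U)=D-\dim_F(U\cap (F\otimes V_i))$, so the inequality is equivalent to
\[
\sum_i \dim_F\bigl(U\cap(F\otimes V_i)\bigr)\le\sum_i\dim_{\F_q}(U'\cap V_i).
\]
Each condition $\dim(U\cap(F\otimes V_i))\ge r_i$ is a Zariski-closed condition on the Grassmannian point of $U$, defined over $\F_q$ by rank conditions on the $2D$ or so rows formed by stacking $U$'s basis with a basis of $V_i$. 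So I want an $\F_q$-rational point $U'$ lying in the intersection of these closed sets.

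To produce such a point, I would use a place-free specialization. The entries of $B$ lie in $F$, and the field $\F_q(B)$ is finitely generated over $\F_q$. Take $K=\F_q(\text{entries of }B)$. Since $F$ has full constant field $\F_q$, every rank condition is given by minors that are polynomials over $\F_q$ in the entries of $B$. Vanishing of all $(D+b-\dim V_i-r_i+1)$-minors (the condition $\rank\le D+\dim V_i-r_i$ on the stacked matrix) is preserved under any ring homomorphism $\F_q[B]\to\F_q$. Such a homomorphism is evaluation at a rational place $P$ where all entries of $B$ are regular; the rank of the stacked matrix can only drop, so $\dim(U'\cap V_i)$ can only grow, where $U'$ is the row space of $[I_D\mid B(P)]$. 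Crucially, the identity block keeps $\dim U'=D$ \emph{automatically}. This is the point that removes $N>mb$: we only need a rational place where $B$ is regular, not one where a determinant is nonzero.

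The main obstacle is guaranteeing such a place exists. The entries of $B$ are quotients by the chosen minor $\det M'$, which may vanish at every rational place in $\Omega$. I would first try to avoid this by choosing the place $P_\infty$ itself after rescaling, or more robustly by using the valuation ring at any place $P$, rational or not. If $P$ is a place of degree $>1$, the residue field is $\F_{q^d}$, giving $U'$ over $\F_{q^d}$, which is wrong. Instead I would handle poles by working with the Plücker-style normalization: choose the pivot columns of the echelon form at $P$ to be those of a $D\times D$ minor of $A$ with minimal valuation at $P$, so that $B$ is regular at $P$. With $P=P_\infty$, which is rational, this makes every entry of $B$ $P_\infty$-regular, and reduction modulo $P_\infty$ gives the desired $U'$ over $\F_q$. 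I expect verifying this minimal-valuation pivot choice carefully (via Cramer's rule) to be the delicate step.
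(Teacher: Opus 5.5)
Your proposal is correct and follows the paper's route. Part (1) is the same unwinding via $(\psi_i(f))(P_i)=\varphi_i(f(P_i))$. For part (2), your echelon basis with pivots at a minor of minimal $P_\infty$-valuation (where Cramer's rule makes $B$ regular at $P_\infty$) is a coordinate version of the paper's saturated lattice $\Lambda=U\cap\mathcal{O}_\infty^b$, whose reduction $U'=\Lambda(P_\infty)$ has dimension $D$; both arguments then finish because rank can only drop under reduction at the rational place $P_\infty$.

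The only slip is cosmetic: your stacked matrix has $D+\dim V_i$ rows, so the relevant minors have size $D+\dim V_i-r_i+1$ rather than $D+b-\dim V_i-r_i+1$, which matches the rank condition you state in parentheses.
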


\begin{proof}
Since each $\psi_i$ has coefficients in $\F_q$, we have $\deg_{P_\infty}(\Psi_\V)=0$. Since $m<n$, evaluation gives an $\F_q$-linear isomorphism $Q_{m,b}\to\cC^b$. For $f=(f_1,\ldots,f_b)\in Q_{m,b}$ and every $i\in[n]$, we have $(\psi_i(f))(P_i)=\varphi_i(f(P_i))$. Together with $\ker(\varphi_i)=V_i$, this gives
\[
(\psi_i(f))(P_i)=0
\quad\Longleftrightarrow\quad
f(P_i)\in V_i.
\]
The vector $f(P_i)$ is the $i$-th row of the evaluation matrix. Thus $f\in Q_{m,b}[\Psi_\V,\mathbf{P}]$ if and only if $\mathrm{ev}_{\mathbf{P}}(f)\in\mathcal{M}_\V$, which proves the asserted isomorphism.

The $j$-th column of the evaluation matrix is the evaluation vector of $f_j$. By injectivity of evaluation, these columns are pairwise distinct if and only if the functions $f_1,\ldots,f_b$ are pairwise distinct. This proves the claim about containment of $\V$.

For the second claim, let $U$ be $Q_{m,b}$-live and let $D=\dim_F(U)$. Let $\mathcal O_\infty$ be the local ring at $P_\infty$, and set $\Lambda:=U\cap\mathcal O_\infty^b.$ Every vector in $U$ can be multiplied by a suitable power of a local parameter at $P_\infty$ so that all of its coordinates are regular at $P_\infty$. Hence $\mathrm{span}_F(\Lambda)=U.$ Since $\mathcal O_\infty$ is a discrete valuation ring, $\Lambda$ is a free $\mathcal O_\infty$-module of rank $D$.

Moreover, reduction at $P_\infty$ preserves this rank. Indeed, if $s\in\Lambda$ vanishes at $P_\infty$, then $s=\pi s'$ for a local parameter $\pi$ and some $s'\in\mathcal O_\infty^b$. Since $U$ is $F$-linear, $s'\in U$ and hence $s'\in\Lambda$. Thus the kernel of reduction is exactly $\pi\Lambda$, and therefore $\dim_{\F_q}\Lambda(P_\infty)=D.$ Set $U':=\Lambda(P_\infty)\subseteq\F_q^b.$

Fix $i\in[n]$. Since $\ker(\varphi_i)=V_i$, rank nullity over $\F_q$ gives $\dim_{\F_q}(U'\cap V_i)=D-\dim_{\F_q}(\varphi_i(U')).$ By the definition of $U'$ and the fact that $\psi_i$ has constant coefficients, $\varphi_i(U')=\psi_i(\Lambda)(P_\infty).$

Therefore, $\dim_{\F_q}(U'\cap V_i)=D-\dim_{\F_q}\bigl(\psi_i(\Lambda)(P_\infty)\bigr).$ Since evaluation of a matrix whose entries are regular at $P_\infty$ cannot increase its rank, $\dim_{\F_q}\bigl(\psi_i(\Lambda)(P_\infty)\bigr)\le\dim_F\mathrm{span}_F\bigl(\psi_i(\Lambda)\bigr).$ Since $\mathrm{span}_F(\Lambda)=U$ and $\psi_i$ is $F$-linear, $\mathrm{span}_F\bigl(\psi_i(\Lambda)\bigr)=\psi_i(U).$ Hence, $\dim_F(\psi_i(U))\ge D-\dim_{\F_q}(U'\cap V_i).$ Summing over $i\in[n]$ gives $\sum_{i=1}^n\dim_F(\psi_i(U))\ge\sum_{i=1}^n\bigl(D-\dim_{\F_q}(U'\cap V_i)\bigr).$
\end{proof}

\begin{rmk}\label{rem:Phi-via-strength}
Let $U$ and $U'$ be as in Lemma~\ref{lem:lcl-to-Pinf}, and write $D=\mathrm{dim}_F(U)=\mathrm{dim}_{\F_q}(U')$. For any $R\in[0,1]$, the potential function on the finite field space $U'\subseteq \F_q^b$ satisfies
\[
\begin{aligned}
\Phi(\V,U',R)
&=\sum_{i=1}^n \mathrm{dim}(V_i\cap U')-(1-R)nD
=RnD-\sum_{i=1}^n\big(D-\mathrm{dim}(U'\cap V_i)\big)\\
&\ge RnD-\sum_{i=1}^n\dim_F(\psi_i(U)).
\end{aligned}
\]
\end{rmk}

\section{Rate thresholds for random one-point AG codes}\label{sec:threshold-guarantees}
We now show that a random one-point AG code satisfies an LCL property with low probability when its rate is sufficiently below the corresponding threshold for random linear codes, and with high probability when its rate is sufficiently above this threshold.

\begin{thm}\label{thm:threshold-AG-codes}
Let $b\ge2$ and let $q>\binom{b}{2}$ be a prime power. Let $\mathcal P$ be a $b$-LCL property of codes in $\F_q^n$, with nonempty profile family $\mathcal F$ and RLC threshold $R_{\mathcal P}$ as in~\eqref{eq:R_P}.

Let $F/\F_q$ have genus $\mathfrak g$ and let $\cC$ be a random one-point AG code of rate $R$ as in Definition~\ref{def:random-one-point-ag} with $0\le m<n\le N$. For every $\varepsilon>0$, the following hold.
\begin{enumerate}
\item \textbf{Below the threshold.} If $R\le R_{\mathcal P}-\varepsilon$ and $\varepsilon n\ge2b(b+1)$, then
\[
\Pr[\cC\in\mathcal P]
\le |\mathcal F|\,(2^b-1)
\left(\frac{(4b)^{4b}m}{\varepsilon(N-n+1)}\right)^{\varepsilon n/(2b)}.
\]
\item \textbf{Above the threshold.}
If $R\ge R_{\mathcal P}+\varepsilon$ and $\varepsilon n\ge 4b^2(b+1)$, then
\[
\Pr[\mathcal C\notin\mathcal P]
\le
\binom{b}{2}\,b(2^b-1)
\left(
\frac{2b(4b)^{4b}m}
     {\varepsilon(N-n+1)}
\right)^{\varepsilon n/(4b^2)}.
\]
\end{enumerate}
\end{thm}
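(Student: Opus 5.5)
The proof splits into the two regimes; the above-threshold part reuses the below-threshold machinery. The intermediate result I would establish first is the analogue of Theorem~\ref{thm:C_contain_Pr} mentioned in the overview.

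\emph{Core lemma.} Let $\Psi$ be a $b$-local $P_\infty$-bounded profile with small degree. Suppose every nonzero $Q_{m,b}$-live $F$-subspace $U$ satisfies
\[
Rn\dim U-\sum_i\dim_F\psi_i(U)\le-\delta n .
\]
Then I want to show that $Q_{m,b}[\Psi,\mathbf P]\ne\{0\}$ with probability at most $(2^b-1)(c_b m/(\delta(N-n+1)))^{\delta n/(2b)}$.

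I would prove this by induction on $b$, following the Reed--Solomon argument of~\cite{levi2024random}. Reveal the places $P_1,\dots,P_n$ one at a time. Track the $\F_q$-space $S_i=Q_{m,b}[\Psi,(P_1,\dots,P_i)]$ and the dimension $D_i$ of its $F$-span $U_i$.

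Each step should lose $\dim_F\psi_i(U_{i-1})$ dimensions of $S$, up to a specialization defect. By Lemma~\ref{lem:specialization}(2) the total defect over $\Omega$ is at most $Dm$ (inflated by the degree of $\psi_i$). Sampling without replacement therefore makes each defect step occur with probability at most $O(bm)/(N-n+1)$.

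Starting from $\dim_{\F_q}Q_{m,b}=Rnb$ and reaching a nonzero space at the end with negative potential $-\delta n$ forces at least $\delta n/(2b)$ defect steps. I would bound this by $\binom{n}{\delta n/(2b)}$ times a product of conditional probabilities, which yields the displayed power.

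Whenever $U_i$ drops, pass to the quotient using Lemma~\ref{lem:quotient-PsiV}, which keeps coefficients regular at the sampled places. Then recurse, with a union bound over the at most $2^b-1$ possible "shapes" of the drop. This bookkeeping is the main obstacle. The potential must be kept negative on quotients, and the pole-order growth of the quotient maps must be absorbed into the constant $(4b)^{4b}$.

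\emph{Below threshold.} Fix $\V\in\mathcal F$ and take $\Psi_\V$ from Definition~\ref{def:lift-F}. For any live $U$, Lemma~\ref{lem:lcl-to-Pinf}(2) and Remark~\ref{rem:Phi-via-strength} give an $\F_q$-space $U'$ with
\[
RnD-\sum_i\dim\psi_i(U)\le\Phi(\V,U',R).
\]
Since $R\le R_{\mathcal P}-\varepsilon$, the RLC threshold definition gives $\Phi(\V,U',R)\le -\varepsilon n\dim U'$ on some chain. A little care is needed here: $U'$ need not lie in $\mathcal L_{\mathrm{Dist}}$. So I would apply the core lemma to the subspace of tuples where the relevant columns coincide, or restrict to the minimal subspace and use $\varepsilon n\ge 2b(b+1)$ to absorb the $b^2$ slack.

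By Lemma~\ref{lem:lcl-to-Pinf}(1), $\cC$ contains $\V$ only if $Q_{m,b}[\Psi_\V,\mathbf P]$ is nonzero with distinct entries. The core lemma and a union bound over $\mathcal F$ then give part~1.

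\emph{Above threshold.} Pick $\V$ and $U\in\mathcal L_{\mathrm{Dist}}$ attaining $R_{\mathcal P}$. Let $K_U=\mathcal C^b\cap\{A:A_{i,*}\in V_i\cap U\}$. Counting constraints gives
\[
\dim K_U\ge k\dim U-\sum_i(\dim U-\dim(V_i\cap U)).
\]
For each pair $j<j'$, let $W=U\cap\{x_j=x_{j'}\}$, which is a proper subspace since $U\in\mathcal L_{\mathrm{Dist}}$. I would prove Lemma~\ref{lem:ag-constrained-dimension} in the form
\[
\dim K_W\le kd-\sum_i(d-\dim(V_i\cap W))+\varepsilon n/2 .
\]
To do so, choose a chain $0=W_0\subsetneq\dots\subsetneq W_s=W$ through the minimizers of the potential. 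On each quotient, choose a subcode $\mathcal C_j=\mathrm{ev}(\mathcal L(m_jP_\infty))$ with $m_j$ tuned so that the induced profile has potential $\le-\varepsilon n/(2b)$ at rate $k_j/n$. Exceeding the dimension bound would force a nonzero element of $\mathcal C_j^{d_j}$ in the induced profile, which the core lemma rules out except with the stated probability.

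Since $R\ge R_{\mathcal P}+\varepsilon$ gives $M(\V,U,R)\ge\varepsilon n$, we get $\dim K_W<\dim K_U$ for all $\binom b2$ pairs simultaneously. This uses a union bound over the pairs and the chain steps, which is where the factor $\binom b2 b(2^b-1)$ comes from. A vector space over $\F_q$ is not a union of at most $\binom b2<q$ proper subspaces. Hence $K_U$ contains a matrix with distinct columns, and so $\cC\in\mathcal P$.
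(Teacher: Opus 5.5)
\textbf{Part 2 is fine; part 1 has a genuine gap.} Your treatment of part~2 matches the paper: the space $K_U$, the collision subspaces, a chain with subcodes $\mathcal C_j$ fed into the core bound, and covering with $q>\binom b2$.

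In part~1 you apply the core lemma directly to $\Psi_\V$, i.e.\ you bound $\Pr[Q_{m,b}[\Psi_\V,\mathbf P]\neq\{0\}]$. For typical profiles that event is certain. For a list-decoding profile every $V_i$ contains $\mathbf 1=(1,\dots,1)$. So $\Phi(\V,\mathrm{span}(\mathbf 1),R)=Rn>0$, and every constant tuple $(f,\dots,f)$ with $f\in\mathcal L(mP_\infty)$ lies in $Q_{m,b}[\Psi_\V,\mathbf P]$.

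The assumption $R\le R_\V-\varepsilon$ only says that each $U\in\mathcal L_{\mathrm{Dist}}(\F_q^b)$ has some $W\subsetneq U$ with $\Phi(\V,U,R)-\Phi(\V,W,R)\le-\varepsilon n(\dim U-\dim W)$. It gives no sign condition on $\Phi(\V,U',R)$ itself. Moreover, in the core lemma $U'$ is dictated by $U$ through Lemma~\ref{lem:lcl-to-Pinf}(2), so you cannot choose it.

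Your suggested repairs do not fix this. Restricting to tuples with coinciding columns isolates exactly the matrices that do \emph{not} witness containment. The $b^2$ slack comes from the RLC bound in Proposition~\ref{prop:M} and plays no role here. The condition $\varepsilon n\ge2b(b+1)$ is there only to guarantee $\lambda\ge2D_W(D_W+1)$ in Lemma~\ref{lem:probably-trivial}.

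\textbf{The missing idea is Lemma~\ref{lem:4.5} followed by a projection.} Choose $W\notin\mathcal L_{\mathrm{Dist}}(\F_q^b)$ maximizing $\Phi(\V,\cdot,R_\V)$ over \emph{all} subspaces. Let $\varphi\colon\F_q^b\to\F_q^{b-d}$ have kernel $W$, and set $\V'=(\varphi(V_i))_i$.
\begin{itemize}
\item A distinct-column witness $A\subseteq\cC$ for $\V$ maps to a \emph{nonzero} matrix satisfying $\V'$ with columns in $\cC$. Otherwise all rows of $A$ lie in $W$, forcing two equal columns.
\item For nonzero $U'$ and $U=\varphi^{-1}(U')\supseteq W$, one has $\Phi(\V',U',R)=\Phi(\V,U,R)-\Phi(\V,W,R)$.
\item Global maximality of $W$ at $R_\V$, together with $R_\V-R\ge\varepsilon$, then gives $\Phi(\V',U',R)\le-\varepsilon n\dim U'$ for every nonzero $U'$.
\end{itemize}
This is exactly the hypothesis of Theorem~\ref{thm:C_contain_Pr}, which you then apply to $\V'$ with $b-d$ columns.

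\textbf{A smaller error in part 2.} Your intermediate bound $\dim K_W\le\Phi(\V,W,R)+\varepsilon n/2$ is false in general. For every $T\subseteq W$ we have $K_T\subseteq K_W$ and, deterministically, $\dim K_T\ge\Phi(\V,T,R)$. What the chain of slope minimizers actually yields is $\dim K_W\le\max_{T\subseteq W}\Phi(\V,T,R)+(\lambda+1)\dim W$, where $(\lambda+1)\dim W\le\varepsilon n/2+b$. This still suffices, because the choice of $U$ gives $\max_{T\subseteq H_{u,v}}\Phi(\V,T,R)\le\Phi(\V,U,R)-\varepsilon n$.
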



We prove part~1 in Section~\ref{subsec:below-threshold} and part~2 in Section~\ref{subsec:above-threshold}. The list-decoding and list-recovery consequences follow in Section~\ref{subsec:AG-applications}.

\subsection{Below the random linear codes threshold}\label{subsec:below-threshold}

Fix an integer $m\ge0$ and $b\in\N$. Recall $Q_{m,b}=\mathcal{L}(mP_\infty)^b\subseteq F^b$. Let $\Psi=(\psi_1,\dots,\psi_n)$ be a $b$-local $P_\infty$-bounded profile and let $h= \deg_{P_\infty}(\Psi)$. Let $\Omega$ be any set of rational places of $F$ distinct from $P_\infty$ and $N=|\Omega|\ge n$. We sample $\mathbf{P}=(P_1,\dots,P_n)$ uniformly from $\Omega$ conditioned on being pairwise distinct. For an $\F_q$-linear subspace $S\subseteq Q_{m,b}$, recall that
\[
S[\Psi,\mathbf{P}]=\{f\in S:\ (\psi_i(f))(P_i)=0\ \text{for all }i\in[n]\}.
\]
Our goal is to derive an upper bound on $\Pr\bigl[\dim_{\F_q}(S[\Psi,\mathbf{P}])>0\bigr]$.


\begin{lemma}\label{lem:one-step-drop}
Let $S\subseteq Q_{m,b}$ be a finite-dimensional $\F_q$-linear space, let $W\subseteq F^b$ be an $F$-linear subspace. Fix $i\in[n]$, and let $P\in \Omega$ be any rational place. Let $S' = \{f\in S : (\psi_i(f))(P)=0\},$ and then we have $\mathrm{dim}_{\F_q}(S\cap W)-\mathrm{dim}_{\F_q}(S'\cap W)\ \le\ \mathrm{dim}_F(\psi_i(W)).$
\end{lemma}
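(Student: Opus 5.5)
The plan is to realize the drop in dimension as the kernel of a single $\F_q$-linear map and to bound the rank of that map by $\dim_F(\psi_i(W))$. Write $h=\deg_{P_\infty}(\psi_i)$, and note that $S\cap W$ is a finite-dimensional $\F_q$-space. Consider the $\F_q$-linear evaluation map
\[
\theta\colon S\cap W\to \F_q^{b_i},\qquad f\mapsto (\psi_i(f))(P).
\]
This map is well defined because every $f\in S\subseteq Q_{m,b}$ has coordinates in $\mathcal L(mP_\infty)$. Since $\psi_i$ is $P_\infty$-bounded, $\psi_i(f)\in\mathcal L((m+h)P_\infty)^{b_i}$, which is regular at $P\neq P_\infty$. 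By definition, $\ker\theta=S'\cap W$. Rank--nullity then gives
\[
\dim_{\F_q}(S\cap W)-\dim_{\F_q}(S'\cap W)=\dim_{\F_q}\theta(S\cap W),
\]
so it suffices to show that $\dim_{\F_q}\theta(S\cap W)\le \dim_F(\psi_i(W))$.

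For the rank bound, let $T:=\psi_i(S\cap W)\subseteq \mathcal L((m+h)P_\infty)^{b_i}$. This is a finite-dimensional $\F_q$-linear space, and $\theta(S\cap W)=T(P)$. We apply Lemma~\ref{lem:specialization}(1) to $T$, with $b$ replaced by $b_i$ and $m$ replaced by $m+h$. The proof of that part only uses that the entries are regular at $P$, so it applies verbatim. It gives
\[
\dim_{\F_q}T(P)\le \dim_F\mathrm{span}_F(T).
\]
Finally, $T\subseteq\psi_i(W)$ because $S\cap W\subseteq W$. Since $\psi_i(W)$ is an $F$-subspace, $\mathrm{span}_F(T)\subseteq\psi_i(W)$, and therefore $\dim_F\mathrm{span}_F(T)\le\dim_F(\psi_i(W))$. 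Chaining these inequalities gives the claim.

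The only point needing care is the appeal to Lemma~\ref{lem:specialization}(1) for functions in $\mathcal L((m+h)P_\infty)$ rather than $\mathcal L(mP_\infty)$. I expect this to go through directly: the argument there (a nonzero $t\times t$ minor of $A(P)$ lifts to a nonzero minor over $F$, because evaluation at $P$ is a ring homomorphism on $\cO_P$) uses only regularity at $P$. If one prefers not to re-invoke the lemma, the same minor argument can be carried out directly on a matrix whose rows form an $\F_q$-basis of $T$.
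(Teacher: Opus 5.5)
Your proposal is correct and follows the paper's proof essentially verbatim: the paper also bounds the drop by the rank of the evaluation map $f\mapsto(\psi_i(f))(P)$ on $S\cap W$. It then applies Lemma~\ref{lem:specialization} to $\psi_i(S\cap W)\subseteq\mathcal{L}((m+h)P_\infty)^{b_i}$ and uses $\mathrm{span}_F(\psi_i(S\cap W))\subseteq\psi_i(W)$. Your observation that part (1) of that lemma uses only regularity at $P$ is exactly what justifies applying it with $m+h$ and $b_i$ in place of $m$ and $b$.
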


\begin{proof}
Consider the $\F_q$-linear map $\phi_{i,P}:S\cap W \to \F_q^{b_i}$ with $\phi_{i,P}(f)= (\psi_i(f))(P).$ By definition, $\mathrm{ker}(\phi_{i,P})=S'\cap W$, hence $\mathrm{dim}_{\F_q}(S\cap W)-\mathrm{dim}_{\F_q}(S'\cap W)\le \mathrm{dim}_{\F_q}(\phi_{i,P}(S\cap W)).$

As $S\subseteq Q_{m,b}$ and $\deg_{P_\infty}(\psi_i)\le h$, we have $\psi_i(S\cap W)\subseteq \mathcal{L}((m+h)P_\infty)^{b_i}$ and $\phi_{i,P}(S\cap W)=\psi_i(S\cap W)(P)$ entrywise. Therefore, $\mathrm{dim}_{\F_q}(\phi_{i,P}(S\cap W))=\mathrm{dim}_{\F_q}(\psi_i(S\cap W)(P))$. Applying Lemma~\ref{lem:specialization} to the $\F_q$-linear space $\psi_i(S\cap W)$ yields
\[
\mathrm{dim}_{\F_q}(\psi_i(S\cap W)(P))\le \mathrm{dim}_F(\mathrm{span}_F(\psi_i(S\cap W)))\le \mathrm{dim}_F(\psi_i(W)),
\]
which completes this proof.
\end{proof}

\begin{defn}\label{def:potential}
Let $S\subseteq Q_{m,b}$ be a finite dimensional $\F_q$-linear space, let $W\subseteq F^b$ be an $F$-linear subspace and let $\Psi=(\psi_1,\dots,\psi_n)$ be a $b$-local $P_\infty$-bounded profile. Define $d_W(S):= \mathrm{dim}_{\F_q}(S\cap W)$ and $\Pi_W(S,\Psi):= d_W(S)-\sum_{i=1}^n \dim_F(\psi_i(W)).$
\end{defn}

Let $\mathbf{P}=(P_1,\dots,P_n)$ be sampled from $\Omega$ without replacement. For $1\le i\le n$, write $\mathbf{P}_{1:i}:= (P_1,\dots,P_i)$ and $\Psi_{1:i}:= (\psi_1,\dots,\psi_i)$. We define $S_0=S$ and $S_i := S[\Psi_{1:i},\mathbf{P}_{1:i}]$ for $1\leq i\leq n$. A straightforward calculation implies that $S=S_0\supseteq S_1\supseteq \cdots \supseteq S_n=S[\Psi,\mathbf{P}]$.

\begin{lemma}\label{lem:potential-lower}
For every $F$-linear $W\subseteq F^b$ and every realization of $\mathbf{P}$, we have $d_W(S[\Psi,\mathbf{P}])\ \ge\ \Pi_W(S,\Psi).$ In particular, if $\Pi_W(S,\Psi)>0$ then $\dim_{\F_q}(S[\Psi,\mathbf{P}]\cap W)>0$ and $S[\Psi,\mathbf{P}]\neq\{0\}$.
\end{lemma}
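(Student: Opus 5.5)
The plan is to telescope along the chain $S=S_0\supseteq S_1\supseteq\cdots\supseteq S_n=S[\Psi,\mathbf{P}]$ and apply Lemma~\ref{lem:one-step-drop} at each step. Fix an $F$-linear subspace $W\subseteq F^b$ and a realization of $\mathbf P$.

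The first step is to identify each $S_i$ as a one-constraint restriction of $S_{i-1}$. Since the constraints at indices $1,\dots,i-1$ are already imposed, unwinding the definition $S_i=S[\Psi_{1:i},\mathbf P_{1:i}]$ gives, for each $1\le i\le n$,
\[
S_i=\{f\in S_{i-1}:(\psi_i(f))(P_i)=0\}.
\]
This is exactly the form of $S'$ in Lemma~\ref{lem:one-step-drop}, applied to the finite-dimensional space $S_{i-1}\subseteq Q_{m,b}$, the index $i$, and the place $P=P_i\in\Omega$.

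The second step is to read off the per-step bound. That lemma gives
\[
d_W(S_{i-1})-d_W(S_i)\le \dim_F(\psi_i(W)).
\]
Summing over $i=1,\dots,n$, the left side telescopes to $d_W(S)-d_W(S[\Psi,\mathbf P])$. Rearranging yields $d_W(S[\Psi,\mathbf P])\ge d_W(S)-\sum_i\dim_F(\psi_i(W))=\Pi_W(S,\Psi)$, as in Definition~\ref{def:potential}. No randomness is used, so the bound holds for every realization.

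For the ``in particular'' clause, suppose $\Pi_W(S,\Psi)>0$. Then $\dim_{\F_q}(S[\Psi,\mathbf P]\cap W)=d_W(S[\Psi,\mathbf P])>0$, so this intersection contains a nonzero element, and hence $S[\Psi,\mathbf P]\ne\{0\}$.

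There is no real obstacle here. The only point needing care is that Lemma~\ref{lem:one-step-drop} is stated for an arbitrary finite-dimensional $S\subseteq Q_{m,b}$ and arbitrary $P\in\Omega$, so it applies to each intermediate $S_{i-1}$ without any distinctness or genericity condition on the places.
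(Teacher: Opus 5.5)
Your proposal is correct and follows the paper's proof: you telescope the one-step bound of Lemma~\ref{lem:one-step-drop} along the chain $S_0\supseteq S_1\supseteq\cdots\supseteq S_n$, sum over $i$, and read off the ``in particular'' clause from $d_W(S[\Psi,\mathbf P])>0$. Your remark that the lemma applies to each intermediate $S_{i-1}$ for an arbitrary place $P_i\in\Omega$ is the only point needing care, and it is handled correctly.
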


\begin{proof}
By Lemma~\ref{lem:one-step-drop}, for each $i\in [n]$, we have $d_W(S_{i-1})-d_W(S_i)\le \mathrm{dim}_F(\psi_i(W)).$ Summing over $i$ yields
\[
\begin{aligned}
d_W(S_n)
&=d_W(S_0)-\sum_{i=1}^n\big(d_W(S_{i-1})-d_W(S_i)\big) \\
&\ge d_W(S)-\sum_{i=1}^n \mathrm{dim}_F(\psi_i(W)) \\
&= d_W(S)-\sum_{i=1}^n \dim_F(\psi_i (W))=\Pi_W(S,\Psi).
\end{aligned}
\]
The proof is completed.
\end{proof}

\begin{lemma}\label{lem:very-negative-implies-dead}
Let $S\subseteq Q_{m,b}$ be a finite-dimensional $\F_q$-linear space and let $\Psi$ be a $b$-local $P_\infty$-bounded profile with $h\ge\deg_{P_\infty}(\Psi)$. Let $W\subseteq F^b$ be an $F$-linear subspace with $D:=\mathrm{dim}_F(W)$, and let $L>0$. Assume $\Pi_W(S,\Psi)\ \le\ -L.$ Then, sampling $\mathbf P=(P_1,\ldots,P_n)$ uniformly from $\Omega$ conditioned on being pairwise distinct, for every $0\le L'\le L$, let $E_{L'}$ be the event that at least one of the following holds:
\begin{enumerate}
\item $W$ is $S[\Psi,\mathbf{P}]$-live; or
\item letting $s\in\{0,1,\dots,n\}$ be the smallest index such that $W$ is
$S[\Psi_{1:s},\mathbf{P}_{1:s}]$-dead, we have $\Pi_W\!\bigl(S[\Psi_{1:s},\mathbf{P}_{1:s}],\,\Psi_{s+1:n}\bigr)\ge -L'.$
\end{enumerate}
Then
\[
\Pr[E_{L'}]
\ \le\
\binom{n}{\left\lceil\frac{L-L'}{D}\right\rceil}
\left(\frac{(m+h)D}{\,N-n+1\,}\right)^{\left\lceil\frac{L-L'}{D}\right\rceil}.
\]
\end{lemma}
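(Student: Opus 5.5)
The plan is to track a running potential along the sequential restriction $S=S_0\supseteq S_1\supseteq\cdots\supseteq S_n$, where $S_i=S[\Psi_{1:i},\mathbf P_{1:i}]$. For $0\le i\le n$ set
\[
\pi_i:=\Pi_W\bigl(S_i,\Psi_{i+1:n}\bigr)=d_W(S_i)-\sum_{j>i}\dim_F(\psi_j(W)),
\]
so that $\pi_0=\Pi_W(S,\Psi)\le -L$. The increment is
\[
\pi_i-\pi_{i-1}=\dim_F(\psi_i(W))-\bigl(d_W(S_{i-1})-d_W(S_i)\bigr).
\]
By Lemma~\ref{lem:one-step-drop} applied to $S_{i-1}$ and $P_i$, the increment is nonnegative. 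It is at most $\dim_F(\psi_i(W))\le D$, because $d_W$ is nonincreasing along the chain.

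\textbf{Step 1: increments vanish at good steps where $W$ is still live.} Suppose $W$ is $S_{i-1}$-live. Then $\mathrm{span}_F(S_{i-1}\cap W)=W$, so the $\F_q$-space $T_i:=\psi_i(S_{i-1}\cap W)\subseteq\mathcal L((m+h)P_\infty)^{b_i}$ has $F$-span $\psi_i(W)$, of dimension $D_i:=\dim_F\psi_i(W)\le D$. The map $f\mapsto(\psi_i(f))(P_i)$ on $S_{i-1}\cap W$ has kernel $S_i\cap W$ and image $T_i(P_i)$. Hence the drop $d_W(S_{i-1})-d_W(S_i)$ equals $\dim_{\F_q}T_i(P_i)$. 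As in the proof of Corollary~\ref{cor:simultaneous-full-dim} (Lemma~\ref{lem:specialization} with $m$ replaced by $m+h$), there is a nonzero determinant in $\mathcal L(D_i(m+h)P_\infty)$, depending only on $S_{i-1}$ and $\psi_i$. Outside its at most $(m+h)D$ zeros in $\Omega$, we get $\dim T_i(P_i)=D_i$, and hence $\pi_i=\pi_{i-1}$. Call step $i$ \emph{bad} if $W$ is $S_{i-1}$-live and $P_i$ is one of these zeros. Conditioned on $P_1,\dots,P_{i-1}$, the bad set is determined and $P_i$ is uniform over at least $N-n+1$ places. So $\Pr[\text{step } i\text{ bad}\mid P_{1:i-1}]\le (m+h)D/(N-n+1)$.

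\textbf{Step 2: $E_{L'}$ forces many bad steps.} If $W$ is $S_n$-live, put $s:=n$. Then $\pi_n=d_W(S_n)\ge0\ge -L'$. Otherwise $s$ is the first dead index and $\pi_s\ge -L'$ by assumption. Since liveness is monotone along the chain ($S_i$ decreasing), $W$ is $S_{i-1}$-live for every $i\le s$. So on $E_{L'}$,
\[
L-L'\le\pi_s-\pi_0=\sum_{i=1}^s(\pi_i-\pi_{i-1}),
\]
where every nonzero term comes from a bad step (Step 1) and is at most $D$. Hence at least $t:=\lceil (L-L')/D\rceil$ bad steps occur.

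\textbf{Step 3: union bound.} I take a union bound over index sets $I\subseteq[n]$ with $|I|=t$. For a fixed $I=\{i_1<\dots<i_t\}$, I bound the probability that all steps in $I$ are bad by applying the chain rule in the order $i_1,\dots,i_t$. Each factor is at most $(m+h)D/(N-n+1)$ by Step 1, since the conditional bound holds given any history. This yields $\binom nt\bigl((m+h)D/(N-n+1)\bigr)^t$.

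The main obstacle I expect is Step 1. I need that the ``bad'' set is determined by the past, so that the conditional bound applies uniformly. I also need that liveness (not merely $\dim_F\mathrm{span}(S_{i-1}\cap W)$) is exactly what makes the $F$-span of $T_i$ equal to $\psi_i(W)$, so that good steps give zero increment. The degenerate case $L'=L$, where $t=0$, is trivial.
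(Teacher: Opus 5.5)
Your proposal is correct and follows the paper's proof essentially step for step: your increment $\pi_i-\pi_{i-1}$ is exactly the paper's slack $\Delta_i$, and you count positive-slack steps before $W$ becomes dead, then take a union bound over index sets of size $\lceil (L-L')/D\rceil$ using iterated conditioning. The only difference is cosmetic. You define a bad step through the zero set of a fixed nonzero determinant in $\mathcal L(D_i(m+h)P_\infty)$, as in Corollary~\ref{cor:simultaneous-full-dim}, while the paper uses the dimension-drop set from Lemma~\ref{lem:specialization}(2). Both give the same $(m+h)D$ bound, and your version makes the dependence on $P_{1:i-1}$ alone explicit.
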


\begin{proof}
If $L'=L$, the bound is trivial. If $W$ is already $S$-dead and $L'<L$, then $E_{L'}$ is empty. We may therefore assume $L'<L$ and that $W$ is $S$-live which implies $D>0$. For $i\in [n]$, define $S_i = S[\Psi_{1:i},\mathbf{P}_{1:i}]=\{f\in S_{i-1} : (\psi_i(f))(P_i)=0\}.$ Using Lemma~\ref{lem:one-step-drop}, for $i\in[n]$, define the slack at step $i$ (relative to $W$) by $\Delta_i\ :=\ \mathrm{dim}_F(\psi_i(W))\ -\bigl(d_W(S_{i-1})-d_W(S_i)\bigr)\ \ge\ 0,$ and trivially $\Delta_i\le \mathrm{dim}_F(\psi_i(W))\le D$.

Fix $i\in[n]$ and condition on $(P_1,\dots,P_{i-1})$. Suppose additionally that $W$ is $S_{i-1}$-live, i.e.\ $\mathrm{span}_F(S_{i-1}\cap W)=W$. Thus, we have $D_i:= \mathrm{dim}_F(\mathrm{span}_F(\psi_i(S_{i-1}\cap W)))=\mathrm{dim}_F(\psi_i(W))\le D.$ Moreover, since $S_{i-1}\subseteq Q_{m,b}$ and $\deg_{P_\infty}(\psi_i)\le h$, we have $\psi_i(S_{i-1}\cap W)\subseteq \mathcal{L}((m+h)P_\infty)^{b_i}$. If $\Delta_i\ge 1$, the constraint at step $i$ causes a strict rank loss on $S_{i-1}\cap W$, which implies $\mathrm{dim}_{\F_q}(\psi_i(S_{i-1}\cap W)(P_i))<D_i$. By the specialization Lemma~\ref{lem:specialization} applied to the $\F_q$-space $\psi_i(S_{i-1}\cap W)$, the set of places $P\in\Omega$ for which $\mathrm{dim}_{\F_q}(\psi_i(S_{i-1}\cap W)(P))<D_i$ has size at most $(m+h)D_i\le (m+h)D$. Since $P_i$ is uniform over the remaining pool of size $N-i+1\ge N-n+1$, we obtain
\begin{equation}
\label{eq:single-increase}
\Pr\bigl[\Delta_i\ge 1 \,\big|\, P_1,\dots,P_{i-1}\ \text{and $W$ is $S_{i-1}$-live}\bigr]
\ \le\ \frac{(m+h)D}{N-n+1}.
\end{equation}
On $E_{L'}$, define
\[
t := 
\begin{cases}
n, & \text{if $W$ is $S_n$-live},\\
s, & \text{otherwise, where $s$ is the first index such that $W$ is $S_s$-dead}.
\end{cases}
\]
By definition, $W$ is $S_{t-1}$-live, and $\Pi_W\!\bigl(S[\Psi_{1:t},\mathbf{P}_{1:t}],\,\Psi_{t+1:n}\bigr)\ge -L'$. Therefore, $\sum_{i=1}^t \Delta_i \;\ge\; L-L'.$ Since each $\Delta_i\le D$, it follows that at least $\left\lceil\frac{L-L'}{D}\right\rceil$ indices in $\{1,\dots,t\}$ satisfy $\Delta_i\ge 1$. Moreover, every such index $i\le t$ satisfies that $W$ is $S_{i-1}$-live. Set $r:=\left\lceil(L-L')/D\right\rceil$. If $r>n$, the event is empty. Otherwise, $E_{L'}$ implies the existence of a set $I=\{i_1<\cdots<i_r\}\subseteq[n]$ such that for all $j\in[r]$, $\Delta_{i_j}\ge 1$ and $W\ \text{is}\ S_{i_j-1}\text{-live}.$ By a union bound over $I$,
\begin{equation}
\label{eq:union-over-I}
\Pr[E_{L'}]\ \le\ \binom{n}{r}\cdot
\max_{I:|I|=r}\Pr\Bigl[\bigwedge_{j=1}^r\bigl(\Delta_{i_j}\ge 1\ \wedge\ W\text{ is }S_{i_j-1}\text{-live}\bigr)\Bigr].
\end{equation}
Fix such an $I=\{i_1<\cdots<i_r\}$. Using iterated conditioning at times $i_1,\dots,i_r$ and applying
\eqref{eq:single-increase} at each time (note that $W$ being $S_{i_j-1}$-live is part of the conditioned event), we obtain
\[
\Pr\Bigl[\bigwedge_{j=1}^r\bigl(\Delta_{i_j}\ge 1\ \wedge\ W\text{ is }S_{i_j-1}\text{-live}\bigr)\Bigr]\ \le\ \left(\frac{(m+h)D}{N-n+1}\right)^r.
\]
Plugging into \eqref{eq:union-over-I} completes this proof.
\end{proof}

To bound the probability that a nonzero solution remains, we will argue by induction on the $F$-dimension of $\mathrm{span}_F(S)$. The next lemma constructs the quotient maps needed for this induction and bounds their degrees.

\begin{lemma}\label{lem:quotient-PsiV}
Let $\Psi=(\psi_1,\dots,\psi_n)$ be a $b$-local $P_\infty$-bounded profile where $\psi_i:F^b\to F^{b_i}$. Let $T\subseteq Q_{m,b}=\mathcal{L}(mP_\infty)^b$ be an $\F_q$-linear space, and let $U\subseteq F^b$ be an $F$-linear subspace which is $T$-live. Write $d:=\dim_F U$. The following hold, with all maps written in standard coordinates.
\begin{enumerate}
    \item There exists a surjective $F$-linear map $\pi_U:F^b\to F^{b-d}$ with $\ker(\pi_U)=U$ and $\deg_{P_\infty}(\pi_U)\le dm$.

    \item For each $i\in[n]$, there exist $P_\infty$-bounded $F$-linear maps $\bar\psi_i:F^{b-d}\to F^{r_i}$ and $\chi_{i,U}:F^{b_i}\to F^{r_i}$ for some integer $r_i\ge0$, such that $\ker(\chi_{i,U})=\psi_i(U)$, $\chi_{i,U}\circ\psi_i=\bar\psi_i\circ\pi_U$ and $ \max\{\deg_{P_\infty}(\bar\psi_i),\deg_{P_\infty}(\chi_{i,U})\} \le b\max\{\deg_{P_\infty}(\psi_i),dm\}.$ Thus, these maps fit into the commutative diagram
    \[
    \begin{tikzcd}
    F^b \arrow[r,"\pi_U"] \arrow[d,"\psi_i"'] & F^{b-d} \arrow[d,"\bar\psi_i"] \\
    F^{b_i} \arrow[r,"\chi_{i,U}"'] & F^{r_i}.
    \end{tikzcd}
    \]

    \item If $P\ne P_\infty$ is a rational place and $x\in F^{b_i}$ is regular at $P$ with $x(P)=0$, then $(\chi_{i,U}(x))(P)=0$.
\end{enumerate}
\end{lemma}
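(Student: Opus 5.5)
Since $U$ is $T$-live, it has an $F$-basis $u^{(1)},\dots,u^{(d)}$ taken from $T\subseteq\mathcal L(mP_\infty)^b$. Form the $b\times d$ matrix $B$ with these columns. It has rank $d$, so we can choose $d$ rows $J$ with $\det B_J\ne0$. Write $B'$ for the remaining $(b-d)\times d$ block. All of these entries lie in $\mathcal L(mP_\infty)$.

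\textbf{Part~1.} For $x\in F^b$, define $\pi_U(x):=\det(B_J)\,x_{\bar J}-B'\operatorname{adj}(B_J)\,x_J$.
\begin{itemize}
\item Since $\operatorname{adj}(B_J)$ has entries in $\mathcal L((d-1)mP_\infty)$ and $B'$ has entries in $\mathcal L(mP_\infty)$, every coefficient lies in $\mathcal L(dmP_\infty)$. Hence $\deg_{P_\infty}(\pi_U)\le dm$.
\item The restriction to the coordinates $\bar J$ is multiplication by the nonzero scalar $\det B_J$, so $\pi_U$ is surjective.
\item For a column $u=B c$ of $U$ we have $x_J=B_Jc$ and $x_{\bar J}=B'c$, so $\pi_U(u)=\det(B_J)B'c-B'\det(B_J)c=0$. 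Thus $U\subseteq\ker\pi_U$.
\item The kernel has dimension $b-(b-d)=d$, so $\ker\pi_U=U$.
\end{itemize}
This is the Cramer's-rule adjugate trick: it avoids inverses, which would introduce poles away from $P_\infty$.

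\textbf{Part~2.} Apply the same construction inside $F^{b_i}$ to the subspace $\psi_i(U)$.
\begin{itemize}
\item This subspace is spanned by $\psi_i(u^{(j)})$, whose entries lie in $\mathcal L((m+h_i)P_\infty)$ with $h_i=\deg_{P_\infty}\psi_i$.
\item Pick an $F$-basis among these vectors, of size $e\le d$. The adjugate construction gives $\chi_{i,U}:F^{b_i}\to F^{r_i}$ with $r_i=b_i-e$, $\ker\chi_{i,U}=\psi_i(U)$, and degree at most $e(m+h_i)$.
\item This bound is not quite the stated $b\max\{h_i,dm\}$. I would sharpen it by choosing the spanning vectors more carefully, or simply bound $e(m+h_i)\le 2e\max\{h_i,m\}\le b\max\{h_i,dm\}$ when $e<b$. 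If this inequality is too tight, I would revisit the bookkeeping.
\item The composite $\chi_{i,U}\circ\psi_i$ kills $U$, so it factors through the surjection $\pi_U$ as $\bar\psi_i\circ\pi_U$ for a unique $F$-linear $\bar\psi_i$.
\end{itemize}

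The main obstacle is bounding the poles of $\bar\psi_i$, since it is defined through a factorization. Use the explicit section $s:F^{b-d}\to F^b$ given by $y\mapsto(0_J,\,y/\det B_J)$. This satisfies $\pi_U\circ s=\mathrm{id}$, but the division by $\det B_J$ introduces poles away from $P_\infty$.

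To avoid this, write $\bar\psi_i(y)=\chi_{i,U}\psi_i(x)$ for any preimage $x$ of $y$. Then express the matrix of $\bar\psi_i$ through the columns $\chi_{i,U}\psi_i(e_k)$ for $k\in\bar J$, divided by $\det B_J$. One must show these columns are divisible by $\det B_J$ in the ring of functions regular away from $P_\infty$.

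The cleaner route is to rescale. Replace $\pi_U$ by the projection $x\mapsto x_{\bar J}-B'B_J^{-1}x_J$. Alternatively, define $\bar\psi_i(y):=\chi_{i,U}\psi_i(0_J,y)$ and accept that the identity $\chi\circ\psi=\bar\psi\circ\pi_U$ holds only up to the scalar $\det B_J$. In that case, absorb $\det B_J$ into $\chi_{i,U}$ by setting $\chi'_{i,U}:=\det(B_J)\chi_{i,U}$. Indeed, $\chi\psi(x)=\chi\psi(0_J,x_{\bar J}-B'B_J^{-1}x_J)$, so $\det(B_J)\chi\psi(x)=\chi\psi(0_J,\pi_U(x))$.

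Hence take $\bar\psi_i(y)=\chi_{i,U}\psi_i(0_J,y)$ and $\chi'_{i,U}=\det(B_J)\chi_{i,U}$. Both are $P_\infty$-bounded, with degree at most $dm+e(m+h_i)+h_i$, which is $O(b\max\{h_i,dm\})$. The kernel is unchanged because $\det B_J\ne0$. This is the step I expect to need the most care with constants.

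\textbf{Part~3.} Since $\chi_{i,U}$ is $P_\infty$-bounded, its entries are regular at every $P\ne P_\infty$. Evaluation at $P$ is a ring homomorphism on $\mathcal O_P$, so $(\chi x)(P)=\chi(P)x(P)=0$.
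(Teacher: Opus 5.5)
Your Part~1 is the paper's construction, Part~3 is fine, and your final Part~2 maps ($\bar\psi_i(y):=\chi_{i,U}\psi_i(0_J,y)$, $\chi'_{i,U}:=\det(B_J)\chi_{i,U}$) are structurally correct. Since $x-(0_J,x_{\bar J}-B'B_J^{-1}x_J)\in U$, you do get $\chi'_{i,U}\circ\psi_i=\bar\psi_i\circ\pi_U$. The kernel is still $\psi_i(U)$, and both maps are $P_\infty$-bounded.

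The gap is the degree bound, which is part of the statement. Even with tight bookkeeping, your argument only certifies
\[
\max\{\deg_{P_\infty}\bar\psi_i,\ \deg_{P_\infty}\chi'_{i,U}\}\le e(m+h_i)+\max\{dm,h_i\}.
\]
The reason is that $\chi_{i,U}$ is built from the generators $\psi_i(u^{(j)})$, which have degree $m+h_i$, and is then multiplied by $\det(B_J)$ or composed with $\psi_i$. This is strictly larger than $b\max\{h_i,dm\}$ in admissible cases. For example, $e=d=b-1$ and $h_i=dm>0$ give $(b+1)(b-1)m$ versus $b(b-1)m$. Your proposed patch $2e\max\{h_i,m\}\le b\max\{h_i,dm\}$ for $e<b$ is also false: take $e=b-1\ge2$ and $h_i\ge dm$, $h_i>0$. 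An $O(b\max\{h_i,dm\})$ bound is not what is claimed. The exact factor $b$ is what Lemma~\ref{lem:probably-trivial} uses when it sets $h'=bh$.

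The missing idea is to produce $\chi_{i,U}$ and $\bar\psi_i$ simultaneously, as the two blocks of a single annihilator, so that no generator ever has degree $m+h_i$. Stack
\[
Z:=\begin{bmatrix}M_\pi\\ M_{\psi_i}\end{bmatrix}\in F^{(b-d+b_i)\times b},
\]
whose entries lie in $\mathcal L(\max\{dm,h_i\}P_\infty)$. Apply your Part~1 adjugate construction to the column space of $Z$, which has rank $e'\le b$. This gives $T^\star=[X\ \ Y]$ with $\ker T^\star=\mathrm{Im}(Z)$ and entries in $\mathcal L(b\max\{dm,h_i\}P_\infty)$. The relation $T^\star Z=0$ reads $YM_{\psi_i}=-XM_\pi$. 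Hence $\chi_{i,U}:=Y$ and $\bar\psi_i:=-X$ satisfy the commutation with the stated bound.

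Finally, $\ker Y=\psi_i(U)$. If $Yx=0$, then $(0,x)\in\ker T^\star=\mathrm{Im}(Z)$, so $(0,x)=Zw$ with $M_\pi w=0$. Thus $w\in U$ and $x=\psi_i(w)$. Conversely, $Y\psi_i(u)=-XM_\pi u=0$ for $u\in U$. This is exactly how the paper proves part~2.
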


\begin{proof}
If $d=0$, take $\pi_U=\mathrm{id}_{F^b}$, $\chi_{i,U}=\mathrm{id}_{F^{b_i}}$ and $\bar\psi_i=\psi_i$. We therefore assume $d\ge1$.
Choose $\boldsymbol{u}^{(1)},\dots,\boldsymbol{u}^{(d)}\in T\cap U$ which form an $F$-basis of $U$, and let $M_U\in F^{b\times d}$ be the matrix whose $j$th column is $\boldsymbol{u}^{(j)}$. Since each $\boldsymbol{u}^{(j)}$ lies in $\mathcal{L}(mP_\infty)^b$, every entry of $M_U$ lies in $\mathcal{L}(mP_\infty)$. Since $\rank_F(M_U)=d$, choose a permutation matrix $\Sigma$ such that
\[
\Sigma M_U=
\begin{bmatrix}
A\\ B
\end{bmatrix},
\qquad
A\in F^{d\times d}\ \text{invertible},\quad B\in F^{(b-d)\times d}.
\]
Define
\[
M_\pi:=\det(A)\begin{bmatrix}-BA^{-1}&I_{b-d}\end{bmatrix}\Sigma
\in F^{(b-d)\times b}.
\]
Then $\ker(M_\pi)=\mathrm{Im}(M_U)=U$ by the usual block-elimination argument. Hence $\pi_U(x):=M_\pi x$ identifies $F^b/U$ with $F^{b-d}$. Also, each entry of $\mathrm{adj}(A)$ lies in $\mathcal L((d-1)\cdot mP_\infty)$, hence $\det(A)A^{-1}=\mathrm{adj}(A)$ is $(d-1)\cdot m$-bounded at $P_\infty$, and therefore $\det(A)BA^{-1}$ is $d\cdot m$-bounded at $P_\infty$. Thus every entry of $M_\pi$ lies in $\mathcal L(dmP_\infty)$. Fix $i\in[n]$, write $a=b_i$, and let $M_{\psi_i}\in F^{a\times b}$ be the matrix of $\psi_i$ in the given coordinates. Let
\[
d':=\max\{\deg_{P_\infty}(\psi_i),d\cdot m\}, \qquad Z:=
\begin{bmatrix}
M_\pi\\ M_{\psi_i}
\end{bmatrix}
\in F^{(b-d+a)\times b}.
\]
If $Z=0$, take $r_i=a$, $\chi_{i,U}=\mathrm{id}_{F^a}$ and $\bar\psi_i=0$; the claims for this $i$ follow immediately. Otherwise, choose $e:=\rank_F(Z)$ linearly independent columns of $Z$, and let $T\in F^{(b-d+a)\times e}$ be the corresponding submatrix. Choose a permutation matrix $\Sigma^\prime$ so that
\[
\Sigma^\prime T=
\begin{bmatrix}
T_1\\ T_2
\end{bmatrix},
\qquad
T_1\in F^{e\times e}\ \text{invertible}.
\]
Define
\[
T^\star:= \det(T_1)\,
\begin{bmatrix}
-T_2T_1^{-1} & I_{b-d+a-e}
\end{bmatrix}\Sigma^\prime
\in F^{(b-d+a-e)\times (b-d+a)}.
\]
Exactly as above, $\ker(T^\star)=\mathrm{Im}(T)=\mathrm{Im}(Z)$. Since every entry of $T$ lies in $\mathcal{L}(d'P_\infty)$ and $e\le b$, every entry of $T^\star$ lies in $\mathcal{L}(b\cdot d'\cdot P_\infty)$. Write $T^\star=[P\ \ Q]$ in block form, where $P\in F^{r\times (b-d)},\,Q\in F^{r\times a}$ and $r:=b-d+a-e\le a$; set $r_i=r$. Then the relation $T^\star Z=0$ is precisely $PM_\pi+QM_{\psi_i}=0$. Thus $Q$ represents $\chi_{i,U}:F^a\to F^r$ and $-P$ represents $\bar\psi_i:F^{b-d}\to F^r$, with $\chi_{i,U}\circ\psi_i=\bar\psi_i\circ\pi_U$ and the stated degree bounds.

We next prove $\ker(\chi_{i,U})=\psi_i(U)$, where $\chi_{i,U}$ is the map represented by $Q$. If $u\in U=\ker(\pi_U)$, then $M_\pi u=0$, hence $0=T^\star Zu=PM_\pi u+QM_{\psi_i}u=Q\psi_i(u),$ so $\psi_i(U)\subseteq \ker(\chi_{i,U})$. Conversely, if $x\in F^a$ satisfies $\chi_{i,U}(x)=0$, then $Qx=0$, so $y:=(0,x)\in F^{b-d+a}$ lies in $\ker(T^\star)=\mathrm{Im}(Z)$. Hence $y=Zw$ for some $w\in F^b$, and therefore $M_\pi w=0$ and $M_{\psi_i} w=x$. Thus $w\in U$ and $x=\psi_i(w)\in\psi_i(U)$.

Every entry of the matrix of $\chi_{i,U}$ is regular away from $P_\infty$. Hence if $x$ is regular at $P\ne P_\infty$ and satisfies $x(P)=0$, then each coordinate of $\chi_{i,U}(x)$ is an $F$-linear combination of the coordinates of $x$ with coefficients regular at $P$, and therefore $(\chi_{i,U}(x))(P)=0$.
\end{proof}

\begin{lemma}\label{lem:probably-trivial}
Let $\Psi=(\psi_1,\dots,\psi_n)$ be a $b$-local $P_\infty$-bounded profile.  Let $S\subseteq Q_{m,b}=\mathcal{L}(mP_\infty)^b$ be a finite-dimensional $\F_q$-linear space. Let $W=\mathrm{span}_F(S)$ with $D_W=\dim_F(W)$ and $h\ge\max\{\deg_{P_\infty}(\Psi),\,bm\}$. 

Suppose there exists $\lambda\ge 2D_W\cdot(D_W+1)$
such that for every $S$-live $F$-linear subspace $U\subseteq W$,
$\Pi_U(S,\Psi) \le -\lambda\cdot \mathrm{dim}_F(U)$.
If $S=\{0\}$, the probability below is zero. Otherwise, if $\mathbf{P}=(P_1,\ldots,P_n)$ is sampled uniformly from $\Omega$ without replacement,
\[
\Pr\big[S[\Psi,\mathbf{P}]\neq\{0\}\big]
\ \le\
(2^{D_W}-1)\cdot
\left(\frac{4e\cdot b^{2D_W}\cdot h\cdot n}{\lambda\cdot (N-n+1)}\right)^{\frac{\lambda}{2D_W}}.
\]
\end{lemma}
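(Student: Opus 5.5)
We argue by induction on $D_W=\dim_F\mathrm{span}_F(S)$, following the Reed--Solomon potential argument of Levi et al.\ but with the function-field tools of Lemmas~\ref{lem:very-negative-implies-dead} and~\ref{lem:quotient-PsiV}. If $S=\{0\}$ there is nothing to prove. Otherwise $W$ is $S$-live, and the hypothesis gives $\Pi_W(S,\Psi)\le-\lambda D_W$.

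We apply Lemma~\ref{lem:very-negative-implies-dead} to $W$ with $L=\lambda D_W$ and $L'=\lambda D_W/2$. With probability at most
\[
\binom{n}{\lceil\lambda/2\rceil}\Bigl(\frac{(m+h)D_W}{N-n+1}\Bigr)^{\lceil\lambda/2\rceil}\le\Bigl(\frac{2e\,n\,(m+h)D_W}{\lambda(N-n+1)}\Bigr)^{\lambda/2},
\]
one of two bad outcomes occurs: $W$ stays live throughout, or $W$ becomes dead at the first step $s$ while keeping potential $\ge-\lambda D_W/2$. Since $\binom nr\le(en/r)^r$ and $m+h\le 2h$, this is within the target bound; the exponent $\lambda/2\ge\lambda/(2D_W)$ and the base is less than $1$ in the nontrivial regime.

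On the complementary good event, there is a first step $s$ at which $W$ becomes $S_s$-dead, where $S_s=S[\Psi_{1:s},\mathbf P_{1:s}]$, and $\Pi_W(S_s,\Psi_{s+1:n})<-\lambda D_W/2$. Let $W'=\mathrm{span}_F(S_s)\subsetneq W$, so $D_{W'}<D_W$. We now want to recurse on $S_s$ with the remaining places $P_{s+1},\dots,P_n$, which are uniform without replacement from $\Omega\setminus\{P_1,\dots,P_s\}$. The pool size only shrinks the denominator harmlessly, since $N-s-(n-s)+1=N-n+1$.

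The recursion requires the hypothesis for $S_s$ with a new parameter $\lambda'$: every $S_s$-live $U\subseteq W'$ must satisfy $\Pi_U(S_s,\Psi_{s+1:n})\le-\lambda'\dim U$. This is the main obstacle. For a live $U\subseteq W'$ we have $U\subseteq W$, and $U$ is $S$-live since $S_s\subseteq S$. Lemma~\ref{lem:one-step-drop} only gives the lower bound $d_U(S_s)\ge d_U(S)-\sum_{i\le s}\dim\psi_i(U)$, which is the wrong direction for this purpose.

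We therefore pass to the quotient by a live subspace instead. The plan is as follows. If $S_s\cap W'$ violates the bound for some $U$, it must be compared against the potential of $W$ through the quotient $W/U$. We use Lemma~\ref{lem:quotient-PsiV} with $T=S_s$ to build $\pi_U$ and the maps $\bar\psi_i,\chi_{i,U}$, whose degrees are at most $b\max\{h,bm\}\le bh$. Part~3 of that lemma guarantees that constraints satisfied at $P_i$ descend to the quotient. The potentials are additive in the sense that $\Pi_W=\Pi_U+\Pi_{W/U}$ up to the $\dim\psi_i(W)=\dim\psi_i(U)+\dim\bar\psi_i(\pi_U W)$ identity. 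Hence, if some $U$ has potential above $-\lambda\dim U/2$, the quotient space $\pi_U(S_s)$ has potential at most $-\lambda(D_W-\dim U)/2$, and we recurse there instead.

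In all cases we arrive at a space of strictly smaller $F$-dimension. The new parameter is $\lambda'\approx\lambda/2$ and the new degree bound is $h'=bh$. We set up the induction with a strengthened hypothesis in which $\lambda$ halves and $h$ multiplies by $b$ at most $D_W$ times. The condition $\lambda\ge 2D_W(D_W+1)$ ensures $\lambda'$ remains large enough at every level. The degree accumulates to at most $b^{D_W}h$, and pairing it with an extra factor $b^{D_W}$ from $D\le b$ explains the $b^{2D_W}$ in the base. The choices of which subspace to recurse into are indexed by the subsets of a basis, and union-bounding over these gives the prefactor $2^{D_W}-1$. Because every level loses at most half of the remaining potential, the exponent stays at least $\lambda/(2D_W)$.

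I expect the main difficulty to be the bookkeeping that the hypothesis ``every live $U$ is very negative'' is inherited, possibly in the quotient, with the $\lambda$ loss controlled. This needs to be done carefully using the commutative diagram of Lemma~\ref{lem:quotient-PsiV}.
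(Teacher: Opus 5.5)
There is a genuine gap, at exactly the step you flag as the main obstacle. Stopping at the first time $s$ at which $W$ becomes $S_s$-dead controls only $p_{W,s}$. The potentials of proper subspaces may have been rising for many steps before $s$. In the branch where some $S_s$-live $U$ violates the bound, you recurse only into the quotient $\pi_U(S_s)$, which at best bounds the event $S_n\not\subseteq U$. The event $S_n\cap U\neq\{0\}$ is never treated, and it cannot be bounded from the state at time $s$. Nothing in your argument prevents $\Pi_U(S_s,\Psi_{s+1:n})$ from being positive by then, and in that case Lemma~\ref{lem:potential-lower} forces $S_n\cap U\neq\{0\}$ deterministically. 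You also cannot apply Lemma~\ref{lem:very-negative-implies-dead} to every $U$ and take a union bound, because there are far too many $S$-live subspaces. In addition, a bound on the top potential of the quotient is not the inductive hypothesis: you need it for every live $V'\subseteq\pi_U(W)$.

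The missing idea is the paper's stopping rule. Let $t$ be the first index at which either $W$ is $S_t$-dead, or some nonzero proper $S_t$-live $U\subsetneq W$ has $p_{U,t}>-\lambda^\star\dim_F U$. Nothing triggered at time $t-1$ (for $t=1$ the original hypothesis applies, since $\lambda^\star\le\lambda$). Hence every $S_t$-live $V\subseteq U$ satisfies $p_{V,t}\le p_{V,t-1}+\dim_F V\le-(\lambda^\star-1)\dim_F V$, so the induction applies to $S_t\cap U$. Now choose $U$ maximal among the triggering subspaces, and on the event $G$ use $p_{W,t}<-\lambda^\star D_W$ for the case $U'=W$. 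This gives the $\lambda^\star$-hypothesis for $\pi_U(S_t)$ through the identity $\Pi_{V'}(\pi_U(S_t),\Psi')=p_{U',t}-p_{U,t}$ with $U'=\pi_U^{-1}(V')\cap W$. Both recursive calls are needed. Even the case where $W$ dies relies on the stopping rule: there $U=\mathrm{span}_F(S_t)$ and only the first call is used. Lemma~\ref{lem:very-negative-implies-dead} is needed only to bound $\Pr[\overline G]$.

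Your parameter bookkeeping also does not close. With $L'=\lambda D_W/2$, the recursive space can have dimension $D_W-1$ with $\lambda'\approx\lambda/2$. The inductive bound then has exponent about $\lambda/(4(D_W-1))$, which is below the target $\lambda/(2D_W)$ once $D_W\ge3$. Moreover, $\lambda/2\ge2(D_W-1)D_W$ does not follow from $\lambda\ge2D_W(D_W+1)$ once $D_W\ge4$. Halving up to $D_W$ times gives $\lambda/2^{D_W}$, not $\lambda/(2D_W)$.

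The paper instead takes $\lambda^\star=\lceil(D_W-1)\lambda/D_W+D_W\rceil$. This choice does three things:
\begin{itemize}
\item it still leaves $r=\lceil\lambda-\lambda^\star\rceil\ge\lambda/(2D_W)$ in the bound on $\Pr[\overline G]$;
\item it satisfies $\lambda^\star-1\ge2D_U(D_U+1)$ and $\lambda^\star\ge2(D_W-D_U)(D_W-D_U+1)$;
\item it makes both recursive exponents $(\lambda^\star-1)/(2D_U)$ and $\lambda^\star/(2(D_W-D_U))$ at least $\lambda/(2D_W)$, because $D_U,\,D_W-D_U\le D_W-1$.
\end{itemize}

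Finally, the prefactor $2^{D_W}-1$ does not come from a union bound over subsets of a fixed basis, since the subspace $U$ depends on the random prefix. It comes from conditioning on the prefix and the recurrence $1+(2^{D_U}-1)+(2^{D_W-D_U}-1)\le2^{D_W}-1$, where the $1$ accounts for $\overline G$.
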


\begin{proof}
We proceed with induction on $D_W$. If $D_W=0$, then $W=\{0\}$, hence $S=\{0\}$ and therefore $S[\Psi,\mathbf P]=\{0\}$ for every $\mathbf P$. Assume now that $D_W\ge1$ and that the lemma holds for every smaller $F$-span dimension, for all choices of the remaining parameters. If the base on the right hand side is at least $1$, the claim is trivial. We therefore assume it is less than $1$. For $0\le i\le n$, define $S_i:=S[\Psi_{1:i},\mathbf P_{1:i}]$ and $p_{U,i}:=\Pi_U(S_i,\Psi_{i+1:n})$ for every $F$-linear subspace $U\subseteq W$, where $\Psi_{1:i}:=(\psi_1,\dots,\psi_i)$ and $\Psi_{i+1:n}:=(\psi_{i+1},\dots,\psi_n).$ By Lemma~\ref{lem:one-step-drop}, for fixed $U$ the sequence $p_{U,0}\le p_{U,1}\le \cdots\le p_{U,n}$ is nondecreasing. Let 
\[
L:=\lambda D_W,
\qquad
\lambda^\star:=\left\lceil \frac{(D_W-1)\lambda}{D_W}+D_W\right\rceil,
\qquad
L':=\lambda^\star D_W.
\]
Let $t\in\{1,\ldots,n\}$ be the least index such that at least one of the following occurs:
\begin{enumerate}
    \item $W$ is $S_t$-dead;
    \item there exists a nonzero proper $S_t$-live subspace $U\subsetneq W$ with
    $p_{U,t}>-\lambda^\star\cdot \dim_F U.$
\end{enumerate}
If no such index exists, set $t=\infty$. Define the event $G:=\{\,t<\infty \text{ and } p_{W,t}< -L'\,\}.$ If $t=\infty$, then $W$ is $S_n$-live. Otherwise, on $\overline G$, we have $p_{W,t}\ge-L'$, and this remains true until $W$ first becomes dead, if it does. Thus $\overline G\subseteq E_{L'}$. Applying Lemma~\ref{lem:very-negative-implies-dead} with parameters $(L,L')$ yields
\[
\Pr[\overline{G}]
\le
\binom{n}{\left\lceil \frac{L-L'}{D_W} \right\rceil}
\left(
\frac{(m+h)D_W}{N-n+1}
\right)^{\left\lceil \frac{L-L'}{D_W} \right\rceil}.
\]
Writing $r=\left\lceil\frac{L-L'}{D_W}\right\rceil,$ we have $r\ge \lambda/(2D_W)$. Using $m+h\le 2h$, $\binom nr\le (en/r)^r$ and the fact that $\frac{4e\,h\,nD_W^2}{\lambda(N-n+1)}<1,$ we obtain
\[
\Pr[\overline G]
\le
\left(
\frac{4e\,h\,nD_W^2}
{\lambda(N-n+1)}
\right)^{\lambda/(2D_W)}.
\]

If $\Pr[G]=0$, the bound above proves the claim. Otherwise, fix a prefix $(t,P_1,\dots,P_t)$ for which $G$ holds. All probabilities below are conditional on this fixed prefix, so the remaining places are uniform without replacement from the remaining $N-t$ places. If $S_t=\{0\}$, then $S_n=\{0\}$. This always occurs on $G$ when $D_W=1$. We may therefore assume $D_W\ge2$ and $S_t\ne\{0\}$. Choose a nonzero proper subspace $U\subsetneq W$ as follows:
\begin{itemize}
    \item If $W$ is $S_t$-dead, set $U:=\mathrm{span}_F(S_t)$.
    \item Otherwise choose $U\subsetneq W$ maximal with respect to inclusion among all nonzero proper $S_t$-live subspaces satisfying $p_{U,t}>-\lambda^\star\cdot \dim_F U.$
\end{itemize}
In either case, $1\le D_U:=\dim_F U\le D_W-1$. We split the proof into two parts:
\[
(S_n\cap U=\{0\}) \text{ and } (S_n\subseteq U)
\qquad \Longrightarrow \qquad
S_n=\{0\}.
\]
\paragraph{Part I: controlling $S_n\cap U$.}
Let $S':=S_t\cap U$ and $\Psi':=\Psi_{t+1:n}.$ Then $\mathrm{span}_F(S')=U$ and $S_n\cap U=S'[\Psi',\mathbf P_{t+1:n}]$.

Let $V\subseteq U$ be an $F$-linear subspace which is $S'$-live. Since $V\subseteq U\subsetneq W$, the stopping condition had not yet been triggered by $V$ at time $t-1$, and therefore $p_{V,t-1}\le -\lambda^\star\cdot \dim_F V$. By Lemma~\ref{lem:one-step-drop}, we have $p_{V,t}\le p_{V,t-1}+\dim_F V
\le -(\lambda^\star-1)\cdot \dim_F V.$ Also, $\lambda^\star-1
\ge \frac{(D_W-1)\lambda}{D_W}
\ge 2(D_W-1)(D_W+1)
\ge 2D_U(D_U+1).$ Therefore, by the induction hypothesis applied to $S'\subseteq F^b$,
\[
\Pr_{\mathbf{P}}[S_n\cap U\neq \{0\}\mid P_1,\ldots,P_t]
\le
(2^{D_U}-1)\left(
\frac{4e\cdot b^{2D_U}\cdot h\cdot n}
{(\lambda^\star-1)\cdot (N-n+1)}
\right)^{\frac{\lambda^\star-1}{2D_U}}.
\]
\medskip
\noindent
\textbf{Part II: controlling $S_n\subseteq U$.}
If $W$ is $S_t$-dead, then by construction $S_n\subseteq S_t\subseteq U$ and we have nothing to prove. We now assume that $W$ is $S_t$-live.

Applying Lemma~\ref{lem:quotient-PsiV}, we obtain an $F$-linear quotient map $\pi:F^b\to F^{b-D_U}$ with $\ker\pi=U$ and $\deg_{P_\infty}(\pi)\le D_U\cdot m$. Moreover, for each $i\in\{t+1,\dots,n\}$, we obtain $F$-linear maps
\[
\bar\psi_i:F^{b-D_U}\to F^{r_i},
\qquad
\chi_i:F^{b_i}\to F^{r_i}
\]
such that $\chi_i\circ \psi_i=\bar\psi_{i}\circ \pi$, $\ker(\chi_i)=\psi_i(U)$ and $\deg_{P_\infty}(\bar\psi_i)\le b\max\{\deg_{P_\infty}(\psi_i),D_Um\}\le bh$.

\noindent Let
\[
S':=\pi(S_t),
\qquad
\Psi':=(\bar\psi_{t+1},\dots,\bar\psi_n),
\qquad
W':=\mathrm{span}_F(S')=\pi(W).
\]
Since $D_U\ge 1$ and $\ker(\pi)\cap W=U$, we have $\dim_F W' = D_W-D_U < D_W$.

We first claim that
\[
S'[\Psi',\mathbf{P}_{t+1:n}]=\{0\}
\quad\Longrightarrow\quad
S_n\subseteq U.
\]
Indeed, let $f\in S_n$. Then for every $i>t$, $(\psi_i(f))(P_i)=0$. Using Lemma~\ref{lem:quotient-PsiV}, we obtain $(\bar\psi_i(\pi(f)))(P_i)=0$ for $i=t+1,\dots,n$. Hence $\pi(f)\in S'[\Psi',\mathbf{P}_{t+1:n}]=\{0\}$. Therefore, $f\in \ker(\pi)=U$.

It remains to verify the potential hypothesis for $(S',\Psi')$ inside $W'$. The case $V'=\{0\}$ is immediate. Let $V'\subseteq W'$ be any nonzero $S'$-live subspace, and define $U':=\pi^{-1}(V')\cap W$. Then $U\subseteq U'\subseteq W$ and $\dim_F V'=\dim_F U'-\dim_F U$. We claim that the sequence
\[
0\to S_t\cap U \to S_t\cap U' \xrightarrow{\ \pi\ } S'\cap V' \to 0
\]
is exact. Injectivity on the left is obvious. To see surjectivity on the right, let $\boldsymbol{y}\in S'\cap V'$. As $\boldsymbol{y}\in \pi(S_t)$, choose $\boldsymbol{x}\in S_t$ with $\pi(\boldsymbol{x})=\boldsymbol{y}$. Since $\boldsymbol{y}\in V'$, we also have $\boldsymbol{x}\in \pi^{-1}(V')\cap W=U'$. Thus, $\boldsymbol{x}\in S_t\cap U'$ and $\pi(\boldsymbol{x})=\boldsymbol{y}$. Consequently,
\[
\dim_{\F_q}(S'\cap V') = \dim_{\F_q}(S_t\cap U')-\dim_{\F_q}(S_t\cap U).
\]
For each $i>t$, the sequence
\[
0\to \psi_i(U)\to \psi_i(U') \xrightarrow{\ \chi_i\ } \bar\psi_i(V')\to 0
\]
is exact because the first map is inclusion, $\ker(\chi_i)=\psi_i(U)$, and $\bar\psi_i(V')=\bar\psi_i(\pi(U'))=\chi_i(\psi_i(U')).$ Hence, $\dim_F \bar\psi_i(V')=\dim_F \psi_i(U')-\dim_F \psi_i(U)$. Substituting these two identities into the definition of $\Pi_{V'}$ gives
\begin{equation}\label{eq:potential-V}
    \Pi_{V'}(S',\Psi')=p_{U',t}-p_{U,t}.
\end{equation}

We now show that $U'$ is $S_t$-live. Since $V'$ is $S'$-live, we have $V'=\mathrm{span}_F(S'\cap V')=\pi(\mathrm{span}_F(S_t\cap U')).$ Therefore, $U'=\pi^{-1}(V')\cap W = \mathrm{span}_F(S_t\cap U') + U.$ As $U$ is $S_t$-live by construction, $U\subseteq \mathrm{span}_F(S_t\cap U')$ and $U'=\mathrm{span}_F(S_t\cap U')$, proving that $U'$ is $S_t$-live. By maximality of $U$, $p_{U',t}\le-\lambda^\star\dim_F U'$ when $U'\subsetneq W$; if $U'=W$, the same inequality follows from $G$. Since $p_{U,t}>-\lambda^\star\dim_F U$, Eq.~\eqref{eq:potential-V} gives
\[
\Pi_{V'}(S',\Psi')\le-\lambda^\star(\dim_F U' - \dim_F U)=-\lambda^\star \dim_F V'.
\]
By the definition of $\lambda^{\star}$,
\[
\lambda^\star\ge\frac{(D_W-1)\lambda}{D_W}\ge\frac{2D_W(D_W+1)(D_W-1)}{D_W}\ge 2(D_{W'}+1)D_{W'}.
\]
Since $\deg_{P_\infty}(\pi)\le D_U\cdot m$ and $S_t\subseteq Q_{m,b}$, every element of $S'=\pi(S_t)$ lies in $\mathcal{L}((D_U+1)mP_\infty)^{\,b-D_U}$. Thus, $S'$ is $P_{\infty}$-bounded and $\deg_{P_\infty}(\Psi')\le bh$. 

Since $D_U+1\le b$ and $h\ge bm$, we may take $h':=bh\ge\max\{\deg_{P_\infty}(\Psi'),b(D_U+1)m\}$. Applying the induction hypothesis to $(S',\Psi')$ and $W'$, and using $b-D_U\le b$, yields
\[
\begin{aligned}
\Pr_{\mathbf{P}}[S_n \nsubseteq U \mid P_1,\ldots,P_t]
&\le \Pr_{\mathbf{P}}\bigl[S'[\Psi',\mathbf P_{t+1:n}] \ne \{0\}\mid P_1,\ldots,P_t\bigr] \\
&\le
(2^{D_{W'}}-1)
\left( \frac{4e\cdot b^{2D_{W'}}\cdot b \cdot h\cdot (n-t)} {(N-n+1)\cdot \lambda^\star}
\right)^{\frac{\lambda^\star}{2D_{W'}}}.
\end{aligned}
\]
If $W$ is $S_t$-dead, the bound from Part I alone suffices. Otherwise, combining the two bounds gives
\[
\begin{aligned}
\Pr_{\mathbf{P}}[S_n \ne \{0\}\mid P_1,\ldots,P_t]
&\le
\Pr_{\mathbf{P}}[S_n \cap U \ne \{0\}\mid P_1,\ldots,P_t] \\
&\quad+\Pr_{\mathbf{P}}[S_n \nsubseteq U\mid P_1,\ldots,P_t] \\
&\le
(2^{D_U}-1)
\left(
\frac{4e\cdot b^{2D_U}\cdot h\cdot n}
{(N-n+1)\cdot (\lambda^\star-1)}
\right)^{\frac{\lambda^\star-1}{2D_U}} \\
&\quad+
(2^{D_{W'}}-1)
\left(
\frac{4e\cdot b^{2D_{W'}+1}\cdot h\cdot (n-t)}
{(N-n+1)\cdot \lambda^\star}
\right)^{\frac{\lambda^\star}{2D_{W'}}} \\
&\le
(2^{D_W}-2)
\left(
\frac{4e\cdot b^{2D_W}\cdot h\cdot n}
{(N-n+1)\cdot \lambda}
\right)^{\frac{\lambda}{2D_W}}.
\end{aligned}
\]
For the last inequality, $D_U+D_{W'}=D_W$, and
\[
\frac{\lambda}{\lambda^\star-1},\ \frac{\lambda}{\lambda^\star}
\le\frac{D_W}{D_W-1}\le b.
\]
Thus both bases are at most the base on the right hand side, which is less than $1$, and both exponents are at least $\lambda/(2D_W)$. Also,
$(2^{D_U}-1)+(2^{D_{W'}}-1)\le2^{D_W}-2$.
Averaging over the fixed prefixes gives the same bound conditional on $G$.

Recall that we already have an upper bound on $\Pr[\overline{G}]$. Therefore,
\[
\begin{aligned}
\Pr_{\mathbf{P}}[S_n \ne \{0\}]
&=
\Pr_{\mathbf{P}}[\overline{G}\wedge (S_n \ne \{0\})]
+
\Pr_{\mathbf{P}}[G\wedge (S_n \ne \{0\})] \\
&\le
\Pr_{\mathbf{P}}[\overline{G}]
+
\Pr_{\mathbf{P}}[S_n \ne \{0\}\mid G] \\
&\le
\left(
\frac{4e\cdot D_W^2\cdot h\cdot n}
{(N-n+1)\cdot \lambda}
\right)^{\frac{\lambda}{2D_W}}
+
(2^{D_W}-2)
\left(
\frac{4e\cdot b^{2D_W}\cdot h\cdot n}
{(N-n+1)\cdot \lambda}
\right)^{\frac{\lambda}{2D_W}} \\
&\le
(2^{D_W}-1)
\left(
\frac{4e\cdot b^{2D_W}\cdot h\cdot n}
{(N-n+1)\cdot \lambda}
\right)^{\frac{\lambda}{2D_W}},
\end{aligned}
\]
which yields the lemma.
\end{proof}

We now apply Lemma~\ref{lem:probably-trivial} to $S=Q_{m,b}$. Lemma~\ref{lem:lcl-to-Pinf} allows us to verify the required bound on $\Pi$ using $\Phi$.

\begin{thm}\label{thm:C_contain_Pr}
Let $b\in\N$ and let $q>\binom{b}{2}$ be a prime power. Let $\V$ be a $b$-local profile, and let $\cC$ be a random one-point AG code of rate $R$ as in Definition~\ref{def:random-one-point-ag}, with $0\le m<n\le N$. Suppose that $\varepsilon>0$ satisfies $\Phi(\V,U,R)\le-\varepsilon n\dim_{\F_q}U$ for every nonzero $U\in\mathcal{L}(\F_q^b)$.
If $\varepsilon n\ge2b(b+1)$, then
\[
\Pr\bigl[
\cC\text{ contains a nonzero matrix satisfying }\V
\bigr]
\le
(2^b-1)
\left(
\frac{(4b)^{4b}m}{\varepsilon(N-n+1)}
\right)^{\varepsilon n/(2b)}.
\]
\end{thm}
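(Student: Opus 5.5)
We apply Lemma~\ref{lem:probably-trivial} with $S=Q_{m,b}$ and $\Psi=\Psi_\V$, the scalar-extension profile of Definition~\ref{def:lift-F}. By Lemma~\ref{lem:lcl-to-Pinf}(1), the evaluation map identifies $Q_{m,b}[\Psi_\V,\mathbf P]$ with $\cC^b\cap\mathcal M_\V$. Hence the event that $\cC$ contains a nonzero matrix satisfying $\V$ is exactly the event $S[\Psi_\V,\mathbf P]\neq\{0\}$, and it suffices to bound the probability of the latter.

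\emph{Parameters.} Here $W=\mathrm{span}_F(S)=F^b$, since $S$ contains the constant vectors $e_j$ (recall $1\in\mathcal L(mP_\infty)$). Thus $D_W=b$. By Lemma~\ref{lem:tensor-kernel}, $\deg_{P_\infty}(\Psi_\V)=0$, so we may take $h=bm$, which also satisfies $h\ge bm$. We take $\lambda=\varepsilon n$; the assumption $\varepsilon n\ge2b(b+1)$ is exactly the requirement $\lambda\ge 2D_W(D_W+1)$.

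\emph{Potential hypothesis.} Let $U\subseteq F^b$ be a nonzero $S$-live subspace with $D=\dim_F U$. Since $\mathcal L(mP_\infty)$ is finite dimensional over $\F_q$ of dimension $k=Rn$, we have $S\cap U\subseteq Q_{m,b}\cap U$. By Lemma~\ref{lem:specialization}-type reasoning, or more directly since $\dim_{\F_q}(\mathcal L(mP_\infty)^b\cap U)\le kD$, we get $d_U(S)\le kD=RnD$. To see the last bound, choose $D$ coordinates on which projection from $U$ is injective; the projection maps $S\cap U$ injectively into $\mathcal L(mP_\infty)^D$. Then Lemma~\ref{lem:lcl-to-Pinf}(2) gives $U'\subseteq\F_q^b$ of dimension $D$ with $\sum_i\dim_F\psi_i(U)\ge\sum_i(D-\dim(U'\cap V_i))$. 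Remark~\ref{rem:Phi-via-strength} then yields
\[
\Pi_U(S,\Psi_\V)\le RnD-\sum_i\dim_F\psi_i(U)\le\Phi(\V,U',R)\le-\varepsilon nD .
\]
This is the hypothesis of Lemma~\ref{lem:probably-trivial} with $\lambda=\varepsilon n$.

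\emph{Conclusion.} Lemma~\ref{lem:probably-trivial} gives the bound
\[
(2^b-1)\left(\frac{4e\,b^{2b}\,bm\,n}{\varepsilon n(N-n+1)}\right)^{\varepsilon n/(2b)},
\]
and $4e\,b^{2b+1}\le(4b)^{4b}$ simplifies this to the stated form. The only points needing care are the injective coordinate projection used for $d_U(S)\le RnD$, and checking that $F^b$ is $S$-live. We expect the former to be the main step.
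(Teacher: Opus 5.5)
Your proposal is correct and matches the paper's proof essentially step for step. The paper also applies Lemma~\ref{lem:probably-trivial} with $S=Q_{m,b}$, $\Psi=\Psi_\V$, $\lambda=\varepsilon n$, $D_W=b$, $h=bm$, and verifies the potential hypothesis via Lemma~\ref{lem:lcl-to-Pinf}(2) together with $\dim_{\F_q}(U\cap Q_{m,b})\le k\dim_F U$; it proves the latter with a systematic generator matrix, which is the same as your injective coordinate projection.
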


\begin{proof}
Let $S = Q_{m,b}$ and let $\Psi = \Psi_{\mathcal{V}} = (\psi_1,\ldots,\psi_n)$ be the $b$-local $P_\infty$-bounded profile associated with $\mathcal{V}$. Fix a nontrivial $Q_{m,b}$-live $F$-linear subspace $U \subseteq F^b$. Lemma~\ref{lem:lcl-to-Pinf}(2) gives an $\F_q$-linear subspace $U'\subseteq\F_q^b$ with
$\dim_{\F_q}U'=\dim_F U$ such that
\[
\begin{aligned}
\sum_{i=1}^n \dim_F \psi_i(U) \ge \sum_{i=1}^n \Bigl(\dim_F U - \dim_{\mathbb{F}_q}(U' \cap V_i)\Bigr) =n \cdot \dim_F U - \sum_{i=1}^n \dim_{\mathbb{F}_q}(U' \cap V_i).
\end{aligned}
\]
We rewrite the last term via the potential function $\Phi(\mathcal{V},U',R)$ by $\Phi(\mathcal{V},U',R)=\sum_{i=1}^n \dim_{\mathbb{F}_q}(U' \cap V_i)-(n-k)\dim_{\mathbb{F}_q} U'.$ Therefore, by assumption of $\Phi(\mathcal{V},U',R) \le -\varepsilon \cdot n\dim_{\mathbb{F}_q} U'=-\varepsilon \cdot n\dim_{F} U$,
\[
\sum_{i=1}^n \dim_F \psi_i(U) \ge k \dim_F U - \Phi(\mathcal{V},U',R)\ge (k+\varepsilon n)\dim_F U.
\]
By definition, $\Pi_U(Q_{m,b},\Psi_{\mathcal{V}})=\dim_{\mathbb{F}_q}(U \cap Q_{m,b})-\sum_{i=1}^n \dim_F \psi_i(U).$ We claim that $\dim_{\mathbb{F}_q}(U \cap Q_{m,b}) \le k \cdot \dim_F U.$

To see this, let $d := \dim_F U$ and choose a generating matrix $G \in F^{b \times d}$ for $U$ in systematic form, so that $U = \{Gx : x \in F^d\}$ and $G$ contains a $d \times d$ identity submatrix. Then every element of $U \cap Q_{m,b}$ can be written as $Gx$ for some $\boldsymbol{x} \in Q_{m,d}$, and thus $|U \cap Q_{m,b}| \le |Q_{m,d}| = q^{kd}.$ Therefore $\dim_{\mathbb{F}_q}(U \cap Q_{m,b})= \log_q |U \cap Q_{m,b}| \le kd,$ proving the claim. Combining the above two inequalities, we obtain
\[
\Pi_U(Q_{m,b},\Psi_{\mathcal{V}})\le k\dim_F U - (k+\varepsilon n)\dim_F U= -\varepsilon n \cdot \dim_F U.
\]

Since $\mathrm{span}_F(S)=F^b$ and $\deg_{P_\infty}(\Psi_\V)=0$, apply Lemma~\ref{lem:probably-trivial} with $\lambda=\varepsilon n$, $D_W=b$, and $h=bm$. Its hypotheses hold by the inequality above and $\varepsilon n\ge2b(b+1)$. Since $4e b^{2b+1}\le(4b)^{4b}$, we obtain
\[
\Pr\bigl[\dim_{\F_q}(S[\Psi,\mathbf{P}])>0\bigr]
\le
(2^b-1)
\left(
\frac{(4b)^{4b}m}{\varepsilon(N-n+1)}
\right)^{\varepsilon n/(2b)}.
\]
Finally, Lemma~\ref{lem:lcl-to-Pinf}(1) shows that any nonzero matrix in $\cC$ satisfying $\V$ has a nonzero preimage in $S[\Psi,\mathbf{P}]$. The desired bound follows.
\end{proof}

When $R\le R_{\mathcal P}-\varepsilon$, we will apply Theorem~\ref{thm:C_contain_Pr} to the profile obtained by projecting each $V_i$ to $\F_q^b/W$. The following lemma helps us choose $W$ so that the required bound on $\Phi$ holds.

\begin{lemma}[{\cite[Lemma~4.5]{levi2024random}}]
\label{lem:4.5}
Let $b\ge2$ and let $\V$ be a $b$-local profile. There exists $W\in\mathcal L(\F_q^b)\setminus \mathcal L_{\mathrm{Dist}}(\F_q^b)$ such that $\Phi(\V,W,R_\V)=\max_{U\in\mathcal L(\F_q^b)}\Phi(\V,U,R_\V).$
\end{lemma}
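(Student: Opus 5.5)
The idea is to use supermodularity of the potential $\Phi(\V,\cdot,R)$ on the lattice $\mathcal L(\F_q^b)$, and to take $W$ to be the \emph{smallest} maximizer of $\Phi(\V,\cdot,R_\V)$. Throughout, $R=R_\V$ is fixed and we write $\Phi(U)$ for $\Phi(\V,U,R_\V)$.

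\emph{Step 1: supermodularity.} For subspaces $U,U'\subseteq\F_q^b$ and each $i$, the spaces $V_i\cap U$ and $V_i\cap U'$ satisfy
\[
(V_i\cap U)+(V_i\cap U')\subseteq V_i\cap(U+U'),\qquad (V_i\cap U)\cap(V_i\cap U')=V_i\cap(U\cap U').
\]
Hence
\[
\dim(V_i\cap(U+U'))+\dim(V_i\cap U\cap U')\ge \dim(V_i\cap U)+\dim(V_i\cap U').
\]
The modular identity $\dim(U+U')+\dim(U\cap U')=\dim U+\dim U'$ makes the linear term cancel. Summing over $i$ gives
\[
\Phi(U+U')+\Phi(U\cap U')\ge\Phi(U)+\Phi(U').
\]
So if $U,U'$ both attain $\mu:=\max_{U}\Phi(U)$, then both $U+U'$ and $U\cap U'$ also attain $\mu$: each is $\le\mu$, and their sum is $\ge 2\mu$. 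Since the lattice is finite, $W_0:=\bigcap\{U:\Phi(U)=\mu\}$ is itself a maximizer, and it is the unique inclusion-minimal one.

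\emph{Step 2: a maximizer in $\mathcal L_{\mathrm{Dist}}$ cannot be minimal.} First, $\mathcal L_{\mathrm{Dist}}(\F_q^b)$ is nonempty because it contains $\F_q^b$, so $R_\V$ is well defined. Suppose $U\in\mathcal L_{\mathrm{Dist}}(\F_q^b)$. By the definition of $R_\V$ as a minimum, $R_\V\le R_{\V,U}$. If $U\neq\{0\}$, choose $W\subsetneq U$ attaining the maximum in the formula for $R_{\V,U}$. Then
\[
1-\frac{\sum_i(\dim(V_i\cap U)-\dim(V_i\cap W))}{n(\dim U-\dim W)}\ \ge\ R_\V .
\]
Rearranging, $\sum_i(\dim(V_i\cap U)-\dim(V_i\cap W))\le(1-R_\V)n(\dim U-\dim W)$, which is exactly $\Phi(U)-\Phi(W)\le 0$. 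So every nonzero $U\in\mathcal L_{\mathrm{Dist}}$ has a proper subspace $W$ with $\Phi(W)\ge\Phi(U)$.

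\emph{Step 3: conclude.} Take $W=W_0$ and suppose, for contradiction, that $W_0\in\mathcal L_{\mathrm{Dist}}(\F_q^b)$.
\begin{itemize}
\item If $W_0=\{0\}$, this is impossible: since $b\ge2$, $\{0\}$ contains no $x$ with $x_1\ne x_2$, so $\{0\}\notin\mathcal L_{\mathrm{Dist}}$.
\item If $W_0\ne\{0\}$, Step 2 gives $W\subsetneq W_0$ with $\Phi(W)\ge\Phi(W_0)=\mu$. Then $W$ is a maximizer strictly smaller than $W_0$, contradicting the minimality from Step 1.
\end{itemize}
Hence $W_0\notin\mathcal L_{\mathrm{Dist}}(\F_q^b)$ and $\Phi(\V,W_0,R_\V)=\max_U\Phi(\V,U,R_\V)$, as required.

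The only step needing care is the supermodularity inequality in Step 1. The rest is a direct unpacking of the definitions of $R_{\V,U}$ and $R_\V$.
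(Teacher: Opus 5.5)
Your proof is correct, and Steps~2--3 are the natural minimal-maximizer argument behind \cite[Lemma~4.5]{levi2024random}; the paper cites that lemma and gives no proof of its own. Step~1 is valid but unnecessary: Step~3 only uses that no proper subspace of $W_0$ is a maximizer, and this already holds for any maximizer of minimal dimension, so supermodularity only adds the fact that the minimal maximizer is unique, which the lemma does not need.
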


\begin{proof}[Proof of Theorem~\ref{thm:threshold-AG-codes}, part~1]
Fix $\V\in\mathcal F$. Since $R\le R_{\mathcal P}-\varepsilon$ and $R_{\mathcal P}\le R_\V$, we have $R_\V-R\ge\varepsilon$. By Lemma~\ref{lem:4.5}, choose $W\notin\mathcal L_{\mathrm{Dist}}(\F_q^b)$ maximizing $\Phi(\V,\cdot,R_\V)$. Put $d=\dim W<b$, and choose a surjective $\F_q$-linear map $\varphi:\F_q^b\longrightarrow\F_q^{b-d}$ with $\ker\varphi=W.$ Let $\V'=(\varphi(V_1),\ldots,\varphi(V_n))$.

Suppose that $\cC$ contains $\V$, and let $A$ be a matrix satisfying $\V$ whose columns are pairwise distinct and belong to $\cC$. Applying $\varphi$ to each row of $A$ gives an $n\times(b-d)$ matrix $A'$ satisfying $\V'$. Its columns belong to $\cC$, since they are linear combinations of the columns of $A$. Moreover, $A'\ne0$: otherwise all rows of $A$ would lie in $W$, which would force two columns of $A$ to be equal because $W\notin\mathcal L_{\mathrm{Dist}}(\F_q^b)$. Consequently,
\[
\Pr[\cC\text{ contains }\V]
\le
\Pr[\cC\text{ contains a nonzero matrix satisfying }\V'].
\]

Let $U'\subseteq\F_q^{b-d}$ be a nonzero subspace and put $U=\varphi^{-1}(U')$. Then $W\subseteq U$ and $\dim U'=\dim U-d$. By the choice of $W$,
\[
\begin{aligned}
\Phi(\V,U,R)-\Phi(\V,W,R)
&=\Phi(\V,U,R_\V)-\Phi(\V,W,R_\V) -(R_\V-R)n(\dim U-d)\\
&\le -\varepsilon n(\dim U-d).
\end{aligned}
\]
Since $\varphi(V_i)\cap U'=\varphi(V_i\cap U)$ and $\ker\varphi=W$, we also have $\dim(\varphi(V_i)\cap U')=\dim(V_i\cap U)-\dim(V_i\cap W).$ It follows that
\[
\begin{aligned}
\Phi(\V',U',R)
&=\sum_{i=1}^n
  \bigl(\dim(V_i\cap U)-\dim(V_i\cap W)\bigr)-(1-R)n(\dim U-d)\\
&=\Phi(\V,U,R)-\Phi(\V,W,R)\le -\varepsilon n\dim U'.
\end{aligned}
\]
Thus Theorem~\ref{thm:C_contain_Pr} applies to $\V'$ with local size $b-d$, since $1\le b-d\le b$ and its remaining assumptions follow from those for $b$. Therefore,
\[
\Pr[\cC\text{ contains }\V]
\le
(2^{b-d}-1)
\left(
\frac{(4(b-d))^{4(b-d)}m}
{\varepsilon(N-n+1)}
\right)^{\varepsilon n/(2(b-d))}.
\]
If $(4b)^{4b}m/(\varepsilon(N-n+1))\ge1$, the desired bound for this fixed profile is trivial. Otherwise, increasing the base and decreasing the exponent gives
\[
\Pr[\cC\text{ contains }\V]
\le
(2^b-1)
\left(
\frac{(4b)^{4b}m}{\varepsilon(N-n+1)}
\right)^{\varepsilon n/(2b)}.
\]
A union bound over $\V\in\mathcal F$ proves part~1.
\end{proof}

\subsection{Above the random linear codes threshold}
\label{subsec:above-threshold}

We work in the setting of Theorem~\ref{thm:threshold-AG-codes} and write $k=\ell(mP_\infty)=Rn$. Consider the $\mathbb F_q$-linear space of matrices $A\in\mathcal C^b$ satisfying the profile $\mathcal V$. For each pair $u<v$, let the collision subspace consist of those matrices whose $u$-th and $v$-th columns are equal. We first estimate the dimension of the full space and then show that every collision subspace has strictly smaller dimension. Since there are only $\binom{b}{2}$ such subspaces and $q>\binom{b}{2}$, their union cannot cover the full space. Hence there is a matrix satisfying $\mathcal V$ whose columns are pairwise distinct. Throughout this subsection, all dimensions are over $\mathbb F_q$.

Fix a $b$-local profile $\mathcal V$. For an $\F_q$-linear space $W\subseteq\mathbb F_q^b$, let
\[
K_W:=\{A\in\mathcal C^b:
A_{i,*}\in V_i\cap W\text{ for every }i\in[n]\},
\qquad
c(W):=\sum_{i=1}^n
\bigl(\dim W-\dim(V_i\cap W)\bigr).
\]
Thus $\Phi(\mathcal V,W,R)=k\dim W-c(W).$


\begin{lemma}
\label{lem:ag-constrained-dimension}
Let $\lambda\ge 2b(b+1)$. For every fixed $\F_q$-linear space $W\subseteq\mathbb F_q^b$,
\[
\Pr\!\left[
\dim K_W>
\max_{T\subseteq W}\Phi(\mathcal V,T,R)
+(\lambda+1)\dim W
\right]
\le b(2^b-1)
\left(
\frac{(4b)^{4b}mn}{\lambda(N-n+1)}
\right)^{\lambda/(2b)}.
\]
\end{lemma}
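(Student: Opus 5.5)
We prove Lemma~\ref{lem:ag-constrained-dimension} by following the strategy in the technical overview: a chain of subspaces inside $W$ together with a subcode argument on each quotient. Given $W$ with $w=\dim W$, choose a chain $\{0\}=W_0\subsetneq W_1\subsetneq\cdots\subsetneq W_s=W$ greedily. Let $W_1$ be a nonzero subspace $T\subseteq W$ maximizing $\Phi(\V,T,R)-\Phi(\V,\{0\},R)$, taking the largest maximizer. Inductively, let $W_j\supsetneq W_{j-1}$ maximize $\Phi(\V,T,R)-\Phi(\V,W_{j-1},R)$ over $W_{j-1}\subsetneq T\subseteq W$, again taking the largest maximizer. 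By maximality, for every $T$ with $W_{j-1}\subsetneq T\subseteq W_j$ the increment $\Phi(\V,T,R)-\Phi(\V,W_{j-1},R)$ is at most that of $W_j$. Put $d_j=\dim W_j-\dim W_{j-1}$ and $\Phi_j:=\Phi(\V,W_j,R)-\Phi(\V,W_{j-1},R)$. Then $\sum_j\Phi_j=\Phi(\V,W,R)\le\max_{T\subseteq W}\Phi(\V,T,R)$. For the final estimate it is cleaner to allow any chain and bound each step by $\max(\Phi_j,0)$ plus slack. We then use $\max_T\Phi\ge\sum_j\Phi_j$ along the chain, discarding steps with negative $\Phi_j$ by merging them, and accept $\max_T\Phi$ as the quantity to beat.

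The dimension accounting is $\dim K_W=\sum_{j}\dim\bigl(K_{W_j}/K_{W_{j-1}}\bigr)$. Projecting rows by $\varphi_j:W_j\to W_j/W_{j-1}\cong\F_q^{d_j}$ maps $K_{W_j}$ into $K'_j:=\{A\in\cC^{d_j}: A_{i,*}\in\V^{(j)}_i\}$ with kernel $K_{W_{j-1}}$. Here $\V^{(j)}_i:=\varphi_j(V_i\cap W_j)$. As in part~1, we have $\dim\V^{(j)}_i=\dim(V_i\cap W_j)-\dim(V_i\cap W_{j-1})$. Hence $\Phi(\V^{(j)},U',R')=\Phi(\V,\varphi_j^{-1}U',R')-\Phi(\V,W_{j-1},R')$ for every rate $R'$. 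It therefore suffices to show that, with high probability, $\dim K'_j\le\max(\Phi_j,0)+(\lambda+1)d_j$ for each $j$. A union bound over at most $b$ steps, and over the choices hidden in the chain, then gives the factor $b(2^b-1)$.

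For a fixed step, write $d=d_j$. Choose a smaller one-point subcode $\cC_j=\mathrm{ev}(\mathcal L(m_jP_\infty))\subseteq\cC$ of dimension $k_j=k-\lceil(\max(\Phi_j,0)+\lambda d)/d\rceil-1$ or so. Riemann--Roch changes dimension by one per unit of $m$ in the nonspecial range, and Weierstrass gaps change it by at most one per unit, so some $m_j\le m$ achieves $k_j$ within $\pm1$. This is the source of the $+d$ in $(\lambda+1)d$. If $\dim K'_j>(k-k_j)d$, then $K'_j$ meets $\cC_j^{d}$ nontrivially, since $\dim\cC^d=kd$. So $\cC_j$ contains a nonzero matrix satisfying $\V^{(j)}$. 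Next verify the hypothesis of Theorem~\ref{thm:C_contain_Pr} for $\cC_j$ at rate $R_j=k_j/n$. For nonzero $U'\subseteq\F_q^{d}$ with $T=\varphi_j^{-1}U'$,
\[
\Phi(\V^{(j)},U',R_j)=\Phi(\V,T,R)-\Phi(\V,W_{j-1},R)-(k-k_j)\dim U'.
\]
By the choice of chain, $\Phi(\V,T,R)-\Phi(\V,W_{j-1},R)\le\max(\Phi_j,0)$. We need this to be at most $(k-k_j-\lambda/n\cdot n)\dim U'$, i.e.\ $\Phi(\V^{(j)},U',R_j)\le-\lambda\dim U'$.

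The main obstacle is here. The increment bound $\max(\Phi_j,0)$ does not scale with $\dim U'$, while the required bound is $\lambda\dim U'$. If $\dim U'<d$, we need the increment to be at most about $\frac{\dim U'}{d}\Phi_j$. To handle this, refine the chain choice: pick $W_j$ to maximize the \emph{normalized} increment $\bigl(\Phi(\V,T,R)-\Phi(\V,W_{j-1},R)\bigr)/(\dim T-\dim W_{j-1})$, taking the largest such $T$. Then every intermediate $T$ has increment at most $\frac{\dim U'}{d}\Phi_j$, and choosing $k-k_j\approx\Phi_j/d+\lambda+1$ makes the hypothesis hold with $\varepsilon n=\lambda$. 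Theorem~\ref{thm:C_contain_Pr} (with $m_j\le m$ and $d\le b$) then bounds the failure probability of step $j$ by $(2^b-1)\bigl((4b)^{4b}m/(\frac{\lambda}{n}(N-n+1))\bigr)^{\lambda/(2b)}$, which is exactly the claimed term. Summing, $\dim K_W\le\sum_j(\Phi_j+(\lambda+1)d_j)\le\Phi(\V,W,R)+(\lambda+1)\dim W$, using $\Phi_j\ge0$ where the steps are positive. For steps with negative normalized increment, $K'_j$ itself should be shown trivial by the same theorem with $\cC_j=\cC$. Finally, $\Phi(\V,W,R)\le\max_{T\subseteq W}\Phi(\V,T,R)$ finishes the proof. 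I expect the bookkeeping of negative increments and the Weierstrass-gap rounding to need the most care.
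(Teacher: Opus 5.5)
Your final, normalized choice of chain is the paper's chain. Since $\Phi(\V,T,R)-\Phi(\V,W_{j-1},R)=k(\dim T-\dim W_{j-1})-(c(T)-c(W_{j-1}))$, maximizing your normalized increment is the same as minimizing the paper's $\alpha(T\mid W_{j-1})$. Your per-step argument is also the paper's: project to $\F_q^{d_j}$, take a one-point subcode of dimension about $a_j-\lambda$ with $a_j:=k-\Phi_j/d_j$, apply Theorem~\ref{thm:C_contain_Pr}, and count dimensions inside $\cC^{d_j}$.

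The gap is in the final accounting, which is where you stop.
\begin{itemize}
\item Your closing chain $\dim K_W\le\sum_j(\Phi_j+(\lambda+1)d_j)\le\Phi(\V,W,R)+(\lambda+1)\dim W$ is false in general. For example, if every $V_i=\{0\}$ then $K_W=\{0\}$, yet $\Phi(\V,W,R)=-(1-R)n\dim W$. This is exactly why the lemma has $\max_{T\subseteq W}$.
\item Your treatment of negative steps (show $K'_j=\{0\}$ using $\cC_j=\cC$) also fails. Theorem~\ref{thm:C_contain_Pr} at rate $R$ needs the normalized increment to be at most $-\lambda$ on every intermediate subspace, so steps with normalized increment in $(-\lambda,0)$ are not covered.
\item Even where $K'_j=\{0\}$ does hold, $0\le\Phi_j+(\lambda+1)d_j$ fails once $\Phi_j<-(\lambda+1)d_j$.
\item Merging negative steps is not an argument either, since merging destroys the normalized maximality that the theorem's hypothesis relies on.
\end{itemize}

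The repair has two parts.

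First, use one rule for every step: $k_j:=\min\{k,\max\{0,\lfloor a_j-\lambda\rfloor\}\}$. Whenever $k_j>0$ this gives $k_j\le a_j-\lambda$, which is the hypothesis of Theorem~\ref{thm:C_contain_Pr}. In every case it gives $k-k_j\le (k-a_j)_++\lambda+1$. Hence $\dim K_{W_j}-\dim K_{W_{j-1}}\le \max(\Phi_j,0)+(\lambda+1)d_j$, including on negative steps, where the bound is just $(\lambda+1)d_j$.

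Second, and this is the missing idea, the normalized increments $g_j:=\Phi_j/d_j$ are nonincreasing along the greedy chain. For $T\supsetneq W_j$, start from $\Phi(\V,T,R)-\Phi(\V,W_{j-1},R)\le g_j(\dim T-\dim W_{j-1})$ and subtract $\Phi(\V,W_j,R)-\Phi(\V,W_{j-1},R)=g_jd_j$. This gives $\Phi(\V,T,R)-\Phi(\V,W_j,R)\le g_j(\dim T-\dim W_j)$, so $g_{j+1}\le g_j$. Consequently the steps with $\Phi_j>0$ form an initial segment ending at some $W_\ell$, and
\[
\sum_j\max(\Phi_j,0)=\Phi(\V,W_\ell,R)\le\max_{T\subseteq W}\Phi(\V,T,R).
\]

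Two smaller points.
\begin{itemize}
\item $\ell(tP_\infty)$ increases by at most one per unit of $t$, so every $k_j\in[1,k]$ is attained exactly. Rounding ``within $\pm1$'' upward would break the theorem's hypothesis. The $+1$ in $\lambda+1$ comes from the floor in $k_j$, not from Weierstrass gaps.
\item The chain depends only on $\V$, $W$ and $k$, so it is fixed before the places are sampled. No union bound over chain choices is needed; the factor $b$ is just the number of steps.
\end{itemize}
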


\begin{proof}

The case $W=\{0\}$ is immediate. We therefore assume $W\neq\{0\}$ and fix it throughout the proof. Starting from $W_0=\{0\}$, suppose that $W_{j-1}\neq W$ has already been chosen. Among all subspaces $T$ satisfying $W_{j-1}\subsetneq T\subseteq W,$ choose $W_j$ to be a subspace $T$ for which
\[
\alpha(T\mid W_{j-1}):=
\frac{c(T)-c(W_{j-1})}{\dim T-\dim W_{j-1}}
\]
is minimal and write $a_j:=\alpha(W_j\mid W_{j-1}),\,d_j:=\dim W_j-\dim W_{j-1}.$ Since the dimension strictly increases at every step, this process terminates after at most $\dim W\le b$ steps. Thus, for some $s\le\dim W$, we obtain the sequence $\{0\}=W_0\subsetneq W_1\subsetneq\cdots\subsetneq W_s=W$ and $c(W_j)-c(W_{j-1})=a_jd_j.$

The numbers $a_1,\ldots,a_s$ are nondecreasing. Indeed, fix $j<s$ and let $W_j\subsetneq T\subseteq W$. By the choice of $W_j$, $c(T)-c(W_{j-1})\ge a_j\bigl(\dim T-\dim W_{j-1}\bigr).$ Subtracting $c(W_j)-c(W_{j-1})=a_j\bigl(\dim W_j-\dim W_{j-1}\bigr)$ gives $\frac{c(T)-c(W_j)} {\dim T-\dim W_j}\ge a_j.$ Taking the minimum over such $T$ gives $a_{j+1}\ge a_j$. Hence $a_1\le a_2\le\cdots\le a_s.$

We now consider each extension $W_{j-1}\subsetneq W_j$ separately. Fix $j\in[s]$ and choose a surjective linear map $\pi_j:W_j\longrightarrow\F_q^{d_j}$ with $\ker(\pi_j)=W_{j-1}$. For each $i\in[n]$, define $V_i^{(j)}:=\pi_j(V_i\cap W_j)$ and $\V^{(j)}:=(V_1^{(j)},\ldots,V_n^{(j)}).$ Let $Z\subseteq\F_q^{d_j}$ and let $T:=\pi_j^{-1}(Z)$. Then $\dim Z=\dim T-\dim W_{j-1}$ and $\dim(V_i^{(j)}\cap Z)=\dim(V_i\cap T)-\dim(V_i\cap W_{j-1}).$ Therefore,
\begin{equation}
\label{eq:layer-constraint-lower-bound}
\sum_{i=1}^n
\bigl(
\dim Z-\dim(V_i^{(j)}\cap Z)
\bigr)
=
c(T)-c(W_{j-1})
\ge a_j\dim Z,
\end{equation}
where the last inequality follows from the definition of $a_j$ when $Z\neq\{0\}$ and is trivial for $Z=\{0\}$.

Set $k_j:=\min\{k,\max\{0,\lfloor a_j-\lambda\rfloor\}\}.$ If $k_j=0$, let $\cC_j=\{0\}$. Otherwise, choose an integer $0\le m_j\le m$ such that $\ell(m_jP_\infty)=k_j$ and set $\cC_j:=\mathrm{ev}_{\mathbf P}\mathcal L(m_jP_\infty)\subseteq\cC.$ Such an $m_j$ exists since $\ell(tP_\infty)$ increases from $1$ to $k$ in steps of size at most one as $t$ goes from $0$ to $m$. The chain, the maps $\pi_j$, and the integers $m_j$ are all fixed before the places are sampled. Assume now that $k_j>0$. For every nonzero $Z\subseteq\F_q^{d_j}$,~\eqref{eq:layer-constraint-lower-bound} gives $\Phi(\V^{(j)},Z,k_j/n)\le(k_j-a_j)\dim Z\le-\lambda\dim Z.$ Hence, when $k_j>0$, Theorem~\ref{thm:C_contain_Pr} applied to $\cC_j$ and $\V^{(j)}$ gives
\[
\Pr\!\left[
\cC_j^{d_j}
\text{ contains a nonzero matrix satisfying }\V^{(j)}
\right]
\le (2^b-1)
\left(
\frac{(4b)^{4b}mn}{\lambda(N-n+1)}
\right)^{\lambda/(2b)},
\]
using $d_j\le b$, $m_j\le m$. The size conditions hold since $\lambda\ge2b(b+1)\ge2d_j(d_j+1)$. For $k_j=0$, this event is impossible.

Define $K^{(j)}:=\{A\in\cC^{d_j}:A_{i,*}\in V_i^{(j)}\text{ for every }i\in[n]\}.$ Outside the event above, $K^{(j)}\cap\cC_j^{d_j}=\{0\}.$ Since $\dim\cC^{d_j}=kd_j$ and $\dim\cC_j^{d_j}=k_jd_j,$ we obtain $\dim K^{(j)}\le(k-k_j)d_j.$ Applying $\pi_j$ to every row of a matrix in $K_{W_j}$ gives a linear map $K_{W_j}\longrightarrow K^{(j)}$ whose kernel is exactly $K_{W_{j-1}}$. Its image has columns in $\cC$, since these columns are $\F_q$-linear combinations of the original columns. Hence $\dim K_{W_j}-\dim K_{W_{j-1}}\le(k-k_j)d_j.$ Summing over $j\in[s]$ gives $\dim K_W\le\sum_{j=1}^s(k-k_j)d_j.$ By the definition of $k_j$, we have $k-k_j\le(k-a_j)_++\lambda+1$ where $(x)_+:=\max\{x,0\}.$ 

Since $a_1\le\cdots\le a_s$, let $\ell$ be the last index such that $a_\ell\le k$, with $\ell=0$ if no such index exists. Then
\[
\sum_{j=1}^s d_j(k-a_j)_+
=
\sum_{j=1}^{\ell}d_j(k-a_j)
=
k\dim W_\ell-c(W_\ell)
=
\Phi(\V,W_\ell,R).
\]
Therefore, $\dim K_W\le\max_{T\subseteq W}\Phi(\V,T,R)+(\lambda+1)\dim W$ whenever none of the $s$ exceptional events above occurs.

Since $s\le b$ and each exceptional event has probability at most $(2^b-1)\left(\frac{(4b)^{4b}mn}{\lambda(N-n+1)}\right)^{\lambda/(2b)}$, a union bound gives
\[
\Pr\!\left[
\dim K_W>
\max_{T\subseteq W}\Phi(\V,T,R)
+
(\lambda+1)\dim W
\right]
\le
b(2^b-1)
\left(
\frac{(4b)^{4b}mn}{\lambda(N-n+1)}
\right)^{\lambda/(2b)}.
\]
\end{proof}

We now prove the above-threshold direction. The idea is to compare the full space of matrices satisfying $\mathcal V$ with the subspaces in which two columns are equal. We show that, with high probability, every such collision subspace has strictly smaller dimension. Since there are fewer than $q$ possible pairs of columns, their union cannot cover the full space.

\begin{proof}[Proof of Theorem~\ref{thm:threshold-AG-codes}, part~2]
By the definition of $R_{\mathcal P}$, fix $\mathcal V\in\mathcal F$ and $U\in\mathcal L_{\mathrm{Dist}}(\mathbb F_q^b)$ such that $R_{\mathcal V,U}=R_{\mathcal P}$. Then, for every proper subspace $T\subsetneq U$, $\Phi(\mathcal V,U,R_{\mathcal P})\ge\Phi(\mathcal V,T,R_{\mathcal P}).$ Since $R\ge R_{\mathcal P}+\varepsilon$, it follows that
\begin{equation}
\label{eq:ag-high-rate-potential-gap}
\Phi(\mathcal V,U,R)-\Phi(\mathcal V,T,R)
\ge\varepsilon n(\dim U-\dim T).
\end{equation}

Matrices in $\mathcal C^b$ whose rows lie in $U$ form a space of dimension $k\dim U$. At coordinate $i$, requiring the row to lie in $V_i\cap U$ imposes at most $\dim U-\dim(V_i\cap U)$ linear conditions. Hence, deterministically,
\begin{equation}
\label{eq:ag-total-space-lower-bound}
\dim K_U\ge k\dim U-c(U)=\Phi(\mathcal V,U,R).
\end{equation}

For $1\le u<v\le b$, set $H_{u,v}:=\{x\in U:x_u=x_v\}.$ Because $U\in\mathcal L_{\mathrm{Dist}}(\mathbb F_q^b)$, each $H_{u,v}$ is a proper subspace of $U$. Applying~\eqref{eq:ag-high-rate-potential-gap} to $T\subseteq H_{u,v}$ gives
\begin{equation}
\label{eq:ag-collision-potential-gap}
\max_{T\subseteq H_{u,v}}\Phi(\mathcal V,T,R)
\le\Phi(\mathcal V,U,R)-\varepsilon n.
\end{equation}

Take $\lambda=\varepsilon n/(2b)$. Since $\varepsilon n\ge4b^2(b+1)$, we have $\lambda\ge2b(b+1)$ and $(\lambda+1)\dim H_{u,v}\le\frac{\varepsilon n}{2}+b<\varepsilon n.$ Apply Lemma~\ref{lem:ag-constrained-dimension} to each $H_{u,v}$. Together with~\eqref{eq:ag-collision-potential-gap} and~\eqref{eq:ag-total-space-lower-bound}, a union bound shows that, except with probability at most $\binom{b}{2}b(2^b-1)
\left(
\frac{(4b)^{4b}mn}{\lambda(N-n+1)}
\right)^{\lambda/(2b)}$,
\[
\dim K_{H_{u,v}}
\le\max_{T\subseteq H_{u,v}}\Phi(\mathcal V,T,R)
 +(\lambda+1)\dim H_{u,v}
<\Phi(\mathcal V,U,R)
\le\dim K_U
\]
for every $u<v$.

On this event, every $K_{H_{u,v}}$ is a proper subspace of $K_U$. Since $q>\binom{b}{2}$,
\[
\left|K_U\setminus\bigcup_{u<v}K_{H_{u,v}}\right|
\ge q^{\dim K_U}
 -\binom{b}{2}q^{\dim K_U-1}
>0.
\]
A matrix in this difference satisfies $\mathcal V$ and has pairwise distinct columns in $\mathcal C$. Thus $\mathcal C$ contains $\mathcal V$, and hence $\mathcal C\in\mathcal P$. Finally,
\[
\Pr[\mathcal C\notin\mathcal P]
\le\binom{b}{2}\,b(2^b-1)
\left(
\frac{2b(4b)^{4b}m}{\varepsilon(N-n+1)}
\right)^{\varepsilon n/(4b^2)},
\]
which proves part~2.
\end{proof}

\subsection{List-decoding and list-recovery consequences}
\label{subsec:AG-applications}

Combining the below threshold part of Theorem~\ref{thm:threshold-AG-codes} with the random linear codes results in Section~\ref{LCL-property}, we obtain the following list decoding and list recovery guarantees for random AG codes.

Let $F/\F_q$ be a function field, fix a rational place $P_\infty$ and let $N$ be the number of rational places distinct from $P_\infty$. For integers $0\le m<n\le N$, sample $P_1,\ldots,P_n$ uniformly without replacement, let $D:=P_1+\cdots+P_n$ and let $\cC:=\cC_{\mathcal{L}}(D;mP_\infty)\subseteq\F_q^n$ with $R:=\frac{\ell(mP_\infty)}{n}.$

In the statements below, $\delta$ is the gap appearing in the corresponding random linear codes result, while $\varepsilon$ is the additional rate loss in transferring the result to random AG codes. For an output list size $L$, set $b=L+1$. We assume
\begin{equation}
\label{eq:AG-decoding-size-assumptions}
q>\binom{b}{2},
\qquad
\varepsilon n\ge4b(b+1),
\qquad
N-n+1\ge
\frac{2(4b)^{4b}m}{\varepsilon}(2\ell)^{8b^2/\varepsilon},
\end{equation}
where $\ell=1$ for list decoding and average weight list decoding. Under these assumptions, the resulting failure probability is exponentially small in $n$.

For each list decoding or list recovery corollary below, let $\mathcal P$ denote the corresponding decoding-failure property. The RLC result used in each corollary shows that, at rate $R+\varepsilon$, $\Pr[\mathcal C_{\mathrm{RLC}}\in\mathcal P]=o(1).$ Hence, for all sufficiently large $n$, $R_{\mathcal P}>R+\frac{\varepsilon}{2}.$ Indeed, otherwise $R+\varepsilon\ge R_{\mathcal P}+\varepsilon/2$ and Theorem~\ref{thm:RLC-threshold} would imply $\Pr[\mathcal C_{\mathrm{RLC}}\in\mathcal P]=1-o(1),$ a contradiction.

Therefore, the random AG code of rate $R$ lies at least $\varepsilon/2$ below the threshold $R_{\mathcal P}$. For these decoding applications, the corresponding profile family has size at most $(2\ell)^{bn},$ with $\ell=1$ for list decoding; the same crude bound also applies to the average weight variants. Applying Theorem~\ref{thm:threshold-AG-codes} with gap $\varepsilon/2$ and using~\eqref{eq:AG-decoding-size-assumptions} gives $\Pr[\mathcal C\in\mathcal P]\le(2^b-1)(2\ell)^{-bn}\le 2^{-n}$ for all sufficiently large $n$.

\begin{cor}[List-decodability of random AG codes]
\label{cor:AG-list-decodability-sharp}
Fix $q$ and $\rho\in(0,1-1/q)$. Let $C_{\rho,q}$ be an integer at least the constant in Theorem~\ref{thm:RLC-list-decodability-constant-q}. For fixed $\delta,\varepsilon>0$ with $R=1-H_q(\rho)-\delta-\varepsilon>0$, set $L:=\left\lceil\frac{H_q(\rho)}{\delta}\right\rceil+C_{\rho,q}.$ Under~\eqref{eq:AG-decoding-size-assumptions}, for all sufficiently large $n$, $\Pr\bigl[\cC\text{ is }(\rho,L)\text{ list decodable}\bigr]\ge1-2^{-n}.$
\end{cor}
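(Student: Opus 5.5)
The plan is to derive the corollary from the below-threshold part of Theorem~\ref{thm:threshold-AG-codes}, after first establishing that the RLC threshold for the failure property lies comfortably above $R$. Let $\mathcal P$ be the $b$-LCL property (with $b=L+1$) consisting of codes that are not $(\rho,L)$ list decodable. By Proposition~\ref{prop:LCL-decoding-recovery}, its profile family $\mathcal F$ satisfies $|\mathcal F|\le\binom{n}{\lfloor\rho n\rfloor}^{b}\le 2^{bn}$.

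\textbf{Step 1: locating the threshold.} Set $R'=R+\varepsilon=1-H_q(\rho)-\delta$. Theorem~\ref{thm:RLC-list-decodability-constant-q}, applied with gap $\delta$, shows that a random linear code of rate $R'$ is $(\rho,L)$ list decodable with probability at least $1-q^{-\Omega(\delta n)}$; equivalently, $\Pr[\cC_{\mathrm{RLC}}\in\mathcal P]=o(1)$. Suppose, for contradiction, that $R_{\mathcal P}\le R+\varepsilon/2$. Then $R'\ge R_{\mathcal P}+\varepsilon/2$, and part~1 of Theorem~\ref{thm:RLC-threshold} gives $\Pr[\cC_{\mathrm{RLC}}\in\mathcal P]\ge 1-q^{-\varepsilon n/2+b^2}=1-o(1)$, because $b$ is a constant. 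This is a contradiction for large $n$, so $R\le R_{\mathcal P}-\varepsilon/2$. One small point needs care here: the RLC result is stated for uniformly random $k$-dimensional codes, while Theorem~\ref{thm:RLC-threshold} uses the kernel model. The remark in Section~\ref{LCL-property} shows that the two models differ only with probability $O(q^{-k})$, and this difference is absorbed.

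\textbf{Step 2: applying the AG theorem.} Apply part~1 of Theorem~\ref{thm:threshold-AG-codes} with gap $\varepsilon/2$. Its hypotheses are $q>\binom b2$ and $(\varepsilon/2)n\ge 2b(b+1)$, and the latter is exactly $\varepsilon n\ge 4b(b+1)$ from \eqref{eq:AG-decoding-size-assumptions}. The theorem gives
\[
\Pr[\cC\in\mathcal P]\le 2^{bn}(2^b-1)\left(\frac{2(4b)^{4b}m}{\varepsilon(N-n+1)}\right)^{\varepsilon n/(4b)}.
\]
The sampling assumption in \eqref{eq:AG-decoding-size-assumptions}, with $\ell=1$, makes the base at most $2^{-8b^2/\varepsilon}$. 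Raising this to the power $\varepsilon n/(4b)$ gives at most $2^{-2bn}$. The total bound is therefore at most $(2^b-1)2^{-bn}\le 2^{-n}$ for large $n$, since $b\ge2$.

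The main obstacle is the threshold comparison in Step~1. It requires the RLC failure probability to be genuinely $o(1)$ while $q^{-\varepsilon n/2+b^2}\to0$, and both hold because $b$, $\delta$ and $\varepsilon$ are fixed constants. The remaining work is exponent bookkeeping.
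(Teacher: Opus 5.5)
Your proposal is correct and follows essentially the paper's argument. The paper likewise combines the random linear code result at rate $R+\varepsilon$ with part~1 of Theorem~\ref{thm:RLC-threshold} to force $R_{\mathcal P}>R+\varepsilon/2$. It then applies part~1 of Theorem~\ref{thm:threshold-AG-codes} with gap $\varepsilon/2$, the crude bound $|\mathcal F|\le 2^{bn}$ and the sampling condition to get $(2^b-1)2^{-bn}\le 2^{-n}$. Your remark reconciling the kernel model with the uniform $k$-dimensional model is sound; the paper covers this point only in its general discussion in Section~\ref{LCL-property}.
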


\begin{cor}[Average weight list-decodability of random AG codes]
\label{cor:AG-average-list-decodability}
Fix $q$ and $\rho\in(0,1-1/q)$, and let $C_{\rho,q}$ be the constant in Theorem~\ref{thm:RLC-average-list-decodability-constant-q}. For every sufficiently small fixed $\delta>0$ and fixed $\varepsilon>0$ with $R=1-H_q(\rho)-\delta-\varepsilon>0$, set $L:=\left\lceil\frac{C_{\rho,q}}{\delta}\right\rceil.$ Under \eqref{eq:AG-decoding-size-assumptions}, for all sufficiently large $n$, $\Pr\bigl[\cC\text{ is }(\rho,L)\text{ average weight list decodable}\bigr]\ge1-2^{-n}.$
\end{cor}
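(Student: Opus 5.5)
The plan is to apply the general transfer argument described just before the corollary, with Theorem~\ref{thm:RLC-average-list-decodability-constant-q} as the random-linear-code input and with the relevant failure property replaced by its average weight analogue. Let $\mathcal P$ denote the property of not being $(\rho,L)$ average weight list decodable. The first step is to record that $\mathcal P$ is a $b$-LCL property with $b=L+1$. Following the construction behind Proposition~\ref{prop:LCL-decoding-recovery} (as in~\cite{levi2024random}), a $\rho$ average weight clustered set of $L+1$ codewords is witnessed by a center $\mathbf z$ and an agreement pattern. After subtracting one column, the row constraints become coordinate-wise linear: the row must lie in the span of the all-ones vector on a specified subset of columns. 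The number of such profiles is at most $2^{bn}\le(2\ell)^{bn}$ with $\ell=1$, which is the crude bound quoted in the text.

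Next I would locate $R_{\mathcal P}$. Since $\rho,q$ are fixed and $\delta$ is sufficiently small, Theorem~\ref{thm:RLC-average-list-decodability-constant-q} shows that a random linear code of rate $1-H_q(\rho)-\delta=R+\varepsilon$ lies in $\mathcal P$ with probability at most $q^{-5n}=o(1)$. I then argue by contradiction. If $R_{\mathcal P}\le R+\varepsilon/2$, then the rate $R+\varepsilon$ exceeds $R_{\mathcal P}+\varepsilon/2$. Part~1 of Theorem~\ref{thm:RLC-threshold} would then give containment probability at least $1-q^{-\varepsilon n/2+b^2}=1-o(1)$, since $b$ is fixed. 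This is a contradiction for large $n$, so $R\le R_{\mathcal P}-\varepsilon/2$. One small point needs care here: the sampling convention, since the cited theorem uses the uniform $k$-dimensional model. This is handled by the remark in Section~\ref{LCL-property} that the two models differ with probability $O(q^{-k})$.

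Finally I would apply part~1 of Theorem~\ref{thm:threshold-AG-codes} with gap $\varepsilon/2$. Its hypothesis $(\varepsilon/2)n\ge 2b(b+1)$ is exactly the second condition in~\eqref{eq:AG-decoding-size-assumptions}, and $q>\binom b2$ is assumed. The resulting bound is
\[
|\mathcal F|(2^b-1)\Bigl(\tfrac{2(4b)^{4b}m}{\varepsilon(N-n+1)}\Bigr)^{\varepsilon n/(4b)}.
\]
The third assumption forces the base to be at most $2^{-8b^2/\varepsilon}$, so the power is at most $2^{-2bn}$. Multiplying by $|\mathcal F|\le 2^{bn}$ and $2^b-1$ leaves at most $(2^b-1)2^{-bn}\le 2^{-n}$ once $n$ is large. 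Taking complements then yields the claim.

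The main obstacle is the first step. I need to confirm that the average weight failure property genuinely fits the LCL framework with a profile family of size $\exp(O(n))$, rather than merely being implied by list-decoding failure. I would take this from~\cite{levi2024random}, where average weight clustering is encoded through profiles indexed by agreement patterns. I would verify that the count stays below $2^{bn}$, or at worst $(2\ell)^{bn}$, so that the final arithmetic still goes through.
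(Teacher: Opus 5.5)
Your proposal is correct and takes the same route as the paper's argument preceding the corollaries. You use Theorem~\ref{thm:RLC-average-list-decodability-constant-q} at rate $R+\varepsilon$ with gap $\delta$, then part~1 of Theorem~\ref{thm:RLC-threshold}, to get $R_{\mathcal P}>R+\varepsilon/2$; then part~1 of Theorem~\ref{thm:threshold-AG-codes} with gap $\varepsilon/2$, $|\mathcal F|\le 2^{bn}$ and the third condition in \eqref{eq:AG-decoding-size-assumptions} gives $(2^b-1)2^{-bn}\le 2^{-n}$. Your sketch of the LCL encoding goes beyond the paper, which only asserts the crude count, but the ``subtracting one column'' step is unnecessary: the condition that the entries of row $i$ indexed by the agreement set $J_i$ coincide is already linear, and $\sum_i|J_i|\ge(1-\rho)nb$ recovers average weight clustering.
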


\begin{cor}[Average weight list decoding near the generalized Singleton bound]
\label{cor:AG-average-list-decodability-GSB}
Fix $R,\delta,\varepsilon>0$ with $R+\delta+\varepsilon<1$ and $L\in\N$. Assume
\[
q>\max\left\{
(32(L+1))^{2/\delta},
\frac{2(L+1)}{1-R-\delta-\varepsilon}
\right\},
\qquad
\rho:=\frac{L}{L+1}(1-R-\delta-\varepsilon).
\]
Under \eqref{eq:AG-decoding-size-assumptions}, for all sufficiently large $n$ with $n>4L/\delta$ and $(R+\varepsilon)n,\delta n\in\N$,
\[
\Pr\bigl[\cC\text{ is }(\rho,L)\text{ average weight list decodable}\bigr]\ge1-2^{-n}.
\]
\end{cor}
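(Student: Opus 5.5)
The statement to prove is Corollary~\ref{cor:AG-average-list-decodability-GSB}. I will follow the general transfer scheme described just before the corollaries.

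Let $\mathcal P$ be the property ``not $(\rho,L)$ average weight list decodable''. By Proposition~\ref{prop:LCL-decoding-recovery} and its average-weight analogue, $\mathcal P$ is a $b$-LCL property with $b=L+1$. Its profile family has size at most $(2\ell)^{bn}=2^{bn}$, using $\ell=1$.

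\emph{Locating the threshold.} Consider a random linear code of rate $R':=R+\varepsilon$. The hypotheses match Theorem~\ref{thm:RLC-list-decodability-GSB} with $(R',\delta)$ in place of $(R,\varepsilon)$:
\begin{itemize}
\item $R'+\delta<1$;
\item $q>\max\{(32(L+1))^{2/\delta},\,2(L+1)/(1-R'-\delta)\}$;
\item $n>4L/\delta$ and $R'n,\delta n\in\N$;
\item $\rho=\frac{L}{L+1}(1-R'-\delta)$.
\end{itemize}
That theorem therefore gives $\Pr[\cC_{\mathrm{RLC}}\in\mathcal P]\le 8q^{-\delta n/4}=o(1)$.

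Now suppose $R_{\mathcal P}\le R+\varepsilon/2$. Then $R'\ge R_{\mathcal P}+\varepsilon/2$, and part~1 of Theorem~\ref{thm:RLC-threshold} gives $\Pr[\cC_{\mathrm{RLC}}\in\mathcal P]\ge 1-q^{-\varepsilon n/2+b^2}=1-o(1)$. This is a contradiction for large $n$. Hence $R\le R_{\mathcal P}-\varepsilon/2$.

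One subtlety is the sampling model. Theorem~\ref{thm:RLC-threshold} uses the kernel model, whereas Goyal--Guruswami may use uniform $k$-dimensional subspaces. The two models differ by probability at most $q^{-k}/(q-1)$, which is $o(1)$ because $R>0$ is fixed, so the contradiction survives.

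\emph{Applying the AG threshold.} I apply part~1 of Theorem~\ref{thm:threshold-AG-codes} with gap $\varepsilon/2$. Its requirements hold:
\begin{itemize}
\item $q>\binom b2$ and $(\varepsilon/2)n\ge 2b(b+1)$ both follow from \eqref{eq:AG-decoding-size-assumptions};
\item $0\le m<n\le N$ is part of the setting.
\end{itemize}
The theorem then gives
\[
\Pr[\cC\in\mathcal P]\le 2^{bn}(2^b-1)\left(\frac{2(4b)^{4b}m}{\varepsilon(N-n+1)}\right)^{\varepsilon n/(4b)}.
\]
The third condition of \eqref{eq:AG-decoding-size-assumptions}, with $\ell=1$, makes the base at most $2^{-8b^2/\varepsilon}$. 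The power is therefore at most $2^{-2bn}$, and
\[
\Pr[\cC\in\mathcal P]\le(2^b-1)2^{-bn}\le 2^{-n}
\]
for $n$ large. Taking the complement gives the claim.

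\emph{Main obstacle.} The delicate step is the threshold comparison, namely checking that the rate shift $R\mapsto R+\varepsilon$ is compatible with the divisibility conditions. The corollary assumes $(R+\varepsilon)n,\delta n\in\N$, which is exactly what Theorem~\ref{thm:RLC-list-decodability-GSB} needs at rate $R'$. Beyond this, the remaining work is bookkeeping.
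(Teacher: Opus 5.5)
Your proposal is correct and follows the paper's route: you use Theorem~\ref{thm:RLC-list-decodability-GSB} at rate $R+\varepsilon$ with gap $\delta$, together with part~1 of Theorem~\ref{thm:RLC-threshold}, to force $R_{\mathcal P}>R+\varepsilon/2$. You then apply part~1 of Theorem~\ref{thm:threshold-AG-codes} with gap $\varepsilon/2$, the family bound $2^{bn}$ and the third condition of \eqref{eq:AG-decoding-size-assumptions} to get $(2^b-1)2^{-bn}\le 2^{-n}$; your remark on kernel versus uniform-subspace sampling is a harmless extra detail that the paper handles at the start of Section~\ref{LCL-property}.
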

This follows from Theorem~\ref{thm:RLC-list-decodability-GSB} applied at rate $R+\varepsilon$ with gap $\delta$.

\begin{cor}[List-recoverability of random AG codes]
\label{cor:AG-list-recoverability-entropy}
Fix $q$, an integer $1\le\ell<q$, and $\rho\in(0,1-\ell/q)$. Let $C_{\rho,\ell,q}$ be the constant in Theorem~\ref{thm:RLC-list-recoverary}. For fixed $\delta,\varepsilon>0$ with $R=1-h_{q,\ell}(\rho)-\delta-\varepsilon>0$, set $L:=\left\lceil\frac{C_{\rho,\ell,q}}{\delta}\right\rceil.$ Under \eqref{eq:AG-decoding-size-assumptions}, for all sufficiently large $n$, $\Pr\bigl[\cC\text{ is }(\rho,\ell,L)\text{ list recoverable}\bigr]\ge1-2^{-n}.$
\end{cor}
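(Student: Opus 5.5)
The statement to prove is Corollary~\ref{cor:AG-list-recoverability-entropy}. I would follow exactly the template laid out in the paragraph preceding the corollaries: obtain an RLC guarantee at rate $R+\varepsilon$, convert it into a lower bound on the threshold $R_{\mathcal P}$, and then invoke the below-threshold part of Theorem~\ref{thm:threshold-AG-codes}.

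\textbf{Setup.} Let $\mathcal P$ be the property of not being $(\rho,\ell,L)$ list recoverable, and set $b=L+1$. By Proposition~\ref{prop:LCL-decoding-recovery}(2), $\mathcal P$ is a $b$-LCL property whose profile family satisfies
\[
|\mathcal F|\le\binom{n}{\lfloor\rho n\rfloor}^{b}\ell^{bn}\le 2^{bn}\ell^{bn}=(2\ell)^{bn}.
\]

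\textbf{Lower bound on the threshold.} The rate $R+\varepsilon=1-h_{q,\ell}(\rho)-\delta$ lies in the range of Theorem~\ref{thm:RLC-list-recoverary}, with $\delta$ in the role of that theorem's gap; note $0<\delta<1-h_{q,\ell}(\rho)$ because $R>0$. With $L=\lceil C_{\rho,\ell,q}/\delta\rceil$, that theorem says an RLC of rate $R+\varepsilon$ lies in $\mathcal P$ with probability at most $q^{-n}$. Now suppose $R_{\mathcal P}\le R+\varepsilon/2$. Then $R+\varepsilon\ge R_{\mathcal P}+\varepsilon/2$, and part~1 of Theorem~\ref{thm:RLC-threshold} gives containment probability at least $1-q^{-\varepsilon n/2+b^2}$. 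Since $b$ and $\varepsilon$ are fixed, this tends to $1$, which contradicts $q^{-n}$ for large $n$. Hence $R_{\mathcal P}>R+\varepsilon/2$, that is, $R\le R_{\mathcal P}-\varepsilon/2$.

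There is a small caveat here. The RLC conventions differ (random parity-check matrix versus uniform $k$-dimensional subspace), and the rate $(R+\varepsilon)n$ may not be an integer. The remark in Section~\ref{LCL-property} shows the two models agree up to $q^{-k}$, so the contradiction is unaffected. For rounding, I would apply the theorem at the nearest admissible rate and absorb the $O(1/n)$ discrepancy into $\varepsilon/2$.

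\textbf{Applying the AG threshold theorem.} I apply part~1 of Theorem~\ref{thm:threshold-AG-codes} with gap $\varepsilon/2$. Its hypotheses are checked as follows:
\begin{itemize}
\item $q>\binom b2$ holds by \eqref{eq:AG-decoding-size-assumptions};
\item $(\varepsilon/2)n\ge 2b(b+1)$ is equivalent to $\varepsilon n\ge 4b(b+1)$, which is assumed;
\item $0\le m<n\le N$ is part of the setup.
\end{itemize}
The theorem then gives
\[
\Pr[\cC\in\mathcal P]\le(2\ell)^{bn}(2^b-1)\Bigl(\frac{2(4b)^{4b}m}{\varepsilon(N-n+1)}\Bigr)^{\varepsilon n/(4b)}.
\]
By the third condition in \eqref{eq:AG-decoding-size-assumptions}, the base is at most $(2\ell)^{-8b^2/\varepsilon}$. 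Raising this to the power $\varepsilon n/(4b)$ gives at most $(2\ell)^{-2bn}$. The total is therefore at most $(2^b-1)(2\ell)^{-bn}\le(2^b-1)2^{-bn}$, which is at most $2^{-n}$ for large $n$ since $b\ge2$.

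\textbf{Expected obstacle.} The only delicate step is the threshold comparison: Theorem~\ref{thm:RLC-list-recoverary} is stated for sufficiently large $n$, while $R_{\mathcal P}$ is defined for each fixed $n$. I would handle this by fixing $n$ large enough for both the RLC theorem and the inequality $1-q^{-\varepsilon n/2+b^2}>q^{-n}$ to hold, and then running the contradiction argument at that $n$. Everything else is bookkeeping.
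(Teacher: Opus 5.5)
Your proposal is correct and follows the paper's argument essentially verbatim. You apply Theorem~\ref{thm:RLC-list-recoverary} at rate $R+\varepsilon$ together with part~1 of Theorem~\ref{thm:RLC-threshold} to force $R_{\mathcal P}>R+\varepsilon/2$ for large $n$, then use part~1 of Theorem~\ref{thm:threshold-AG-codes} with gap $\varepsilon/2$, the bound $|\mathcal F|\le(2\ell)^{bn}$, and the third condition in \eqref{eq:AG-decoding-size-assumptions} to reach $(2^b-1)(2\ell)^{-bn}\le 2^{-n}$; your remarks on the sampling model, rate rounding, and order of quantifiers in $n$ just make explicit points the paper leaves implicit.
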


\begin{cor}[List recovery with a smaller output list]
\label{cor:AG-list-recoverability-GG}
Fix $R,\delta,\varepsilon>0$ with $R+\delta+\varepsilon<1$ and $\ell\in\N$. Set $L:=\left\lceil
\left(\frac{\ell}{R+\delta+\varepsilon}\right)^%
{2(R+\delta+\varepsilon)/\delta}
\right\rceil$ and $\rho:=1-R-\delta-\varepsilon,$ and assume $q>(2L)^{8/\delta}$. Under \eqref{eq:AG-decoding-size-assumptions}, for all sufficiently large $n$ with $n\ge8(L^2+L)/\delta$ and $(R+\varepsilon)n\in\N$, $\Pr\bigl[\cC\text{ is }(\rho,\ell,L)\text{ list recoverable}\bigr]\ge1-2^{-n}.$
\end{cor}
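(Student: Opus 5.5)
This corollary is an instance of the generic transfer argument stated just before Corollary~\ref{cor:AG-list-decodability-sharp}. The plan is: first show that the RLC threshold for list-recovery failure sits strictly above $R+\varepsilon/2$, then apply part~1 of Theorem~\ref{thm:threshold-AG-codes} with gap $\varepsilon/2$.

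Step one is to locate the threshold. Let $\mathcal P$ be the property ``not $(\rho,\ell,L)$ list recoverable''. By Proposition~\ref{prop:LCL-decoding-recovery}(2), $\mathcal P$ is a $b$-LCL property with $b=L+1$, and its profile family satisfies $|\mathcal F|\le\binom{n}{\lfloor\rho n\rfloor}^{b}\ell^{bn}\le(2\ell)^{bn}$.

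Apply Theorem~\ref{thm:RLC-list-recoverability-GG} at rate $R':=R+\varepsilon$ with slack $\delta$. The hypotheses match after substitution: $R'+\delta<1$, the list size is $L=\lceil(\ell/(R'+\delta))^{2(R'+\delta)/\delta}\rceil$, $q>(2L)^{8/\delta}$, $n\ge 8(L^2+L)/\delta$, and $R'n\in\N$. The radius is $1-R'-\delta=\rho$. So an RLC of rate $R+\varepsilon$ lies in $\mathcal P$ with probability at most $q^{-\delta n/8}=o(1)$.

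Now suppose $R_{\mathcal P}\le R+\varepsilon/2$. Then $R+\varepsilon\ge R_{\mathcal P}+\varepsilon/2$, and Theorem~\ref{thm:RLC-threshold}(1) gives $\Pr[\mathcal C_{\mathrm{RLC}}\in\mathcal P]\ge 1-q^{-\varepsilon n/2+b^2}=1-o(1)$. This contradicts the previous paragraph, so $R\le R_{\mathcal P}-\varepsilon/2$ for all large $n$.

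Step two is to apply Theorem~\ref{thm:threshold-AG-codes}, part~1, with gap $\varepsilon/2$. Its side condition $(\varepsilon/2)n\ge 2b(b+1)$ is exactly $\varepsilon n\ge 4b(b+1)$ from \eqref{eq:AG-decoding-size-assumptions}, and $q>\binom b2$ is assumed there too. The resulting bound is
\[
|\mathcal F|(2^b-1)\left(\frac{2(4b)^{4b}m}{\varepsilon(N-n+1)}\right)^{\varepsilon n/(4b)}.
\]
The third condition in \eqref{eq:AG-decoding-size-assumptions} makes the base at most $(2\ell)^{-8b^2/\varepsilon}$. Raised to the power $\varepsilon n/(4b)$, this gives at most $(2\ell)^{-2bn}$. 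Multiplying by $|\mathcal F|\le(2\ell)^{bn}$ leaves $(2^b-1)(2\ell)^{-bn}$, which is at most $2^{-n}$ once $n$ is large, since $b\ge2$. Taking complements gives the stated probability bound.

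The main obstacle is the bookkeeping in step one. One must check that the parameters of Theorem~\ref{thm:RLC-list-recoverability-GG}, instantiated at rate $R+\varepsilon$, reproduce exactly the $L$, $\rho$ and alphabet condition of the corollary, including the integrality requirement $(R+\varepsilon)n\in\N$. One must also check that RLC failure probability $o(1)$ is enough to rule out $R_{\mathcal P}\le R+\varepsilon/2$ through Theorem~\ref{thm:RLC-threshold}. A minor further point is that the RLC result uses a kernel-of-$H$ sampling model; as noted in Section~\ref{LCL-property}, this does not affect high-probability statements.
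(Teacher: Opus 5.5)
Your proposal is correct and follows the paper's own argument: the generic transfer paragraph preceding the corollaries, instantiated with Theorem~\ref{thm:RLC-list-recoverability-GG} at rate $R+\varepsilon$ with slack $\delta$. That argument uses Theorem~\ref{thm:RLC-threshold} to force $R_{\mathcal P}>R+\varepsilon/2$, then applies part~1 of Theorem~\ref{thm:threshold-AG-codes} with gap $\varepsilon/2$ and $|\mathcal F|\le(2\ell)^{bn}$. Your parameter matching and the final estimate $(2^b-1)(2\ell)^{-bn}\le 2^{-n}$ are both in order.
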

This follows from Theorem~\ref{thm:RLC-list-recoverability-GG} applied at rate $R+\varepsilon$ with gap $\delta$.

\begin{cor}
\label{cor:AG-list-recoverability-LS}
Fix $R,\delta,\varepsilon>0$ with $R+\delta+\varepsilon<1$ and $\ell\in\N$. Set $L:=\left\lceil
\left(\frac{2\ell}{\delta}\right)^{2\ell/\delta}
\right\rceil,\, \rho:=1-R-\delta-\varepsilon$ and assume $q\ge\max\left\{
\ell^{8(R+\varepsilon)/\delta+6},
\ell\cdot2^{4/\delta}
\right\}.$ Under \eqref{eq:AG-decoding-size-assumptions}, for all sufficiently large $n$, $\Pr\bigl[\cC\text{ is }(\rho,\ell,L)\text{-list-recoverable}\bigr]
\ge1-2^{-n}.$
\end{cor}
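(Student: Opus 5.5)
The plan is to transfer the random linear code guarantee of Theorem~\ref{thm:RLC-list-recoverability-LS} to random one-point AG codes through the threshold $R_{\mathcal P}$, following the template described just before Corollary~\ref{cor:AG-list-decodability-sharp}. Let $b=L+1$ and let $\mathcal P$ be the property of not being $(\rho,\ell,L)$-list-recoverable. By Proposition~\ref{prop:LCL-decoding-recovery}(2), $\mathcal P$ is a $b$-LCL property with a profile family $\mathcal F$ satisfying $|\mathcal F|\le\binom{n}{\lfloor\rho n\rfloor}^{b}\ell^{bn}\le(2\ell)^{bn}$.

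\textbf{Step 1: the random linear code is below the threshold.} Set $R':=R+\varepsilon$ and apply Theorem~\ref{thm:RLC-list-recoverability-LS} at rate $R'$ with gap $\delta$.
\begin{itemize}
\item The alphabet hypothesis $q\ge\max\{\ell^{8R'/\delta+6},\ell\cdot2^{4/\delta}\}$ is exactly what is assumed, and $1-R'-\delta=\rho>0$.
\item So a random linear code of rate $R'$ is $(\rho,\ell,L'')$-list-recoverable with $L''\le(2\ell/\delta)^{2\ell/\delta}\le L$, except with probability at most $2q^{-\delta n/8}=o(1)$.
\item List recoverability is monotone in the output list size, so a random linear code of rate $R'$ lies in $\mathcal P$ with probability $o(1)$.
\end{itemize}

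\textbf{Step 2: locating $R_{\mathcal P}$.} Suppose for contradiction that $R_{\mathcal P}\le R+\varepsilon/2$. Then $R'\ge R_{\mathcal P}+\varepsilon/2$, and Theorem~\ref{thm:RLC-threshold}(1) gives $\Pr[\mathcal C_{\mathrm{RLC}}\in\mathcal P]\ge1-q^{-\varepsilon n/2+b^2}=1-o(1)$. Since $b$ is fixed, this contradicts Step~1. Hence $R\le R_{\mathcal P}-\varepsilon/2$ for all sufficiently large $n$.

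A minor point is the sampling model. Theorem~\ref{thm:RLC-threshold} uses the kernel model, while Theorem~\ref{thm:RLC-list-recoverability-LS} uses uniform $k$-dimensional subspaces. As noted in Section~\ref{LCL-property}, the two models differ by conditioning on an event of probability $1-O(q^{-k})$, so both $o(1)$ statements transfer. Integrality of $R'n$ is harmless here, since $R'n=\ell(mP_\infty)+\varepsilon n$ can be arranged or absorbed by slightly perturbing $\varepsilon$.

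\textbf{Step 3: apply the AG threshold theorem.} Apply Theorem~\ref{thm:threshold-AG-codes}(1) with gap $\varepsilon/2$.
\begin{itemize}
\item Its hypotheses $q>\binom b2$ and $(\varepsilon/2)n\ge2b(b+1)$ follow from~\eqref{eq:AG-decoding-size-assumptions}.
\item It gives $\Pr[\mathcal C\in\mathcal P]\le(2\ell)^{bn}(2^b-1)\,\beta^{\varepsilon n/(4b)}$ with base $\beta=\frac{2(4b)^{4b}m}{\varepsilon(N-n+1)}$.
\item The third condition in~\eqref{eq:AG-decoding-size-assumptions} gives $\beta\le(2\ell)^{-8b^2/\varepsilon}$, so $\beta^{\varepsilon n/(4b)}\le(2\ell)^{-2bn}$.
\item The total is therefore at most $(2^b-1)(2\ell)^{-bn}\le(2^b-1)2^{-bn}\le2^{-n}$ for large $n$, since $b\ge2$.
\end{itemize}
Taking complements yields the claimed bound $\Pr[\mathcal C\text{ is }(\rho,\ell,L)\text{-list-recoverable}]\ge1-2^{-n}$.

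The main obstacle is Step~2: converting an $o(1)$ failure probability for random linear codes into a quantitative lower bound on $R_{\mathcal P}$. This requires that Theorem~\ref{thm:RLC-threshold}(1) is applied at the same block length $n$ and alphabet, and that the two random linear code sampling models are reconciled. Steps~1 and~3 are bookkeeping.
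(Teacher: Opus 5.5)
Your proposal is correct and follows the paper's own route. You apply Theorem~\ref{thm:RLC-list-recoverability-LS} at rate $R+\varepsilon$ with gap $\delta$ to get $\Pr[\mathcal C_{\mathrm{RLC}}\in\mathcal P]=o(1)$, deduce $R_{\mathcal P}>R+\varepsilon/2$ by contradiction with Theorem~\ref{thm:RLC-threshold}(1), and then apply Theorem~\ref{thm:threshold-AG-codes}(1) with gap $\varepsilon/2$, $|\mathcal F|\le(2\ell)^{bn}$ and~\eqref{eq:AG-decoding-size-assumptions} to obtain $(2^b-1)(2\ell)^{-bn}\le 2^{-n}$; your extra remarks on the two sampling models and on the integrality of $(R+\varepsilon)n$ only make explicit details the paper treats briefly in Section~\ref{LCL-property} or leaves implicit.
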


This follows from Theorem~\ref{thm:RLC-list-recoverability-LS} applied at rate $R+\varepsilon$ with gap $\delta$.

\paragraph{Proximity gaps and correlated agreement.}
We first recall the curve versions of the two notions. 

\begin{defn}
Let $\mathcal C\subseteq\F_q^n$ be a linear code. For integers $t\ge0$ and $0\le E\le E^+\le n$, and $\tau\in[0,1]$, we say that $\mathcal C$ satisfies the $(t,E,E^+,\tau)$-\emph{curve proximity gap property} if the following holds.

For every curve of degree at most $t$ $\Gamma(\alpha)=\mathbf{u}_0+\alpha \mathbf{u}_1+\cdots+\alpha^t \mathbf{u}_t$ with $\alpha\in\F_q$ and $\mathbf u_0,\ldots,\mathbf u_t\in\F_q^n$, if $\bigl|\{\alpha\in\F_q:d(\Gamma(\alpha),\mathcal C)\le E\}\bigr|>\tau q,$ then $d(\Gamma(\alpha),\mathcal C)\le E^+$ for every $\alpha\in\F_q.$
\end{defn}

\begin{defn}[Definition~5.1~\cite{yuan2026proximity}]
Let $\mathcal C\subseteq\F_q^n$ be a linear code, with $t,E,E^+,\tau$ as above. We say that $\mathcal C$ satisfies the $(t,E,E^+,\tau)$-\emph{curve correlated agreement property} if, for every curve $\Gamma(\alpha)=\mathbf{u}_0+\alpha \mathbf{u}_1+\cdots+\alpha^t \mathbf{u}_t$ with $\mathbf u_0,\ldots,\mathbf u_t\in\F_q^n$, whenever $\bigl|\{\alpha\in\F_q:d(\Gamma(\alpha),\mathcal C)\le E\}\bigr|>\tau q,$ there exist codewords $\mathbf{c}_0,\ldots,\mathbf{c}_t\in\mathcal C$ such that $\left|\bigcup_{j=0}^t\mathrm{supp}(\mathbf u_j-\mathbf c_j)\right|\le E^+.$
\end{defn}

Yuan and Zhu~\cite{yuan2026proximity} prove proximity gap and correlated agreement results for random linear codes and random Reed--Solomon codes. In their random Reed--Solomon argument, a failure of curve correlated agreement is reduced to the occurrence of a suitable local profile. Replacing the random RS profile bound in that argument with the random AG profile bound proved above gives the following analogous result for random one-point AG codes.

\begin{cor}[Curve correlated agreement for random AG codes]
\label{cor:AG-curve-CA}
Fix $R\in(0,1)$, $0<\varepsilon<(1-R)/2$, and an integer $t\ge1$. Set $\beta:=1+\left\lceil\frac{4(t+1)(1-R)}{\varepsilon}\right\rceil .$ There exists a constant $C=C_{R,t}>0$ such that the following holds for all sufficiently large $n$.

Assume $q>\binom{\beta}{2},\,\log q=o(n)$ and $N-n+1\ge m\,2^{C/\varepsilon^3}.$ Let $0<\rho\le1-R-\varepsilon,\,E:=\lfloor\rho n\rfloor$ and $d_0:=\left\lfloor\left(1-R-\frac{\varepsilon}{4}\right)n\right\rfloor.$ For every integer $E\le E^+<d_0$, define $B_{E,E^+}:=\frac{E^++1}{E^+-E+1}$ and
\[
\tau_t
:=
\min\left\{
1,\,
\frac{tB_{E,E^+}}{q}
\left(
\frac{d_0}{d_0-E}
\right)^{
\left\lceil
4(t+1)(1-R)/\varepsilon
\right\rceil}
\right\}.
\]

Then, with probability at least $1-\exp\bigl(-\Omega_{R,\varepsilon,t}(n)\bigr),$ the random one-point AG code $\mathcal C$ satisfies the $(t,E,E^+,\tau_t)$-curve correlated agreement property. Consequently, it also satisfies the $(t,E,E^+,\tau_t)$-curve proximity gap property.
\end{cor}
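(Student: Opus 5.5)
We follow the route the paper describes for the random Reed--Solomon case in~\cite{yuan2026proximity}. A failure of curve correlated agreement is reduced to the containment of a local profile by $\mathcal C$. The random RS profile bound is then replaced by the below-threshold part of Theorem~\ref{thm:threshold-AG-codes}.

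\emph{Step 1: reduction to local profiles.} Suppose a curve $\Gamma$ has more than $\tau_tq$ good parameters $\alpha$, i.e.\ $d(\Gamma(\alpha),\mathcal C)\le E$, but no codewords $\mathbf c_0,\dots,\mathbf c_t$ achieve joint support at most $E^+$. The syndrome-space argument of~\cite{yuan2026proximity} then produces $\beta$ good parameters $\alpha_1,\dots,\alpha_\beta$ together with close codewords $\mathbf c(\alpha_j)$. This uses a counting (Johnson/pigeonhole type) step, which is where the factors $B_{E,E^+}$ and $(d_0/(d_0-E))^{\beta-1}$ in $\tau_t$ come from. The matrix $A$ with columns $\mathbf c(\alpha_j)$ then satisfies a $\beta$-local profile $\V$. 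At each coordinate $i$, the row $A_{i,*}$ lies in the subspace forced by the error pattern: on coordinates where $\Gamma(\alpha_j)$ agrees with $\mathbf c(\alpha_j)$, the entries are prescribed by a degree-$t$ polynomial in $\alpha_j$. The columns are pairwise distinct, since otherwise correlated agreement would already hold. This reduction is purely combinatorial and field-independent, so we import it verbatim. We also record that the resulting family $\mathcal F$ has size at most $\binom{n}{\le E}^{\beta}q^{O(\beta)}\cdot(\text{choices of }\alpha_j)$. Under $\log q=o(n)$ this is $\exp(O_{R,\varepsilon,t}(n))$ up to a $q^{o(n)}$ factor.

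\emph{Step 2: the RLC threshold.} We need $R_{\mathcal P}\ge R+c\varepsilon$ for some constant $c$, say $c=1/4$, for this profile family. This is exactly the potential computation of~\cite{yuan2026proximity}. For every $U\in\mathcal L_{\mathrm{Dist}}(\F_q^\beta)$, the agreement structure forces $\sum_i(\dim U-\dim(V_i\cap U))\ge (1-R-\varepsilon/4)n\cdot(\dim U-\dim W)$ along a suitable subspace $W$, using $E\le(1-R-\varepsilon)n$ and $E^+<d_0$. The choice $\beta-1=\lceil 4(t+1)(1-R)/\varepsilon\rceil$ makes the loss from the $t+1$-dimensional ``curve'' directions at most $\varepsilon/4$. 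We take this bound from their RS section. There it is stated in terms of $\Phi$, which is code-independent.

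\emph{Step 3: apply Theorem~\ref{thm:threshold-AG-codes}(1)} with $b=\beta$ and gap $\varepsilon'=c\varepsilon$. We need three conditions:
\begin{itemize}
\item $q>\binom\beta2$, which is assumed;
\item $\varepsilon' n\ge 2\beta(\beta+1)$, which holds for large $n$;
\item the failure bound $|\mathcal F|(2^\beta-1)\bigl((4\beta)^{4\beta}m/(\varepsilon'(N-n+1))\bigr)^{\varepsilon' n/(2\beta)}$ being $\exp(-\Omega(n))$.
\end{itemize}
For the third, we need the base to be at most $|\mathcal F|^{-2\beta/(\varepsilon' n)}e^{-1}$. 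Since $\beta=O(1/\varepsilon)$ and $\log|\mathcal F|/n=O(\beta)$, this requires $(N-n+1)/m\ge 2^{O(\beta^2\log\beta/\varepsilon)}=2^{O(1/\varepsilon^3)}$, up to logs absorbed in $C$. This matches the hypothesis $N-n+1\ge m2^{C/\varepsilon^3}$. Correlated agreement then holds outside this event, and the proximity gap follows immediately: $\mathbf c_0+\alpha\mathbf c_1+\dots+\alpha^t\mathbf c_t\in\mathcal C$ agrees with $\Gamma(\alpha)$ off the joint support, so $d(\Gamma(\alpha),\mathcal C)\le E^+$ for every $\alpha$.

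\emph{Main obstacle.} The delicate point is the size of $\mathcal F$. The profiles depend on $\alpha_j\in\F_q$ and on the curve coefficients, so $|\mathcal F|$ carries a factor $q^{O(\beta)}$ and possibly $q^{O(n)}$ if the curve itself enters the profile. In the RLC threshold theorem this is harmless only with the syndrome trick. I would first check that the reduction of~\cite{yuan2026proximity} makes the profile depend only on $(\alpha_j)$ and supports, not on $\mathbf u_j$: the curve is eliminated by taking differences/Vandermonde combinations of $t+2$ columns. This gives $|\mathcal F|\le q^{\beta}\binom{n}{\le E^+}^{\beta}$, and $\log q=o(n)$ makes the $q^\beta$ factor negligible.
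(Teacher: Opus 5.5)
Your Steps 1--2 do not work as written, so there is a genuine gap. Take the columns to be the close codewords $\mathbf c(\alpha_j)$, and require row $i$ only to agree with \emph{some} polynomial of degree at most $t$ at the non-error positions $j\notin J_i$; this is what makes $V_i$ a subspace independent of the curve. Then every $V_i$ contains $U_0=\{(p(\alpha_1),\dots,p(\alpha_\beta)):\deg p\le t\}$. Since the $\alpha_j$ are distinct and $t\ge1$, we have $U_0\in\mathcal L_{\mathrm{Dist}}(\F_q^\beta)$. Moreover $V_i\cap W=W$ for every $W\subseteq U_0$, so every term in the definition of $R_{\V,U_0}$ is $1-1=0$. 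Hence $R_{\V,U_0}=0$ and $R_{\mathcal P}=0$. Indeed, any code with a nonzero codeword $\mathbf c_1$ contains such a profile, via the pairwise distinct columns $\mathbf c_0+\alpha_j\mathbf c_1$. So the claim $R_{\mathcal P}\ge R+c\varepsilon$ in Step 2 is false, and Theorem~\ref{thm:threshold-AG-codes}(1) gives nothing. If you instead impose the literal constraint $x_j=\Gamma(\alpha_j)_i$, then $V_i$ is affine and depends on $\mathbf u_0,\dots,\mathbf u_t$. The assertion that the columns are pairwise distinct ``since otherwise correlated agreement would hold'' is also unjustified. For $t=1$, $\mathbf c(\alpha_1)=\mathbf c(\alpha_2)$ only yields $\mathbf c_0=\mathbf c(\alpha_1)$, $\mathbf c_1=0$ with joint support inside $\mathrm{supp}(\mathbf e(\alpha_1))\cup\mathrm{supp}(\mathbf e(\alpha_2))$, which can exceed $E^+$.

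Eliminating the curve, as your last paragraph suggests, fixes the threshold but changes the event. Let $\Lambda\in\F_q^{\beta\times(\beta-t-1)}$ span the kernel of the Vandermonde matrix $(\alpha_j^s)_{s,j}$. Then $Y=[\mathbf e(\alpha_1)\cdots\mathbf e(\alpha_\beta)]\Lambda=-[\mathbf c(\alpha_1)\cdots\mathbf c(\alpha_\beta)]\Lambda$ has columns in $\cC$ and rows in $V_i=\{x\Lambda:\mathrm{supp}(x)\subseteq J_i\}$. This profile depends only on the $\alpha_j$ and the supports. But nothing makes the columns of $Y$ pairwise distinct, or even nonzero. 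So failure of correlated agreement only yields ``$\cC$ contains a \emph{nonzero} matrix satisfying $\V$''. That is not an LCL containment event and is not controlled by Theorem~\ref{thm:threshold-AG-codes}(1).

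The bound the paper substitutes for the random RS profile bound is Theorem~\ref{thm:C_contain_Pr}. Its hypothesis is $\Phi(\V,U,R)\le-\varepsilon' n\dim U$ for \emph{every} nonzero $U$, not a lower bound on $R_{\mathcal P}$ over $\mathcal L_{\mathrm{Dist}}$. The choice of $\beta$ handles $U=\F_q^{\beta-t-1}$: from $\sum_i\dim V_i\le\beta E$ you get $\Phi\le(t+1)(1-R)n-\beta\varepsilon n\le-\frac{\varepsilon}{4}n(\beta-t-1)$. The smaller subspaces still have to be checked.

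You must also treat $Y=0$ separately. In that case the errors lie on a degree-$t$ curve $\sum_s\alpha^s\mathbf w_s$ with $\mathbf u_s-\mathbf w_s\in\cC$, and you still need joint support at most $E^+$. That is where the selection of the $\alpha_j$ and the bound $\tau_t$ come in.

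Your Step 3 accounting is fine once it is applied to Theorem~\ref{thm:C_contain_Pr} with locality $\beta-t-1\le\beta$ and gap $\varepsilon/4$: the factor $q^{\beta}$ is absorbed by $\log q=o(n)$, and $(N-n+1)/m\ge 2^{O(\beta^2/\varepsilon)}=2^{O(1/\varepsilon^3)}$.
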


When $E^+=E$, the soundness is $O(n/q)$. If $E^+-E\ge \eta n$ for some fixed $\eta>0$, then the soundness improves to $O(1/q)$. The hidden constants depend on $R,\varepsilon,t$, and in the second case also on $\eta$.

\bibliographystyle{alpha}
\bibliography{refs}


\end{document}